\documentclass{article}

\usepackage{hyperref}

\usepackage[accepted]{icml2026}

\usepackage{amsmath}
\usepackage{amssymb}
\usepackage{mathtools}
\usepackage{amsthm}

\makeatletter

\theoremstyle{plain}
\newtheorem{theorem}{Theorem}[section]
\newtheorem{proposition}{Proposition}[section]
\newtheorem{lemma}{Lemma}[section]

\theoremstyle{definition}
\newtheorem{definition}{Definition}[section]

\theoremstyle{remark}

\usepackage{amssymb}
\usepackage[T1]{fontenc}
\usepackage{graphics, graphicx}

\usepackage{subcaption}
\usepackage{longtable, multirow, multicol}
\usepackage{colortbl, color, xcolor}
\usepackage{fancyvrb}
\usepackage{enumerate, tasks}

\usepackage{array, mdframed}
\usepackage{seqsplit}
\usepackage{tabularx}
\usepackage{tabu}
\usepackage{pgf}
\usepackage{pgfplots}
\pgfplotsset{compat=newest}
\usepgfplotslibrary{fillbetween}
\usepackage{todonotes}

\usepackage{orcidlink}

\usepackage{extarrows}
\usepackage{authblk} %
\usepackage{mathtools}
\usepackage{amsfonts}
\usepackage{braket}
\usepackage{dsfont}
\usepackage{bbm}
\usepackage{cleveref}
\crefname{proposition}{proposition}{propositions}
\Crefname{proposition}{Proposition}{Propositions}

\usepackage{tikz-cd}
\usepackage{tikz}
\usepackage{tikz-3dplot}

\usepackage{booktabs}

\usepackage{enumitem}

\newcommand{\algwhile}[2]{\STATE \textbf{while} #1 \textbf{do} #2}
\newcommand{\algifelse}[3]{\STATE \textbf{if} #1 \textbf{then} #2 \textbf{else} #3}
\newcommand{\algif}[2]{\STATE \textbf{if} #1 \textbf{then} #2}
\newcommand{\algassign}[2]{\STATE $#1 \leftarrow #2$}

\newcounter{question}
\newcommand{\questionlabel}[1]{\refstepcounter{question}\label{#1}(Q\thequestion)}
\newcommand{\qref}[1]{\hyperref[#1]{(Q\ref*{#1})}}

\DeclareMathOperator*{\argmin}{arg\,min} %

\newcommand{\Real}{\mathbb{R}} %
\newcommand{\nonnegReal}{\Real_{\geq 0}} %
\newcommand{\posReal}{\Real_{> 0}} %
\newcommand{\Integer}{\mathcal{N}} %
\newcommand{\positiveInteger}{\Integer_{> 0}}

\newcommand{\DB}{G} %

\newcommand{\twoNorm}[1]{{\left\Vert #1 \right\Vert}_2}

\newcommand{\spd}[1]{\mathcal{S}_{++}^{#1}} %

\newcommand{\rX}{X}
\newcommand{\rY}{Y}
\newcommand{\rZ}{Z}

\newcommand{\multiNormal}[2]{\mathcal{N}\left(#1, #2\right)}

\newcommand{\generalGamma}[3]{\mathsf{GGamma}\!\left(#1, #2, #3\right)}

\newcommand{\sgg}[3]{\mathsf{SGG}\!\left(#1, #2, #3\right)} %

\newcommand{\Oracle}{\mathsf{Oracle}}

\newcommand{\sphereD}[1]{\mathsf{Unif}(\cS^{#1-1})}

\newcommand{\defin}{ \stackrel{\rm def}{=} }

\newcommand{\pr}[1]{ \Pr\left[ #1 \right] }

\newcommand{\cM}{\mathcal{M}}
\newcommand{\cS}{\mathcal{S}}

\newcommand{\cX}{\mathcal{X}}

\newcommand{\Ex}[1]{\mathbf{Ex} \left[ #1 \right ]}

\newcommand{\abs}[1]{\left| {#1} \right|}

\newcommand{\Mech}{\cM} %

\newcommand{\rExp}[1]{\exp\left(#1\right)}

\newcommand{\HS}[2]{\mathsf{H}_{\varepsilon}\left(#1,#2\right)}

\newcommand{\innerProd}[2]{\langle #1, #2\rangle}

\newcommand{\errf}{\mathsf{err}}

\icmltitlerunning{Optimality of the High-Dimensional Gaussian Mechanism and Improved Low-Dimensional Mechanisms for DP}

\begin{document}

\twocolumn[
  \icmltitle{Asymptotic Optimality of the High-Dimensional Gaussian Mechanism and Improved Low-Dimensional Mechanisms for Differential Privacy}

  \icmlsetsymbol{equal}{*}

  \begin{icmlauthorlist}
    \icmlauthor{Yu Wei}{gt,jpintern}
    \icmlauthor{Alexander Bienstock}{jp}
    \icmlauthor{Antigoni Polychroniadou}{jp}
  \end{icmlauthorlist}

  \icmlaffiliation{jp}{JPMorgan AI Research \& AlgoCRYPT CoE, New York, New York, USA}
  \icmlaffiliation{gt}{Georgia Institute of Technology, Atlanta, Georgia, USA}
  \icmlaffiliation{jpintern}{Work partially done while an intern at JPMorgan}

  \icmlcorrespondingauthor{Alexander Bienstock}{afb383@nyu.edu}
  \icmlcorrespondingauthor{Yu Wei}{ywei368@gatech.edu}

  \icmlkeywords{Machine Learning, ICML}

  \vskip 0.3in
]

\printAffiliationsAndNotice{}  %

\begin{abstract}
  The additive noise mechanism is a foundational tool for differential privacy (DP) of $T$-dimensional real-valued vector queries. The Gaussian mechanism, utilizing Gaussian noise, is the mostly widely used such mechanism, due to its simplicity and strong privacy guarantees. In this work, we provide justification for this choice, showing that as the dimension $T\to\infty$, no additive-noise mechanism can asymptotically improve on the Gaussian mechanism's privacy--utility tradeoff for the strong privacy settings typically used.We also develop a new family of \emph{Spherical Generalized Gamma} DP mechanisms, which contains both the Gaussian mechanism and the recently studied $\ell_2$ mechanism (Joseph \emph{et al.}, ICML 2025). We identify members of this family that outperform both the Gaussian and $\ell_2$ mechanisms in certain low-dimensional settings, and show tight composition of all mechanisms in this family, answering an open question of Joseph \emph{et al.}~regarding the $\ell_2$ mechanism.
\end{abstract}
\section{Introduction}
Differential Privacy (DP)~\cite{TCC:DMNS06} is a powerful tool that can be utilized to extract aggregate insights from data, while retaining individuals' privacy.
Differential privacy has myriad privacy-preserving applications, including training deep learning models~\cite{CCS:ACGMMT16}, generating synthetic data~\cite{dpsynthetic,USENIX:ZWLHBH21,cvprdpsynth,yoon2018pategan,pmlr-v97-mckenna19a}, publishing aggregate statistics~\cite{TCC:DMNS06}, and more.

Many applications of DP, including the above, involve protecting real-valued $T$-dimensional vector queries $Q(D)\in\Real^T$ over private datasets $D$.
In this setting, we want to obfuscate any addition or removal of a single data record from $D$, which may perturb $Q$ by any $\mu\in\Real^T$.
Typically, the size $||\mu||$ of $\mu$ is bounded by some scalar $s$, the \emph{sensitivity} of the query.
The standard way to achieve DP in this setting is by using the seminal additive noise mechanism~\cite{TCC:DMNS06}.
This mechanism simply adds $T$-dimensional noise $X$ sampled from some distribution $\mathcal{D}$ to the query: $\widehat{Q} = Q + X$.
The typical distribution used is the multivariate Gaussian distribution~\cite{EC:DKMMN06,DworkRoth2014}, i.e., $X\sim \mathcal{N}(0, \sigma^2 I_T)$, where $\sigma\propto s$.
This \emph{Gaussian mechanism} has several appealing properties, including (i) privacy with respect to $\ell_2$-sensitivity, as opposed to e.g., $\ell_1$-sensitivity, which is important to retain utility of high-dimensional queries (ii) spherical symmetry, which means that privacy does not depend on the direction of the query sensitivity, (iii) tight, closed-form privacy analysis~\cite{BalleWang2018}, and (iv) tight composition guarantees~\cite{gaussiandp,STOC:BDRS18}.

\begin{figure*}[t]
  \begin{subfigure}{0.48\textwidth}
  \centering
    \tdplotsetmaincoords{70}{110}

\begin{tikzpicture}[scale=2.2,tdplot_main_coords]
    \def\mytheta{55} %
    \def\myphi{-45}   %
    
    \pgfmathsetmacro{\px}{sin(\mytheta)*cos(\myphi)}
    \pgfmathsetmacro{\py}{sin(\mytheta)*sin(\myphi)}
    \pgfmathsetmacro{\pz}{cos(\mytheta)}
    
    \fill[blue!5!white,opacity=0.15] (0,0) circle (1);
    
    \foreach \i in {0,30,...,150} {
        \tdplotdrawarc[gray,thin,opacity=0.3]{(0,0,0)}{1}{\i}{\i+180}{}{};
    }
    \foreach \i in {-60,-30,0,30,60} {
        \tdplotsetthetaplanecoords{\i}
        \tdplotdrawarc[tdplot_rotated_coords,gray,thin,opacity=0.3]{(0,0,0)}{1}{0}{360}{}{};
    }
    
    \draw[black,thick,line width=1.2pt] (0,0) circle (1);
    
    \draw[->,gray,thin] (0,0,0) -- (1.3,0,0) node[anchor=north east]{$x$};
    \draw[->,gray,thin] (0,0,0) -- (0,1.3,0) node[anchor=north west]{$y$};
    \draw[->,gray,thin] (0,0,0) -- (0,0,1.3) node[anchor=south]{$z$};
    
    \draw[->,thick,red,line width=1.5pt] (0,0,0) -- (\px,\py,\pz);
    
    \node[circle,fill=red,inner sep=2pt] at (\px,\py,\pz) {};
    
    \node[red,anchor=south east] at (\px,\py,\pz) {$\mathbf{U}$};
    
    \draw[dashed,blue,opacity=0.5] (0,0,0) -- (\px,\py,0);
    \draw[dashed,blue,opacity=0.5] (\px,\py,0) -- (\px,\py,\pz);
    
    \node[circle,fill=blue,inner sep=1pt,opacity=0.5] at (\px,\py,0) {};
\end{tikzpicture}
  \end{subfigure}
  \hfill
  \rule{1pt}{6.5cm}
  \hfill
  \begin{subfigure}{0.48\textwidth}
  \centering
    \includegraphics[width=\textwidth]{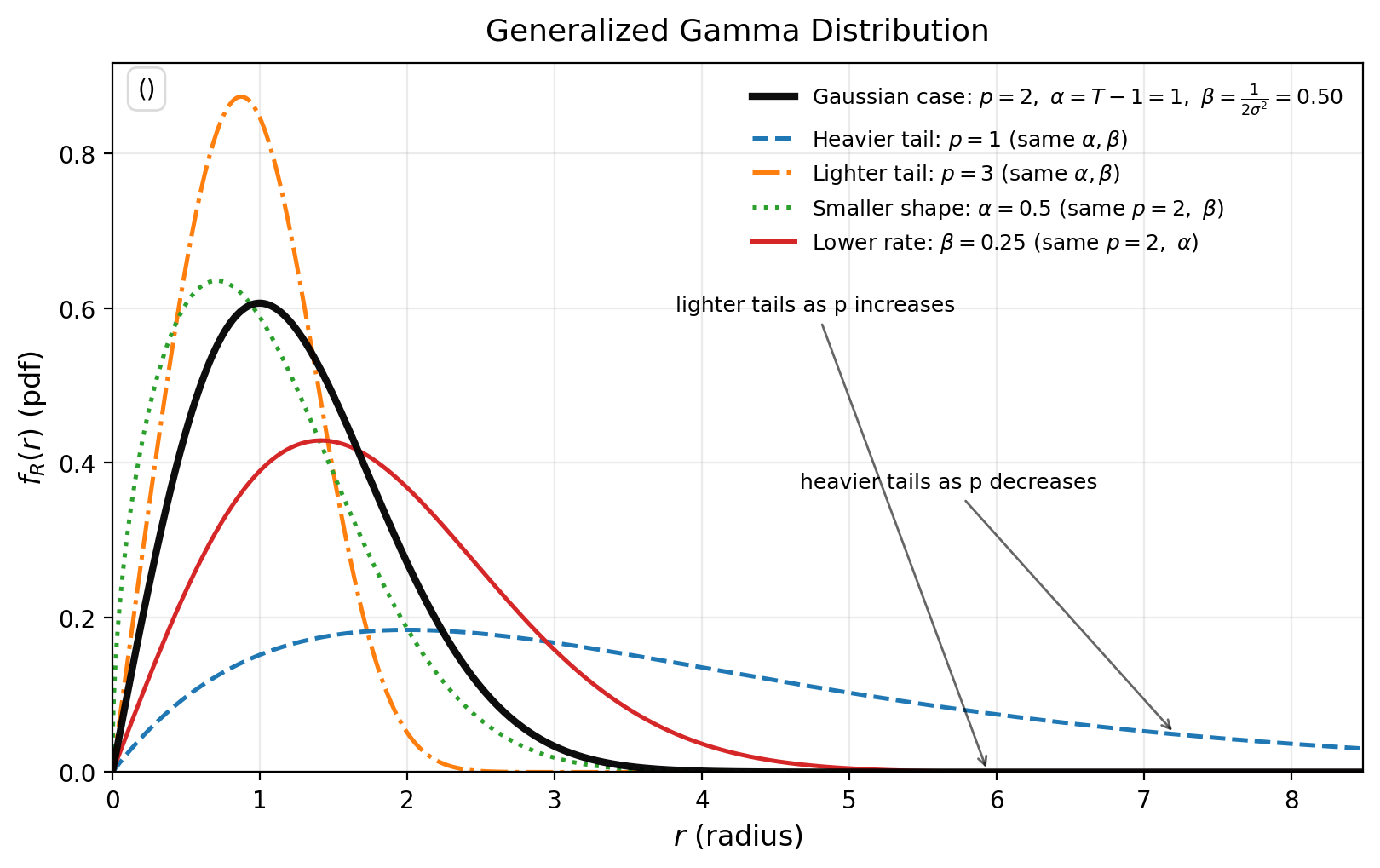}
  \end{subfigure}
    \caption{
      A Spherically Symmetric random variable $\rX = RU$ has independent (scalar) radial component $R$ and directional component $U$.
      \textbf{Left}: The directional component $\mathbf{U}$ is uniformly distributed on the unit sphere.
      \textbf{Right}: The density functions of one family of radial components $R$ that we study: the \emph{Spherical Generalized Gamma} distribution with varying shape parameter $\alpha$, rate $\beta$, and tail control $p$.
    }
    \label{fig:spherical_gen_gamma}
\end{figure*}

Even though the Gaussian mechanism has these desirable properties and has found wide use in practice~\cite{prochlo,appledp,gboarddp}, one may wonder whether it is the \emph{best} mechanism for privately answering real-valued vector queries.
Indeed, prior work has explored alternative mechanisms for such queries:
It was recently shown that the so-called $\ell_2$-mechanism~\cite{elltwo} (essentially, a high-dimensional version of the Laplace mechanism~\cite{TCC:DMNS06} for $\ell_2$ sensitivity) outperforms the Gaussian mechanism in certain low-dimensional settings, while appearing to obtain similar performance in higher dimensions.

However, these two questions remain to be answered:
\begin{enumerate}
  \item[\questionlabel{q:highdim}] Is the Gaussian mechanism \emph{optimal} in high dimensions?
  \item[\questionlabel{q:lowdim}] Are there mechanisms that outperform \emph{both} the Gaussian and $\ell_2$ mechanisms in low dimensions?
\end{enumerate}

\textbf{Our Contributions.}
In this paper, %
we answer both questions affirmatively, through the lens of $(\varepsilon,\delta)$-DP~\cite{EC:DKMMN06}. %
Along the way, we provide additional insights, including answering an open question from~\cite{elltwo}.

To answer~\qref{q:highdim}, we first show that one only need focus on \emph{spherical symmetric} noise distributions (see Figure~\ref{fig:spherical_gen_gamma}); they perform at least as well as all other additive noise distributions.
Indeed, spherical symmetric noise distributions satisfy rotational invariance and are thus a natural choice for the worst-case nature of DP, in which a single data record could perturb the query in any direction;
both the Gaussian and $\ell_2$ mechanisms use spherical symmetric distributions. %
A bit more formally, spherical symmetric random variables are of the form $\rX = RU$, where $R\in\Real_{\geq 0}$ is a radial random variable and $U\sim\sphereD{T}$ is an independent random $T$-dimensional unit vector.

Now, consider the Gaussian privacy function $g_{\varepsilon, s}(\sigma^2)$, which for fixed privacy level $\varepsilon>0$ and $\ell_2$ sensitivity $s>0$, outputs the optimal $\delta_G$ of the Gaussian mechanism with per-coordinate variance $\sigma^2$~\cite{BalleWang2018,LuMWZ23}. We show that for any fixed Mean Squared Error (MSE) constraint $e=\Ex{R^2}$ and any spherical noise $X=RU$, as $T\to\infty$, the optimal $\delta$ that $X$ can achieve for $(\varepsilon,s)$ is lower bounded, up to an $o(1)$ term, by $\Ex{g(R^2/T)}$, the average over $R$ of the Gaussian privacy function evaluated at the normalized radial energy $R^2/T$. This value $R^2/T$ for the Gaussian mechanism is exactly equal to the per-coordinate variance $\sigma^2=e/T$ in the high-dimensional limit. Next, for large enough $e$, we show that $g$ satisfies certain properties that allow for a Jensen-like inequality
\begin{align*}
  \Ex{g(R^2/T)} \geq g(\Ex{R^2/T}) = g(e/T) = \delta_G.
\end{align*}
Thus, the Gaussian mechanism cannot be asymptotically improved upon in the sense of achieving a strictly smaller limiting worst-case optimal delta at the same MSE. We can thus deduce the following (since larger error yields stronger privacy):

\begin{theorem}[Informal version of Theorem~\ref{thm:additive-no-better-than-gaussian}]
  \label{thm:informal}
  As the dimension of query $Q\in\Real^T$, $T\to\infty$, using Gaussian noise $X$ in the additive noise mechanism $\widehat{Q}=Q+X$ yields lowest error among all noise distributions, for all strong enough privacy requirements.
\end{theorem}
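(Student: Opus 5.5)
The argument has three steps: reduce to spherically symmetric noise, lower-bound the optimal $\delta$ of spherical noise in high dimension by an average of Gaussian privacy functions, and then apply a Jensen-type inequality to that average.

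\textbf{Step 1: reduction to spherical noise.} Take an arbitrary noise distribution $\mathcal{D}$ on $\Real^T$ with $\Ex{\twoNorm{X}^2}=e$ that is $(\varepsilon,\delta)$-DP for all shifts $\twoNorm{\mu}\le s$. Average it over the Haar measure on $O(T)$, i.e.\ form $\tilde X = \Theta X$ with $\Theta$ a Haar-random rotation. This preserves the MSE, since $\twoNorm{\Theta X}=\twoNorm{X}$. It also preserves DP: the privacy loss between $\tilde X$ and $\tilde X+\mu$ is a mixture over $\Theta$ of the losses for $X$ versus $X+\Theta^{-1}\mu$, each with $\twoNorm{\Theta^{-1}\mu}\le s$, and the hockey-stick divergence $\HS{\cdot}{\cdot}$ is jointly convex. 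So $\tilde X$ is spherical, $\tilde X=RU$, and it is no worse. Since $\tilde X$ is rotation-invariant, only $\twoNorm{\mu}$ matters, and by monotonicity the worst case is $\twoNorm{\mu}=s$, say $\mu=s e_1$.

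\textbf{Step 2: high-dimensional limit.} Condition on $R=r$ and write $X=rU$. In high dimension, $U$ is approximately $Z/\sqrt{T}$ with $Z\sim\multiNormal{0}{I_T}$. More precisely, the first coordinate $rU_1$ is close in total variation to $\multiNormal{0}{r^2/T}$ (Poincaré/Diaconis--Freedman), with error $O(1/T)$ in the relevant regime. Given $U_1$, the remaining coordinates are uniform on a sphere whose radius depends on $U_1$.

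The delicate point is that the likelihood ratio for the pair $(X,X+se_1)$ is not a function of the $e_1$ coordinate alone once $R$ is mixed. I would handle this by data processing, which points in the useful direction. Give the adversary extra side information, namely the radius $\twoNorm{X_{\perp}}$ of the orthogonal part, which is a post-processing-reversal of revealing $R$. Since revealing $R$ can only increase distinguishability, this seems at first to give an upper bound on $\delta$, not a lower bound.

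For a lower bound I instead construct an explicit test. Apply the hockey-stick functional to the observed output projected onto $e_1$, which is a post-processing and hence a legitimate lower bound. Its law is a mixture over $R$ of approximately $\multiNormal{0}{R^2/T}$ and its $s$-shift.

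The convexity of $\HS{\cdot}{\cdot}$ goes the wrong way for mixtures. So I would use the event set $\{x_1>c\}$ together with a radius-dependent test on the full output: the norm $\twoNorm{x}$ reveals $R$ up to $O(s)$, and thus reveals $R$ asymptotically, because $\twoNorm{X+\mu}^2-R^2 = 2R U_1 s + s^2 = O(s\cdot R/\sqrt{T})$. That error is negligible relative to the $\sqrt{T}$-scale fluctuations of $R$ only after an appropriate normalization. I would verify that this leakage gives $\delta \ge \Ex{g_{\varepsilon,s}(R^2/T)}-o(1)$ by choosing the optimal Gaussian rejection set separately for each approximate radius shell. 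I expect this to be the main obstacle. It requires uniform control of the sphere-to-Gaussian approximation, plus a truncation argument for shells where $R^2/T$ is tiny or huge; those contribute $o(1)$ by Markov applied to $\Ex{R^2}=e$.

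\textbf{Step 3: Jensen step.} Show that $\sigma^2\mapsto g_{\varepsilon,s}(\sigma^2)$, using the Balle--Wang closed form $\Phi\left(\frac{s}{2\sigma}-\frac{\varepsilon\sigma}{s}\right)-e^{\varepsilon}\Phi\left(-\frac{s}{2\sigma}-\frac{\varepsilon\sigma}{s}\right)$, lies above its tangent line at $\sigma^2=e/T$ on all of $[0,\infty)$ once $e/T$ is large enough, which is the strong-privacy regime. The function is decreasing to $0$ and is convex for large $\sigma^2$. The inflection point of the first term is located explicitly, and the supporting-line inequality must also be checked near $0$, where $g\le 1$ is bounded; this holds once the tangent line intercept is below the value of $g$ on the nonconvex region. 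This gives $\Ex{g(R^2/T)}\ge g(e/T)=\delta_G$.

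Combining the three steps, any noise with the same MSE has limiting $\delta\ge\delta_G$. Since $\delta$ is monotone in the error, achieving a given $\delta$ requires error at least that of the Gaussian mechanism.
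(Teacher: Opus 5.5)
Your plan has the same architecture as the paper's proof. First, Haar symmetrization with joint convexity of the hockey-stick divergence (\Cref{lem:haar-symmetrize}). Second, a radius-adaptive Gaussian likelihood-ratio test giving $\delta\ge\Ex{g(R_T^2/T)}-o(1)$ (\Cref{lem:sph-to-g}). There the test is the single set $\cS_T=\{x:\innerProd{e_1}{x}\le \frac{s}{2}-\frac{\varepsilon\twoNorm{x}^2}{sT}\}$: the optimal Gaussian threshold with $u$ replaced by the plug-in value $\twoNorm{x}^2/T$, with no shell discretization. Third, a tangent-line Jensen step for small $\delta$ (\Cref{prop: tangent test,lem:gauss-exists-delta}). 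However, one step of your Step 2 fails as written.

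\textbf{The lower-tail truncation.} Markov applied to $\Ex{R_T^2}=Tu_0$ controls $\pr{R_T^2/T>M}$. It says nothing about $\pr{R_T^2/T<\eta}$, and that region cannot be discarded. Since $g(u)\to 1$ as $u\to 0$ and the tangent line at $u_0$ has positive intercept $g(u_0)-u_0g'(u_0)$, the truncated function $g\,\mathbf{1}[u\ge\eta]$ is not supported by that line. For example, let $R_T^2/T$ be close to $0$ or to $2u_0$ with probability $1/2$ each. Your truncated bound is then about $g(2u_0)/2<g(u_0)=\delta$, while the untruncated quantity is at least $\delta$ by the tangent-line argument. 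The small-radius mass is exactly what penalizes non-Gaussian radial spread, so it must stay in the test.

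\textbf{No truncation is needed.} Let $a(u)=\frac{s}{2\sqrt u}-\frac{\varepsilon\sqrt u}{s}$ and $b(u)=-\frac{s}{2\sqrt u}-\frac{\varepsilon\sqrt u}{s}$, so that $g=\Phi\circ a-e^{\varepsilon}\,\Phi\circ b$. Condition on $R_T=r$. Membership of both $X_T$ and $Y_T$ in $\cS_T$ is then an event about $W_T=\sqrt{T}\innerProd{e_1}{U_T}$ alone. For $X_T$ it is $W_T\le a(r^2/T)$. For $Y_T$, the constraint is linear in $W_T$, and solving it exactly gives $W_T\le b(r^2/T)+\Delta_T$ with $0\le\Delta_T\le\frac{2\varepsilon}{T}\abs{b(r^2/T)}$. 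The normal approximation of $W_T$ is uniform over thresholds (P\'olya, or your own scale-free total-variation bound). Hence both conditional probabilities are within $o(1)$ of $\Phi(a(r^2/T))$ and $\Phi(b(r^2/T))$ uniformly in $r>0$, including very small and very large radii, and you can average over $R_T$ with no moment assumption.

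This also fixes your heuristic. The $O(sR/\sqrt{T})$ error in $\twoNorm{y}^2$ should not be compared with the fluctuations of $R$, which is arbitrary. It should be compared with the scale $r/\sqrt{T}$ of $r\innerProd{e_1}{U_T}$, relative to which it is only an $O(1/T)$ perturbation of the threshold.

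Finally, monotonicity of the optimal delta in $\twoNorm{\mu}$ is not available for general spherical noise; the paper proves it only for SGG with $\alpha\le T-1$. You do not need it, since a lower bound only requires the single shift $\mu=se_1$.
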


This result is of particular relevance to training deep learning models with DP~\cite{CCS:ACGMMT16}, in which $T$ is the number of model parameters, which can be quite large.

To answer~\qref{q:lowdim}, %
we consider for the first time a broad and expressive subfamily of spherical symmetric distributions that contains those of both the Gaussian and $\ell_2$ mechanisms, amongst many others: the \emph{Spherical Generalized Gamma Distributions} (see right of Figure~\ref{fig:spherical_gen_gamma}).
We begin by demonstrating how to numerically evaluate the DP guarantees of these mechanisms.
Indeed, by leveraging the radial decomposition perspective --- where the spherical noise is separated into its magnitude and directional components --- we reduce the privacy analysis to a one-dimensional integration over a smooth, absolutely continuous radial random variable. This structure allows the privacy loss to be evaluated efficiently and with arbitrary precision using standard numerical integration techniques.
We furthermore bound the error that can arise from such numerical evaluation to obtain formal DP guarantees.

We leverage this ability in order to find settings and mechanisms for those settings that have less error than \emph{both} the Gaussian and $\ell_2$ mechanisms.
The spherical generalized gamma distribution is characterized by three parameters: the shape parameters $\alpha$ and $p$, and the scale parameter $\beta$.
We provide an optimization algorithm that finds the parameters which minimize the error of the corresponding additive noise mechanism, for a particular privacy level. %
Our evaluation shows that in low dimensions, we are able to find $\alpha,p,\beta$ that result in a mechanism with error up to 15\% less than \emph{both} the Gaussian and $\ell_2$ mechanisms.

As a bonus, we study tight composition of all Spherical Generalized Gamma Mechanisms using differential privacy accounting techniques.
This demonstrates the compatibility of these mechanisms with existing DP frameworks and answers an open question regarding tight composition of the $\ell_2$ mechanism~\cite{elltwo}.\footnote{As per the review discussion: \url{https://openreview.net/forum?id=ypeehAYK7W}.}

\textbf{Related Work.}
Previous works have studied the optimality of the Gaussian mechanism:~\cite{dpclt} show that a restricted class of additive noise mechanisms all converge to the Gaussian mechanism as $T\to\infty$ (in terms of Gaussian DP~\cite{gaussiandp}), and that across all mechanisms in this class, the Gaussian mechanism performs best for \emph{all} dimensions $T$.
However, this class of additive noise mechanisms is quite restrictive, since we know, for example, that the $\ell_2$ mechanism outperforms the Gaussian mechanism for low $T$.
Moreover, Gaussian DP is tailored towards the Gaussian mechanism and is a ``{relaxation}''~\cite{gaussiandp} of the classical approximate DP that we study.

Works that study the \emph{non}-optimality of the Gaussian mechanism include~\cite{GengViswanth}, who only study \emph{integer} valued (vector) queries in terms of $\ell_1$ sensitivity, and~\cite{pmlr-v108-geng20a}, who only study real valued \emph{scalar} queries.
The $\ell_2$ mechanism extends~\cite{STOC:HarTal10}, which only studies pure DP, to approximate DP.
Buliding off the family of Generalized Gaussian Mechanisms from~\cite{FangGenGauss},~\cite{rinberg2025laplacegaussianexploringgeneralized} provide \emph{empirical} evidence that the Gaussian is optimal within this family.
However, we are able to show notable improvements of non-Gaussian mechanisms in our Spherical Generalized Gamma Mechanism over the Gaussian mechanism.
Moreover, Generalized Gaussian Mechanisms, specified by only \emph{two} parameters, are a proper subset of our Spherical Generalized Gamma Mechanisms, specified by \emph{three} parameters.

The Rank-1 Singular Multivariate Gaussian (R1SMG) mechanism introduced in~\cite{USENIX:JiLi24}, is a special case within our broader family of spherical generalized gamma mechanisms. 
The authors claim that RS1MG has error an order of magnitude lower than the Gaussian mechanism; however, we discover a bug in their proof (see Appendix~\ref{sec:jilibug}).
Indeed, our numerical evaluation reveals that the privacy guarantees of R1SMG are substantially weaker than those of the standard multivariate Gaussian mechanism. %

The work of~\cite{Alghmadi} studies the $(\varepsilon,\delta)$-DP composition of real valued vector queries.
They show that as the number of sequential compositions, $k$, trends towards infinity, additive, spherical symetric mechanisms are optimal for $\ell_2$ sensitivity.
They also provide an optimization algorithm to find the best such mechanism, but \emph{only} in this asymptotic sense as $k\to\infty$.
They provide some empirical evidence for small $k$ of their mechanism's superiority over the Gaussian mechanism for $T=10$, but only for very small $\delta=10^{-8}$ (note: it is already known that the Gaussian mechanism cannot perform well as $\delta\to0$, and cannot achieve $\delta=0$.).
We on the other hand study the standalone (no composition) setting and show superior mechanisms to the Gaussian for more realistic values of  $\delta$ like $\delta=10^{-3}$.
\section{Preliminaries}
\subsection{Differential Privacy}
\label{subsec:dp-prelims}
We consider a dataset space $\cX^*$ equipped with a neighboring relation $\simeq$ (add/remove of one record). A randomized mechanism $\Mech:\cX^* \to \Real^T$ is \emph{$(\varepsilon,\delta)$-differentially private}~\cite{DworkRoth2014} if for all neighboring $\DB \simeq \DB'$ and all set $\cS \subseteq \Real^T$,
\begin{align}
    \label{equ:standard DP def}
    &\pr{\Mech(\DB)\in \cS} \leq{} e^{\varepsilon}\pr{\Mech(\DB')\in \cS} + \delta
\end{align}

\paragraph{Hockey-stick divergence and optimal $\delta$.}
Let $(t)_+ \defin \max\{t,0\}$. 
For random variables $\rX,\rY$ taking values on the same support $\Real^T$ and $\varepsilon \geq 0$, define the \emph{hockey-stick divergence} at level $\varepsilon$~\cite{BalleBartheGaboardi2018}) (a.k.a., $\alpha$-divergence with $\alpha=e^\varepsilon$) by
\begin{align*}
    \HS{\rX}{\rY} \defin \sup_{S\in \Real^T}\Bigl(\Pr[X\in S]-e^{\varepsilon}\Pr[Y\in S]\Bigr)_+,
\end{align*}
In the context of differential privacy, for a fixed neighboring pair $\DB\simeq \DB'$, we interpret $\HS{\Mech(\DB)}{\Mech(\DB')}$ as the \emph{optimal (minimal) $\delta$} for the (one-sided) DP inequality $\pr{\Mech(\DB)\in \cS} \leq{} e^{\varepsilon}\pr{\Mech(\DB')\in \cS} + \delta$ holding for all measurable $S$. Accordingly, we define the \emph{(one-side) optimal delta}~\cite{LMWZ24} at privacy level $\varepsilon$ 
\begin{align}
    \label{equ:pair wise optimal delta def}
    \delta_{\DB,\DB'}(\varepsilon) \defin{} \HS{\Mech(\DB)}{\Mech(\DB')}
\end{align}
and the \emph{(worst-case) optimal delta} (a.k.a., privacy profile~\cite{BalleBartheGaboardi2018}) of the mechanism at privacy level $\varepsilon$ as
\begin{align}
    \label{equ:optimal delta def}
    \delta_{\Mech}(\varepsilon) \defin{} \sup_{\DB\simeq \DB'} \delta_{\DB,\DB'}(\varepsilon).
\end{align}
By construction, $\Mech$ is $(\varepsilon,\delta)$-DP if and only if $\delta_{\Mech}(\varepsilon)\leq \delta$.

\begin{table*}[h]
\renewcommand{\arraystretch}{1.8}
\centering
\begin{tabular}{|c|c|c|c|}
\hline
\textbf{Radial r.v. $R$}  & \textbf{Radial density $f_R(r)$} & 
{\renewcommand{\arraystretch}{1.0}\begin{tabular}{@{}c@{}}
\textbf{Noise density $f_{\rX}(x)=g(\twoNorm{x}^2)$}  \\[1pt]
(with $y=\twoNorm{x}$)
\end{tabular}}  & \textbf{Shape $f_{\rX}(x)\propto$} \, \\
\hline
\scalebox{0.8}{$\generalGamma{\alpha}{\beta}{p}$} & $\frac{p\beta^{(\alpha+1)/p}}{\Gamma(\frac{\alpha+1}{p})}r^\alpha e^{-\beta r^p}$ & $\frac{\Gamma(T/2)}{2\pi^{T/2}} \frac{p\beta^{(\alpha+1)/p}}{\Gamma(\frac{\alpha+1}{p})} y^{\alpha+1-T}e^{-\beta y^p} $ & $y^{\alpha +1-T} e^{-\beta y^p}$\\

\hline
\scalebox{0.8}{$\generalGamma{T-1}{\frac{1}{2\sigma^2}}{2}$ %
 \quad (\text{Gaussian in }$\Real^T$)}
& $\frac{r^{T-1}}{2^{\frac{T}{2}-1}\Gamma(\frac{T}{2})\sigma^T} e^{-\frac{r^2}{2\sigma^2}}$ & $\frac{1}{(2\pi\sigma^2)^{T/2}}^{-\frac{y^2}{2\sigma^2}}$ & $e^{-\frac{y^2}{2\sigma^2}}$\\
\hline

\scalebox{0.8}{$\generalGamma{T-1}{\frac{1}{\theta}}{1}$ %
 \quad ($\ell_2$-Laplace in $\Real^T$)}
& $\frac{r^{T-1}}{\theta^{T}\Gamma(T)} e^{-\frac{r}{\theta}}$ & $\frac{\Gamma(T/2)}{2\pi^{T/2}\theta^{T}\Gamma(T)}e^{-\frac{y}{\theta}}$ & $e^{-\frac{y}{\theta}}$ \\
\hline

\end{tabular}
\caption{Radial random variables $R$ and the induced spherically symmetric densities $f_{\rX}(x)=g(\twoNorm{x}^2)$ for noise random variable $X = RU$, obtained via \Cref{thm: radial to spherical density transformation}. The last column reports $g(\twoNorm{x}^2)$ up to a normalization constant.}
\label{tab:radial-to-spherical-generators}
\end{table*}

\subsection{Gaussian Mechanism}
\label{subsec:gaussian-prelims}

We consider the $T$-dimensional vector-valued query $q:\cX^*\to\Real^T$ with $\ell_2$ sensitivity
\begin{align*}
    s \defin{} \sup_{\DB \simeq \DB'} \|q(\DB)-q(\DB')\|_2.
\end{align*}
For $u>0$, the  Gaussian mechanism~\cite{TCC:DMNS06, BalleWang2018} with per-coordinate variance $u$ is
\begin{align}
    \label{def: gaussian mech}
    \Mech_{\mathsf{G},u}(\DB) \defin{} q(\DB) + G, \qquad G \sim \multiNormal{0}{u I_T}.
\end{align}
Notice that the mechanism's mean-squared error is 
{\small
    \begin{align}
        \label{def: mse of gaussian}
        \max\limits_{\DB \in \cX^*} \Ex{\twoNorm{\Mech_{\mathsf{G},u}(\DB) - q(\DB)}^2} = \Ex{||G||_2^2} = T u.
    \end{align}
}

Denote by $\Phi(\cdot)$ the standard normal CDF. According to the literature~\cite{BalleW18, LuMWZ23}, the optimal delta of $\Mech_{\mathsf{G},u}$ at privacy level $\varepsilon$ is 
\begin{align}
\label{eq:gaussian privacy}
    &\delta_{\Mech_{\mathsf{G},u}}(\varepsilon) = g(u) \nonumber\\
    \defin{}& \Phi\left(-\frac{\varepsilon\sqrt{u}}{s}+\frac{s}{2\sqrt{u}}\right) - e^{\varepsilon}\Phi\left(-\frac{\varepsilon\sqrt{u}}{s}-\frac{s}{2\sqrt{u}}\right).
\end{align}

In particular, for fixed $(\varepsilon,\delta,s)$, the minimal Gaussian variance achieving $(\varepsilon,\delta)$-DP is
\begin{align}
\label{eq:minimal variance for gaussian privacy}
    u_0(\delta) \defin{} \inf\{u>0 : g(u)\le \delta\}.
\end{align}
We will repeatedly use~\Cref{eq:gaussian privacy} as the baseline privacy--utility tradeoff to compare against other additive-noise mechanisms.

\subsection{Spherical Symmetric Distribution}
\label{sec: prelim: spherical symmetric distribution}

We begin by defining \emph{spherically symmetric} random vectors $X\in\Real^T$, which are the main noise distributions studied in this paper.

\begin{definition}[Spherically Symmetric Distribution~\cite{fang2018symmetric}~\footnote{The first characterization can be found in Thm 2.1 and the second appears in equation 2.7 of the book~\cite{fang2018symmetric}}]
\label{def: spherical distribution}
A random vector $\rX \in \Real^T$ is \emph{spherically symmetric} if there exists a random variable $R \geq 0$ and a random vector $U$ uniformly distributed on the unit sphere $\spd{T}$ such that $\rX$ can be decomposed as:
\begin{align*}
    \rX = R U,
\end{align*}
where $R$ and $U$ are independent. Equivalently, $\rX$ is \emph{spherically symmetric} if its density function $f_\rX(x)$ exists and depends only on the norm of $x$, that is,
\begin{align*}
    f_\rX(x) = g(\twoNorm{x}^2),
\end{align*}
for some deterministic function $g : [0, \infty) \to \Real_{\ge 0}$.
\end{definition}%

The decomposition $\rX=RU$ separates \emph{radial} ($R$) and \emph{direction} ($U$) random variables. \Cref{thm: radial to spherical density transformation} shows how to convert a radial density into the induced spherically symmetric density.
\Cref{tab:radial-to-spherical-generators} lists some spherical random variables and the corresponding radial random variables studied in the paper.

\begin{theorem}[Radial-to-Spherical Density Transformation~\cite{fang2018symmetric}[Thm 2.9]
\label{thm: radial to spherical density transformation}
Let $\rX = RU$ be a spherically symmetric random variable. If the random variable $R$ has a density over $\nonnegReal$ and denote its pdf as $f_R(\cdot)$, then, 
\begin{align*}
    f_\rX(x) = g(\twoNorm{x}^2) = \frac{\Gamma(\frac{T}{2})}{2\pi^{\frac{T}{2}}} \twoNorm{x}^{1-T}f_R\bigl(\twoNorm{x}\bigr).
\end{align*}
\end{theorem}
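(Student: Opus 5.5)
We prove the density formula by computing the law of $X=RU$ directly: first compute $\Pr[X\in A]$ in polar coordinates, then read off the density. Write every $x\neq 0$ as $x=r\theta$ with $r=\twoNorm{x}>0$ and $\theta=x/\twoNorm{x}\in\cS^{T-1}$. The standard polar change of variables in $\Real^T$ gives, for any integrable $h$,
\begin{align*}
\int_{\Real^T} h(x)\,dx=\int_0^\infty\int_{\cS^{T-1}} h(r\theta)\,r^{T-1}\,d\sigma(\theta)\,dr ,
\end{align*}
where $\sigma$ is the surface measure on the unit sphere. Its total mass is $|\cS^{T-1}|=2\pi^{T/2}/\Gamma(T/2)$. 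We will verify this constant by integrating $e^{-\twoNorm{x}^2}$ both in Cartesian coordinates, which gives $\pi^{T/2}$, and in polar coordinates, which gives $|\cS^{T-1}|\cdot\tfrac12\Gamma(T/2)$.

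Next we compute the distribution of $X$. Fix a measurable set $A\subseteq\Real^T$. By independence of $R$ and $U$, and since $U$ has law $\sigma/|\cS^{T-1}|$,
\begin{align*}
\Pr[X\in A]=\int_0^\infty f_R(r)\int_{\cS^{T-1}}\mathbf{1}_A(r\theta)\,\frac{d\sigma(\theta)}{|\cS^{T-1}|}\,dr .
\end{align*}
The integrand is nonnegative, so Tonelli's theorem justifies exchanging the order of integration. We now insert the factor $r^{T-1}\cdot r^{1-T}=1$ (valid for $r>0$; the point $r=0$ has Lebesgue measure zero). The expression then takes the form of the polar formula above applied to
\begin{align*}
h(x)=\mathbf{1}_A(x)\,\frac{\twoNorm{x}^{1-T}f_R(\twoNorm{x})}{|\cS^{T-1}|}.
\end{align*}
Reversing the change of variables gives
\begin{align*}
\Pr[X\in A]=\int_A \frac{\Gamma(T/2)}{2\pi^{T/2}}\,\twoNorm{x}^{1-T}f_R(\twoNorm{x})\,dx
\end{align*}
for every measurable $A$. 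This identifies the density of $X$ as claimed. The density depends on $x$ only through $\twoNorm{x}$, so it can be written as $g(\twoNorm{x}^2)$ with $g(t)=\frac{\Gamma(T/2)}{2\pi^{T/2}}t^{(1-T)/2}f_R(\sqrt t)$.

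The main obstacle is measure-theoretic rather than computational. We must justify the polar decomposition of Lebesgue measure, namely that Lebesgue measure on $\Real^T\setminus\{0\}$ is the pushforward of $r^{T-1}dr\otimes\sigma$ under $(r,\theta)\mapsto r\theta$. We would cite this standard fact, or prove it by checking it on sets of the form $\{r\theta: a<r<b,\ \theta\in B\}$, which generate the Borel $\sigma$-algebra, and then extend by the $\pi$--$\lambda$ theorem. The remaining steps, Tonelli and the normalizing constant, are routine.
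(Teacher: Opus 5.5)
Your proof is correct. The paper gives no argument of its own for this statement: it is quoted from \cite{fang2018symmetric} and used as a black box, e.g.\ for the densities in Table~\ref{tab:radial-to-spherical-generators}. So there is no in-paper route to match, and what you supply is a self-contained replacement for the citation. You write $\Pr[X\in A]$ against the product law $f_R(r)\,dr\otimes\sigma/|\cS^{T-1}|$, insert $r^{T-1}r^{1-T}=1$, and read the result through the polar decomposition $dx=r^{T-1}\,dr\otimes d\sigma$. This simultaneously proves that $X$ has a density and identifies it, without any explicit angular parametrisation or Jacobian computation. The citation buys brevity; your argument buys transparency about exactly which facts are used. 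Those facts are two. The first is the polar decomposition of Lebesgue measure, and your scaling plus $\pi$--$\lambda$ sketch is the standard proof. The second is the claim that $\sphereD{T}$ is the normalised surface measure $\sigma/|\cS^{T-1}|$, which you assert without justification. If ``uniform'' is defined as the rotation-invariant law, it follows from rotation invariance of $\sigma$ together with uniqueness of the rotation-invariant probability measure on the sphere, and it is worth saying so. Two minor points of tidying. The polar formula holds for all nonnegative measurable $h$, which is what you actually apply it to, so integrability is not needed. And it suffices to take $A$ Borel and $f_R$ a Borel version of the density, so that $(r,\theta)\mapsto \mathbf{1}_A(r\theta)f_R(r)$ is measurable when you invoke Tonelli.
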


In~\Cref{sec:additional-prelims}, we give~\Cref{prop: pdf of costheta} that characterizes the distribution of $\cos\Theta$, where $\Theta$ is the angle between the random direction vector $U$ and a fixed reference direction. %

\section{Gaussian Optimality in High Dimensions}

We study the optimality of the Gaussian noise distribution for additive-noise differentially private mechanisms in high dimensions. A natural utility measure for this class is the mean-squared error (MSE), equivalently, the second moment of the additive noise (with zero-centered mean). Our notion of optimality asks which mechanism provides the strongest privacy guarantee under a fixed MSE budget. Concretely, we compare mechanisms at a fixed privacy level $\varepsilon$ and sensitivity $s$ by their \emph{optimal delta} $\delta_{\Mech}(\varepsilon)$ (cf.~\Cref{equ:optimal delta def}); smaller $\delta_{\Mech}(\varepsilon)$ yields stronger privacy.

\Cref{thm:additive-no-better-than-gaussian} states our main optimality result. Informally, for a fixed privacy level $\varepsilon$ and sensitivity $s$ and a target high-privacy regime with $\delta$ sufficiently small, \Cref{thm:additive-no-better-than-gaussian} says that, as the query dimension $T\to\infty$, the Gaussian mechanism is asymptotically optimal among all additive-noise mechanisms with the same MSE: if an additive-noise mechanism matches the Gaussian error budget, then its optimal delta at privacy level $\varepsilon$ cannot be smaller than $\delta$.

\begin{theorem}[Asymptotic Gaussian optimality among additive noises]
\label{thm:additive-no-better-than-gaussian}
Fix $\varepsilon\geq 0$ and $s>0$. There exists $\delta_\star\in(0,1)$ such that for all $\delta\in(0,\delta_\star]$ the following holds.

\smallskip
Let $u_0=u_0(\delta)$ be the minimal Gaussian variance for which the Gaussian mechanism $\Mech_{\mathsf{G},u_0}$ is $(\varepsilon,\delta)$-DP. For each $T\geq 2$, let
    $\Mech_{T,u_0}(\mu_T)\defin{} \mu_T + \rX_T$
be any additive-noise mechanism with $\Ex{\twoNorm{\rX_T}^2}=Tu_0$. Then,
\begin{align*}
    \liminf_{T\to\infty} \delta_{\Mech_{T,u_0}}(\varepsilon)
    \geq \delta_{\Mech_{\mathsf G,u_0}}(\varepsilon)
    = \delta.
\end{align*}
Equivalently, for every $\eta>0$, for all sufficiently large $T$,
\[
    \delta_{\Mech_{T,u_0}}(\varepsilon)
    \geq
    \delta-\eta.
\]
\end{theorem}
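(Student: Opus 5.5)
The proof has three steps: reduce to spherically symmetric noise, lower bound the hockey-stick divergence of such noise by an average of the Gaussian privacy function $g$ over the radius, and close with a Jensen-type inequality for $g$.

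\textbf{Step 1: symmetrization.} First I reduce to spherically symmetric noise. Given any $\rX_T$ with $\Ex{\twoNorm{\rX_T}^2}=Tu_0$, let $O$ be a Haar-random rotation independent of $\rX_T$, and put $\tilde \rX_T = O\rX_T$. This vector is spherically symmetric in the sense of \Cref{def: spherical distribution} and has the same second moment. Fix a shift $\mu$ with $\twoNorm{\mu}=s$. By joint convexity of the hockey-stick divergence, the divergence between $\mu+\tilde\rX_T$ and $\tilde\rX_T$ is at most the average over $O$ of the divergence between $O^\top\mu+\rX_T$ and $\rX_T$. That average is at most $\delta_{\Mech_{T,u_0}}(\varepsilon)$, since every $O^\top\mu$ is an admissible sensitivity vector. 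So it suffices to prove the $\liminf$ bound for $\rX_T=R_TU$ with $\Ex{R_T^2}=Tu_0$.

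\textbf{Step 2: reduction to $\Ex{g(R_T^2/T)}$.} Write $\mu=se_1$. The test sets I use are half-spaces $S_c=\{x: x_1\ge c\}$. Conditioning on $R_T=r$, the first coordinate of $rU$ is $r\cos\Theta$. Its law is given by \Cref{prop: pdf of costheta}: the density of $\cos\Theta$ is proportional to $(1-t^2)^{(T-3)/2}$. Uniformly in the scale $r/\sqrt T$ on compacts, this law converges to $\multiNormal{0}{r^2/T}$ in total variation. Hence
\[
\Pr[\mu+\rX_T\in S_c]-e^\varepsilon\Pr[\rX_T\in S_c] = \Ex{h_c(R_T^2/T)}+o(1),
\]
where $h_c(v)=\Phi((s-c)/\sqrt v)-e^\varepsilon\Phi(-c/\sqrt v)$. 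However, a single threshold $c$ only recovers $g$ at one variance. To obtain $\Ex{g(R_T^2/T)}$ I instead take the data-dependent set $S=\{x: x_1\ge c(\twoNorm{x})\}$, where the threshold $c(\twoNorm{x})$ is the Gaussian-optimal threshold for variance $\twoNorm{x}^2/T$. Since $\twoNorm{x}$ and $\twoNorm{x-\mu}$ differ by $O(s)$, which is negligible relative to $\sqrt T$, the radius concentrates jointly with $x_1$. This yields
\[
\delta_{\Mech_{T,u_0}}(\varepsilon)\ge \Ex{g(R_T^2/T)}-o(1).
\]
Tightness is handled by truncation. The mass of $R_T^2/T$ above $M$ is at most $u_0/M$ by Markov's inequality. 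The mass near $0$ only helps, because $g(v)\to 1-0$ as $v\to0$ and $g\ge0$. I expect this step to be the main obstacle, specifically making the $o(1)$ uniform over arbitrary, possibly $T$-dependent laws of $R_T$. My first attempt would be to condition on $R_T$, apply a quantitative Berry--Esseen-type bound for the projection of the uniform sphere (total variation $O(1/T)$), and use dominated convergence on a compact window $[\eta,M]$ of $v$.

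\textbf{Step 3: Jensen-type inequality.} It remains to show $\Ex{g(V)}\ge g(\Ex V)$ for every $V\ge0$ with $\Ex V=u_0$, provided $u_0$ is large, i.e., $\delta$ is small. This requires more than convexity of $g$, since $g$ is decreasing from $1-0$ to $0$ with a sigmoidal shape. The statement needed is that $g$ lies above its tangent line at $u_0$ on all of $[0,\infty)$:
\[
g(v)\ge g(u_0)+g'(u_0)(v-u_0)\quad\text{for all } v\ge 0.
\]
Taking expectations then gives the claim. I would verify this by writing $g(v)=\int \bigl(\phi\text{-type terms}\bigr)$ and checking three facts: $g$ is convex for $v\ge v_1$ for some threshold $v_1$; on $[0,v_1]$, $g$ stays bounded below by a positive constant; and the tangent at large $u_0$ has slope $g'(u_0)\to0$ with $g(u_0)\to0$, so the tangent line drops below $0\le g$ before reaching $v_1$. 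The value $\delta_\star=g(u^\star)$ is chosen so that all three conditions hold for $u_0\ge u^\star$. Combining the three steps gives $\liminf_T \delta_{\Mech_{T,u_0}}(\varepsilon)\ge g(u_0)=\delta$.
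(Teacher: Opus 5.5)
Your outline follows the paper's proof: Haar symmetrization via convexity of the hockey-stick divergence (\Cref{lem:haar-symmetrize}), then the radius-corrected affine test $\{x:\innerProd{e_1}{x}\le s/2-\varepsilon\twoNorm{x}^2/(sT)\}$ giving $\delta\ge\Ex{g(R_T^2/T)}+o(1)$ (\Cref{lem:sph-to-g}), then the supporting-tangent Jensen argument (\Cref{prop: tangent test,lem:gauss-exists-delta}). Two repairs are needed. First, in Step 3 what must vanish is the tangent intercept $g(u_0)-u_0g'(u_0)$, not just the slope; to the left of $u_0$ the tangent line increases and stays positive, so it never ``drops below $0$''. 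The intercept does tend to zero by your own convexity fact, since $-ug'(u)\le 2\bigl(g(u/2)-g(u)\bigr)\to 0$. Second, in Step 2 the near-zero mass of $R_T^2/T$ must be kept, not discarded, because it is where $g\approx 1$ and the Jensen step needs it; no windowing is required for this. Given $R_T=r$, both events are threshold events for $\sqrt{T}\innerProd{e_1}{U_T}$, since the shifted-side condition is affine in $\innerProd{e_1}{U_T}$. P\'olya's theorem then gives an $o(1)$ that is uniform in $r$, and hence uniform over the law of $R_T$.
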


\paragraph{Numerical lower bounds on the threshold $\delta_\star$.} \Cref{thm:additive-no-better-than-gaussian} is stated in the privacy regime $\delta\leq \delta_\star$, where $\delta_\star\in(0,1)$ is a constant. To demonstrate that this regime is non-vacuous and, in particular, includes $\delta$-values typically used in practice, Table~\ref{tab:delta-star} reports numerically verified lower bounds on $\delta_\star$ for some $\varepsilon$ (with sensitivity fixed to $s=1$). The values in Table~\ref{tab:delta-star} provide evidence that the privacy regime $\delta\leq \delta_\star$ covers standard privacy choices such as $\delta\leq 10^{-3}$ (in fact, the range is much bigger), thereby highlighting the practical relevance of \Cref{thm:additive-no-better-than-gaussian}. Our computation is based on a numerical verification of the tangent-support conditions for the Gaussian privacy function $g$, explained below (cf.~\Cref{prop: tangent test}).

\paragraph{Finite-dimensional convergence.}
\Cref{thm:additive-no-better-than-gaussian} is an asymptotic statement in the dimension $T$. In particular, it does not provide a concrete threshold $T_0$ or a convergence rate for how quickly the Gaussian’s asymptotic optimality kicks in. The only asymptotic loss in the proof enters through \Cref{lem:sph-to-g}, where the distribution of a suitably scaled coordinate of a uniform random direction is approximated by a standard normal distribution. Making the result quantitative would require replacing this $o(1)$ aproximation gap by an explicit normal-approximation bound and propagating that error through the distinguishing test used in the proof. We leave such finite-$T$ rates as an interesting direction for future work.

\subsection{Proof sketch of~\Cref{thm:additive-no-better-than-gaussian}}

In this section, we provide a proof sketch of \Cref{thm:additive-no-better-than-gaussian}. Full details are deferred to \Cref{app: proof for thm:additive-no-better-than-gaussian}.

We first focus our discussion on \emph{spherically symmetric} noises, a natural yet broad class in the DP worst-case model: since neighboring datasets can induce sensitivity in arbitrary directions, %
it is natural to consider perturbations that treat all directions uniformly. We define the class of additive-noise mechanism adding \emph{spherically symmetric} noises as \emph{spherical additive-noise mechanisms}, and formally state it in~\Cref{def: spherical mech}.

\begin{definition}[Spherical additive-noise mechanism]
\label{def: spherical mech}
Fix $T\geq 2$ and an $\ell_2$-sensitive query with sensitivity $s>0$ under the add/remove neighboring relation. A \emph{spherical additive-noise mechanism} is any mechanism of the form
\begin{align*}
    \Mech_{R_T,T}(\mu_T) \defin{} \mu_T + R_T U_T,
\end{align*}
where $\mu_T\in\Real^T$ is the dataset query result, $U_T\sim\sphereD{T}$ is uniform on the unit sphere, and $R_T\in\posReal$ is independent of $U_T$.
\end{definition}

\Cref{lem:sph-to-g} states our first observation that, when the query dimension $T$ is sufficiently large, the optimal $\delta$ of any spherical additive-noise mechanism $\Mech_{R_T,T,u_0}$ can be lower bound, up to an asymptotically vanishing error term,  in terms of the Gaussian privacy function $g(\cdot)$ (cf.~\Cref{eq:gaussian privacy}). Specifically, letting $R_T$ denote the radial random variable of the spherical noise, the optimal $\delta$ at privacy level $\varepsilon$ is lower bounded by $\Ex{g\left(\frac{R_T^2}{T}\right)},$
which is the Gaussian privacy expression evaluated at the \emph{normalized radial energy} $R_T^2/T$ and averaged over its randomness. This reduction turns the comparison between the Gaussian mechanism and an arbitrary spherical mechanism into a one-dimensional question: under the mean-preserving constraint $\Ex{R_T^2/T}=u_0$ (the MSE budget), can any randomization $R_T^2/T$ make $\Ex{g(R_T^2/T)}$ smaller than $g(\Ex{R_T^2/T}) = g(u_0)$? We note that for Gaussian benchmark, its corresponding $R_T^2/T$ equals to $u_0\cdot \chi_T^2/T$, which converges to $u_0$ as $T\to\infty$.

\begin{table}[t]
\centering
\caption{Numerical lower bounds on $\delta_\star$ in \Cref{thm:additive-no-better-than-gaussian} (for some $\varepsilon$ with fixed sensitivity $s=1$).}
\label{tab:delta-star}
\setlength{\tabcolsep}{8pt}
\renewcommand{\arraystretch}{1.05}
\begin{tabular}{@{}cc@{\hskip 10pt}cc@{}}
\toprule
$\varepsilon$ & $\delta_\star$ &
$\varepsilon$ & $\delta_\star$ \\
\midrule
0.25  & 0.736670 & 4.00  & 0.416972 \\
0.50  & 0.706970 & 8.00  & 0.292170 \\
1.00  & 0.649185 & 16.00 & 0.197615 \\
2.00  & 0.549133 &       &          \\
\bottomrule
\end{tabular}
\vspace{-0.4cm}
\end{table}

\begin{lemma}[Proof is in~\Cref{proof for lem:sph-to-g}]
    \label{lem:sph-to-g}
    Fix $\varepsilon\geq 0$ and $s>0$. For each $T\geq 2$, let $U_T\sim\sphereD{T}$ and let $R_T\in\posReal$ be independent of $U_T$. For any deterministic $\mu_T\in\Real^T$ with $\twoNorm{\mu_T}=s$, define
        $\rX_T=R_TU_T$ and $\rY_T=\rX_T+\mu_T$.
    Let $\delta$ be the optimal delta with respect to $X_T,Y_T$ at privacy level $\varepsilon$. Then, as $T \to \infty$, it holds that 
    \begin{align*}
        \delta \geq \Ex{g\left(\frac{R_T^2}{T}\right)}+o(1).
    \end{align*}
\end{lemma}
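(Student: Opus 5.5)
The plan is to lower bound the hockey-stick divergence $\HS{\rY_T}{\rX_T}$, or the reverse orientation, whichever matches the Gaussian convention in \Cref{eq:gaussian privacy}. We test only against sets that depend on the projection onto the direction $e=\mu_T/s$. By rotational invariance of $U_T$ we may take $\mu_T=s e_1$. Data processing for the hockey-stick divergence then gives
\[
\delta \ge \HS{\langle \rY_T,e_1\rangle}{\langle \rX_T,e_1\rangle} = \HS{R_T V_T + s}{R_T V_T},
\]
where $V_T=\langle U_T,e_1\rangle$ has the density of $\cos\Theta$ from \Cref{prop: pdf of costheta}.

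Next we condition on $R_T$. Rather than work with the divergence of the mixture directly (which is not linear in the mixture), we fix a single explicit test set
\[
S=\{(\rho,v) : \rho v > c(\rho)\},
\]
applied to the pair $(R_T,\rho V_T)$. Here $c(\rho)$ is the optimal Gaussian threshold for variance $u=\rho^2/T$, namely $c=\tfrac{s}{2}+\tfrac{\varepsilon u}{s}$. This is legitimate: $R_T$ is independent of $U_T$ and has the same law under both hypotheses, so we may pass to the joint observation $(R_T,\langle\cdot,e_1\rangle)$. This is still a post-processing of $\rY_T$ versus $\rX_T$, since $R_T$ is not a function of the output. The fix is to note that the adversary may use a randomized test. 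The DP inequality must hold for the output $\rY_T$ alone, though, so we must instead observe $\twoNorm{\rY_T}$ and the projection. The cleanest route is to use the full projection onto $e_1$ together with the norm of the orthogonal component, $W_T=R_T\sqrt{1-V_T^2}$. Since $W_T$ is identical for $\rX_T$ and $\rY_T$ and concentrates around $R_T$ when $V_T=O(T^{-1/2})$, the test $S$ can use $W_T$ in place of $R_T$, up to a $(1+o(1))$ perturbation.

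The conditional computation is then as follows. Given $R_T=\rho$, the variable $\sqrt{T}V_T$ converges to $\mathcal N(0,1)$ in total variation, uniformly, by the standard Poincaré–Borel / Diaconis–Freedman bound $\mathrm{TV}\le C/T$ for one coordinate of a uniform sphere vector. So $\rho V_T$ is within $o(1)$ total variation of $\mathcal N(0,\rho^2/T)$, uniformly in $\rho$. Evaluating
\[
\Pr[\rho V_T+s\in S]-e^{\varepsilon}\Pr[\rho V_T\in S]
\]
therefore gives $g(\rho^2/T)+O(e^{\varepsilon}/T)$, because the threshold test is exactly the optimal Gaussian test achieving \Cref{eq:gaussian privacy}. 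Averaging over $R_T$ and collecting the errors yields
\[
\delta\ge \Ex{g(R_T^2/T)}-(1+e^{\varepsilon})\,o(1).
\]

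The main obstacle is this replacement of $R_T$ by the observable $W_T$ in the threshold. The threshold $c(W_T)$ is slightly misspecified, and the error is nonuniform when $\rho^2/T$ is very small or very large. I would handle this by truncating: on the events $\rho^2/T<\eta$ or $\rho^2/T>1/\eta$, either $g$ is near its limiting value or it can be lower bounded by $0$, which is allowed because $g\ge 0$ and the truncated contribution of $g$ is at most the probability of those events. Restricted to $\eta\le\rho^2/T\le 1/\eta$, the map $u\mapsto$ Gaussian tail probabilities is uniformly continuous. Moreover $W_T/R_T=\sqrt{1-V_T^2}=1-O_P(1/T)$ uniformly, so the misspecification costs $o(1)$. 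Sending $T\to\infty$ and then $\eta\to 0$ finishes the argument.
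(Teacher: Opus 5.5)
Your test is essentially the paper's: a threshold on the projection onto $\mu_T/s$ with a quadratic radius correction, a one-coordinate CLT for $U_T$, then averaging over $R_T$. Using $W_T^2=\twoNorm{P^{\perp}x}^2$ in place of $\twoNorm{x}^2$ is legitimate, because it is the same function of $U_T$ under both hypotheses. Two side remarks. The intermediate detour through $R_T$ and ``randomized tests'' is not legitimate and should be deleted. The choice of orientation is harmless, since $x\mapsto\mu_T-x$ swaps the laws of $X_T$ and $Y_T$.

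The genuine gap is the truncation step. The lemma assumes nothing about the law of $R_T$, so $\Pr[R_T^2/T<\eta]$ and $\Pr[R_T^2/T>1/\eta]$ need not be small, either uniformly in $T$ or as $\eta\to 0$. For $R_T\equiv 1$, all the mass sits at $u=1/T\to 0$. Even under $\Ex{R_T^2}=Tu_0$, a constant fraction of the mass may sit near $u=0$.
\begin{itemize}
\item On $\{u<\eta\}$ one has $g(u)\approx g(0^+)=1$. Lower bounding $g$ by $0$ there discards roughly $\Pr[u<\eta]$ from the right-hand side, which is not $o(1)$.
\item On both extreme events you also discard the test's conditional advantage $D(\rho)=\Pr[Y_T\in S\mid R_T=\rho]-e^{\varepsilon}\Pr[X_T\in S\mid R_T=\rho]$. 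This can be negative, so it must be bounded below, not ignored.
\end{itemize}
As written, the argument yields only $\delta\ge\Ex{g(R_T^2/T)\,\mathbf{1}\{\eta\le R_T^2/T\le 1/\eta\}}-o(1)$. That is not the lemma, and it is too weak for the Jensen step in \Cref{thm:additive-no-better-than-gaussian}.

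The repair is short.
\begin{itemize}
\item \emph{Small $u$: nothing needs truncating.} In the normalized variable $Z_T=\sqrt{T}V_T$, the $W_T$-based threshold differs from the Gaussian one by $\varepsilon\sqrt{u}\,Z_T^2/(sT)$. This is $o_P(1)$ uniformly over $u\le 1/\eta$, so together with your TV bound you get $D(\rho)\ge g(\rho^2/T)-o(1)$ uniformly on $\{u\le 1/\eta\}$.
\item \emph{Large $u$, with $\varepsilon>0$.} (For $\varepsilon=0$ the threshold is constant.) Because the threshold is the same function of $U_T$ under both hypotheses, $\{X_T\in S\}\subseteq\{Y_T\in S\}$ pointwise, so $D(\rho)\ge-(e^{\varepsilon}-1)\Pr[X_T\in S\mid R_T=\rho]$. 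On $\{V_T^2\le 1/2\}$, the event $X_T\in S$ forces $Z_T>\varepsilon\sqrt{u}/(2s)$. Hence this probability is at most $\Phi(-\varepsilon/(2s\sqrt{\eta}))+O(1/T)+\Pr[V_T^2>1/2]$, while $g(u)\le g(1/\eta)$. Both vanish as $T\to\infty$ and then $\eta\to 0$.
\end{itemize}
Alternatively, follow the paper and keep $\twoNorm{x}^2$ in the correction. Since $\twoNorm{Y_T}^2=r^2+s^2+2rsV_T$ is affine in $V_T$, each conditional membership event is exactly $\{Z_T\le\text{deterministic threshold}\}$. For $Y_T$ this threshold is, after applying $\Phi$, within $O(1/T)$ of $b(r^2/T)$ uniformly in $r$. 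Uniform Kolmogorov convergence of $Z_T$ to $\mathcal{N}(0,1)$ (P\'olya) then gives the $o(1)$ uniformly in $r$, with no truncation at all.
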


Intuitively, to prove \Cref{lem:sph-to-g}, we lower bound the optimal $\delta$ between $\rX_T$ and $\rY_T$ by constructing a set $\cS_T$ and bounding $\pr{\rX\in \cS_T}-e^\varepsilon \pr{\rY\in \cS_T}$ from below. We choose $\cS_T$ to be a threshold region --- an affine test with a mild radius-dependent correction --- designed to mirror the affine test for distinguishing Gaussian shifts. Conditioning on $R_T=r$, we analyze the membership probabilities $\pr{\rX_T\in \cS_T \mid R_T =r}$ and $\pr{\rY_T\in \cS_T \mid R_T =r}$, and show that as $T\to\infty$, they converge to the two Gaussian CDF terms that define the Gaussian privacy function evaluated at $u=r^2/T$. Averaging over $R_T$ then yields the stated asymptotic lower bound in \Cref{lem:sph-to-g}.

\Cref{prop: tangent test} answers the one-dimensional question induced by \Cref{lem:sph-to-g}: if the Gaussian privacy function $g(\cdot)$ satisfies an appropriate supporting-line property at the MSE budget $u_0$, then any mean-preserving randomization of the effective variance cannot improve privacy relative to the Gaussian benchmark. Concretely, the supporting-line property requires that $g$ is convex on $[u_0,\infty)$ and that $g$ lies above its tangent line at $u_0$ on $[0,u_0]$. This is a tailored Jensen-type condition without requiring global convexity: convexity on the upper side enables a Jensen argument, while the additional lower-side tangent bound ensures the same supporting line remains valid even where $g$ need not be convex. A formal proof appears in \Cref{proof for prop: tangent test}.

\begin{proposition}[Proof is in~\Cref{proof for prop: tangent test}]
\label{prop: tangent test}
Let $\varepsilon\geq 0$, $\delta\in(0,1)$ and $s>0$, and recall the Gaussian privacy function $g(\cdot)$ defined in \Cref{eq:gaussian privacy}. Let $u_0>0$ be the unique point such that $g(u_0)=\delta$ (i.e., the minimal Gaussian variance shown in \Cref{eq:minimal variance for gaussian privacy}). 
Define the tangent line function at $u_0$ by
\begin{align}
    \label{equ: tangent line func}
    \ell_\delta(u) \defin{} g(u_0)+g'(u_0)(u-u_0). 
\end{align}

If $g$ is convex on $[u_0,\infty)$ and $g(u) \geq \ell(u)$ for all $u\in[0,u_0]$, then for every random variable $U\geq 0$ with $\Ex{U}=u_0$,
    $\Ex{g(U)} \geq g(u_0)$.
\end{proposition}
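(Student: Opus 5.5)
The argument is a supporting-line (Jensen-type) argument. It suffices to show that the tangent line $\ell_\delta$ from \Cref{equ: tangent line func} lies below $g$ on all of $[0,\infty)$:
\[
g(u)\ \geq\ \ell_\delta(u) \qquad \text{for all } u\ge 0 .
\]
Granting this, for any $U\ge 0$ with $\Ex{U}=u_0$ we take expectations and use linearity of $\ell_\delta$:
\[
\Ex{g(U)}\ \ge\ \Ex{\ell_\delta(U)} = g(u_0)+g'(u_0)\bigl(\Ex{U}-u_0\bigr) = g(u_0).
\]
Two points need checking. First, the expectation must be well-defined: $g$ is bounded in $[0,1]$, since it is a hockey-stick divergence and hence an optimal delta, and $\Ex{\abs{\ell_\delta(U)}}<\infty$ because $\Ex{U}<\infty$. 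Second, at $u=0$ we interpret $g(0)$ as the limit $\lim_{u\downarrow 0} g(u)=1$; the hypothesis is stated on $[0,u_0]$, so this is covered.

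On $[0,u_0]$ the global supporting property is exactly the hypothesis. The remaining step is $[u_0,\infty)$, where we use convexity. The function $g$ is differentiable on $(0,\infty)$, being a smooth combination of $\Phi$ composed with smooth functions of $\sqrt u$. Since $g$ is convex on $[u_0,\infty)$, the standard first-order characterization of convex differentiable functions gives $g(u)\ge g(u_0)+g'_+(u_0)(u-u_0)$ for $u\ge u_0$, where $g'_+(u_0)$ is the right derivative at the left endpoint of the interval. Differentiability of $g$ at $u_0$ gives $g'_+(u_0)=g'(u_0)$, so $g\ge \ell_\delta$ on $[u_0,\infty)$.

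For a self-contained justification, one can note that the difference quotient $(g(u)-g(u_0))/(u-u_0)$ is nondecreasing in $u>u_0$ by convexity, so it is bounded below by its limit as $u\downarrow u_0$, which is $g'(u_0)$.

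The only mildly delicate step is the endpoint issue: the first-order convexity inequality is applied at the boundary point $u_0$ of the convexity interval, and this needs the one-sided derivative to equal the two-sided $g'(u_0)$. That follows from the smoothness of $g$ at $u_0>0$. Everything else is linearity of expectation.
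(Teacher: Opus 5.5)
Your proposal is correct and follows essentially the same route as the paper: extend the supporting-line inequality $g\geq \ell_\delta$ from the hypothesis on $[0,u_0]$ to $[u_0,\infty)$ via convexity, then take expectations and use $\Ex{U}=u_0$. Your added remarks go slightly beyond what the paper writes out, namely the one-sided derivative at the endpoint $u_0$, integrability, and interpreting $g(0)$ as $\lim_{u\downarrow 0} g(u)=1$; they are correct.
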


\Cref{lem:gauss-exists-delta} asserts that, in the high-privacy regime (i.e., for sufficiently small $\delta$), the Gaussian privacy function $g(\cdot)$ satisfies the supporting-line property required by \Cref{prop: tangent test}. Concretely, for all $\delta$ in this regime, the corresponding minimal MSE level $u_0(\delta)$ lies in a well-behaved region of $g$: the function is convex on $[u_0(\delta),\infty)$ and, moreover, the tangent line at $u_0(\delta)$ lower bounds $g$ throughout the interval $[0,u_0(\delta)]$.

\begin{lemma}[Proof is in~\Cref{proof for lem:gauss-exists-delta}]
\label{lem:gauss-exists-delta}
Let $\varepsilon\geq 0$, $s>0$, and recall the Gaussian privacy function $g(\cdot)$ defined in \Cref{eq:gaussian privacy}. For each $\delta\in(0,1),$ let $u_0(\delta)>0$ be the unique point such that $g(u_0(\delta))=\delta$ (i.e., the minimal Gaussian variance shown in \Cref{eq:minimal variance for gaussian privacy}). Denote the tangent line function at $u_0(\delta)$ by $\ell_\delta(u)$ (cf.~\Cref{equ: tangent line func}).

Then there exists $\delta_\star\in(0,1)$, such that for every $\delta\in(0,\delta_\star]$ the following hold:
\begin{enumerate}[nosep]
\item $g$ is convex on $[u_0(\delta),\infty)$;
\item $g(u)\geq \ell_\delta(u)$ for all $u\in[0,u_0(\delta)]$.
\end{enumerate}
\end{lemma}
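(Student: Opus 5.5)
The plan is to exploit an explicit formula for $g'$, which reduces both items to elementary sign considerations, and then use the monotonicity of $g$ to translate ``$\delta$ small'' into ``$u_0(\delta)$ large''.

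\textbf{Step 1 (derivative of $g$).} Write $a_\pm(u)=\pm\frac{s}{2\sqrt u}-\frac{\varepsilon\sqrt u}{s}$, so that $g(u)=\Phi(a_+(u))-e^{\varepsilon}\Phi(a_-(u))$. A direct computation gives $a_-^2-a_+^2=2\varepsilon$, hence $e^{\varepsilon}\varphi(a_-(u))=\varphi(a_+(u))$, where $\varphi$ is the standard normal density. Differentiating, the $\varepsilon/s$-terms cancel and we get
\begin{align*}
g'(u)=\varphi(a_+(u))\bigl(a_+'(u)-a_-'(u)\bigr)=-\frac{s}{2}\,u^{-3/2}\,\varphi(a_+(u))<0 .
\end{align*}
So $g$ is strictly decreasing. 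Moreover $g(u)\to 1$ as $u\to0^+$, since $a_+\to+\infty$ and $e^\varepsilon\Phi(a_-)\to0$; this also holds for $\varepsilon=0$. Also $g(u)\to0$ as $u\to\infty$. In particular $u_0(\delta)$ is well defined, and $u_0(\delta)\to\infty$ as $\delta\to0$.

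\textbf{Step 2 (convex/concave split).} Convexity of $g$ is equivalent to $h(u)=u^{-3/2}\varphi(a_+(u))$ being nonincreasing. We have
\begin{align*}
\frac{d}{du}\log h(u)=-\frac{3}{2u}-a_+(u)a_+'(u),
\qquad
a_+a_+'=\frac{\varepsilon^2}{2s^2}-\frac{s^2}{8u^2}.
\end{align*}
Therefore
\begin{align*}
\operatorname{sign} g''(u)=\operatorname{sign}\Bigl(\frac{\varepsilon^2}{2s^2}u^2+\frac{3}{2}u-\frac{s^2}{8}\Bigr).
\end{align*}
The right-hand side is an increasing quadratic on $u>0$ with a unique positive root $u_c$ (for $\varepsilon=0$, $u_c=s^2/12$). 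Hence $g$ is concave on $(0,u_c]$ and convex on $[u_c,\infty)$. Item 1 then holds whenever $u_0(\delta)\ge u_c$.

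\textbf{Step 3 (tangent condition).} Let $D(u)=g(u)-\ell_\delta(u)$, with $u_0=u_0(\delta)\ge u_c$.
\begin{itemize}
\item On $[u_c,u_0]$, convexity gives $D\ge0$.
\item On $(0,u_c]$, $D$ is concave, because $\ell_\delta$ is affine. So its infimum over $[0,u_c]$ is attained at an endpoint, using the continuous extension $g(0)=1$.
\item We already have $D(u_c)\ge0$. At the left endpoint,
\begin{align*}
D(0)=1-\delta-u_0\lvert g'(u_0)\rvert=1-\delta-\tfrac{s}{2}u_0^{-1/2}\varphi(a_+(u_0)),
\end{align*}
which is at least $1/2-\tfrac{s}{2}u_0^{-1/2}(2\pi)^{-1/2}$ when $\delta\le 1/2$.
\end{itemize}
Choose $u^\ast\ge u_c$ large enough that $\tfrac{s}{2}(u^\ast)^{-1/2}(2\pi)^{-1/2}\le 1/2$, and set $\delta_\star=\min\{g(u^\ast),1/2\}$. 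Because $g$ is strictly decreasing, every $\delta\le\delta_\star$ satisfies $u_0(\delta)\ge u^\ast$, and both items follow.

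The main obstacle is Step 2: obtaining a clean sign characterization of $g''$. The cancellation $e^{\varepsilon}\varphi(a_-)=\varphi(a_+)$ is what makes the argument tractable. Without it one would face a messy two-term second derivative, so I would verify that identity first. The remaining steps are then one-variable monotonicity arguments.
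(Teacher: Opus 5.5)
Your proof is correct, and it takes a genuinely different route from the paper's.

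\textbf{The paper's route.} It argues asymptotically. For $\varepsilon>0$, Mills-ratio bounds give $g(u)=C_{\varepsilon,s}u^{-3/2}e^{-\varepsilon^2u/(2s^2)}(1+O(1/u))$. From this the paper infers convexity on some $[u_{\mathrm{right}},\infty)$ and that the tangent intercept $L(u)=g(u)-ug'(u)$ tends to $0$. It then picks $u_\star$ with $L(u_\star)\le g(u_{\mathrm{right}})$, so that the decreasing tangent line lies below $g$ on $[0,u_{\mathrm{right}}]$, and uses convexity on $[u_{\mathrm{right}},u_\star]$. Finally it invokes the separate downward-closure result (\Cref{prop:downward-closure}) to reach all $\delta\le\delta_\star$.

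\textbf{Your route.} You use the exact identity $g'(u)=-\tfrac{s}{2}u^{-3/2}\varphi(a_+(u))$, which gives a global concave/convex split at an explicit inflection point $u_c$. Concavity of $g-\ell_\delta$ on $[0,u_c]$ then reduces the tangent condition to the endpoint check $\ell_\delta(0)\le 1$. Because your bound on $u_0\lvert g'(u_0)\rvert$ is monotone in $u_0$, every small $\delta$ is handled at once, with no downward-closure step.

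\textbf{What each buys.} Your argument is non-asymptotic throughout. It therefore avoids the paper's step of differentiating the $O(1/u)$ remainder, which the bound $\eta=O(1/u)$ alone does not justify. It also treats $\varepsilon=0$ on the same footing, whereas the paper only sketches that case, and its convexity argument relies on a leading term $k^2$ that vanishes there.

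Your route also pins down the optimal threshold. With $g(0)=1$, items 1--2 hold if and only if $u_0\ge u_c$ and $L(u_0)\le 1$. Since $L'(u)=-ug''(u)\le 0$ on $[u_c,\infty)$, the largest admissible $\delta_\star$ is $g(\max\{u_c,u^\dagger\})$, where $L(u^\dagger)=1$. For example, at $\varepsilon=0.25$, $s=1$, this gives $\delta_\star\approx 0.7367$, consistent with Table~\ref{tab:delta-star}.

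The paper's approach, in exchange, gives the precise tail asymptotics of $g$, which is informative but not needed for this lemma.
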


Intuitively, to prove \Cref{lem:gauss-exists-delta}, our first step is an asymptotic analysis of $g(u)$ as the MSE budget $u\to\infty$ (equivalently, $\delta=g(u)\to 0$): using Mills-type bounds for $\Phi$, we obtain the expansion
\begin{align*}
    g(u) = C_{\varepsilon,s} u^{-3/2}\rExp{-\frac{\varepsilon^2}{2s^2}u}\left(1+O\left(\frac{1}{u}\right)\right),
\end{align*}
which implies that $g$ is eventually convex, i.e., there exists $u_{\mathrm{right}}$ such that $g$ is convex on $[u_{\mathrm{right}}, \infty).$ To control the lower part, we study the tangent-line intercept $L(u)=g(u)-u g'(u)$. The same asymptotics show that $L(u)\to 0$ as $u\to\infty$. Since $g$ is decreasing, we can choose a sufficiently large point $u_\star\geq u_{\mathrm{right}}$ so that the tangent line at $u_\star$ has intercept below $\min_{u\in[0,u_{\mathrm{right}}]} g(u)$; this ensures the tangent line at $u_\star$ lies below $g$ on $[0,u_{\mathrm{right}}]$, while convexity guarantees the supporting-line inequality on $[u_{\mathrm{right}},u_\star]$. Defining $\delta_\star \defin{} g(u_\star)$ yields the desired supporting-line property at $(u_\star,\delta_\star)$. Finally, we use the downward-closure argument (\Cref{prop:downward-closure}) to extend the property from $\delta_\star$ to all $\delta \leq \delta_\star$. 

Combining \Cref{lem:sph-to-g,prop: tangent test,lem:gauss-exists-delta} yields the conclusion that, in the high-privacy regime, the Gaussian mechanism is asymptotically optimal among all \emph{spherical} additive-noise mechanisms with the same MSE. To extend this optimality beyond spherical noises, \Cref{lem:haar-symmetrize} formalizes the last piece. The idea is to reduce an arbitrary additive noise $\rX$ to a spherically symmetric one $\rX'$ via Haar symmetrization.

\Cref{lem:haar-symmetrize} says that the symmetrization preserves the MSE and cannot increase the worst-direction optimal delta over shifts of $\ell_2$ norm at most $s$. The proof uses convexity of hockey-stick divergence under mixtures together with orthogonal invariance. Therefore, under a fixed MSE budget, the optimal privacy achievable by additive-noise mechanisms is no better than that achievable by spherical additive-noise mechanisms.

\begin{lemma}[Haar symmetrization (Proof is in~\Cref{proof for lem:haar-symmetrize})]
\label{lem:haar-symmetrize}
Fix $T\geq 2$, $\varepsilon\geq 0$, and $s>0$.
Let $M$ be a Haar-uniform random matrix over the group of all $T \times T$ orthogonal matrices and independent of a random vector $\rX\in\Real^T$.
Define the symmetrized noise
    $\rX' \defin{} M\rX.$
Let $\delta \defin{} \sup_{\twoNorm{v}\leq s}\HS{\rX}{\rX+v}$ and $\delta' \defin{} \sup_{\twoNorm{v}\leq s}\HS{\rX'}{\rX'+v}$ denote the worst-direction optimal deltas at privacy level $\varepsilon$ over the $\ell_2$ ball of radius $s$ (cf.~\Cref{equ:pair wise optimal delta def}).

Then $\rX'$ is spherically symmetric, and 
\begin{align*}
    \Ex{\twoNorm{\rX'}^2}=\Ex{\twoNorm{\rX}^2} \qquad
    \delta' \leq \delta.
\end{align*}
\end{lemma}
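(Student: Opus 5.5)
The plan has three parts: show that $\rX'$ is spherically symmetric, show that its second moment matches that of $\rX$, and show that $\delta'\le\delta$, the last via convexity of the hockey-stick divergence under mixtures.

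\textbf{Spherical symmetry.} For any fixed orthogonal $O$, the product $OM$ is again Haar-distributed and independent of $\rX$, by left-invariance of Haar measure. Hence $O\rX' = (OM)\rX \stackrel{d}{=} M\rX = \rX'$, so the law of $\rX'$ is orthogonally invariant. Write $\rX' = \twoNorm{\rX'}\cdot \rX'/\twoNorm{\rX'}$, and condition on $\rX$ (treating the atom $\rX=0$ separately, where $\rX'=0$). Given $\rX=x$, the vector $Mx/\twoNorm{x}$ is uniform on the sphere, and its law does not depend on $x$. So the direction $U$ is uniform and independent of the radius $R=\twoNorm{\rX'}=\twoNorm{\rX}$. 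This is exactly the decomposition in \Cref{def: spherical distribution}.

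\textbf{MSE preservation.} Since $M$ is orthogonal, $\twoNorm{M\rX}=\twoNorm{\rX}$ pointwise, and therefore $\Ex{\twoNorm{\rX'}^2}=\Ex{\twoNorm{\rX}^2}$.

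\textbf{Privacy comparison.} Fix $v$ with $\twoNorm{v}\le s$ and condition on $M=O$. The pair $(\rX',\rX'+v)$ has conditional laws $(O\rX, O\rX+v)=(O\rX, O(\rX+O^\top v))$. The hockey-stick divergence is invariant under the bijection $z\mapsto Oz$ applied to both arguments, since it maps measurable sets to measurable sets. Therefore
\[
\HS{O\rX}{O\rX+v}=\HS{\rX}{\rX+O^\top v}\le \delta,
\]
because $\twoNorm{O^\top v}=\twoNorm{v}\le s$.

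The unconditional laws of $\rX'$ and $\rX'+v$ are mixtures over $O$, with the same mixing measure, of these conditional pairs. For any measurable $S$,
\[
\pr{\rX'\in S}-e^{\varepsilon}\pr{\rX'+v\in S} = \int \Bigl(\pr{O\rX\in S}-e^{\varepsilon}\pr{O\rX+v\in S}\Bigr)\,dH(O) \le \int \HS{O\rX}{O\rX+v}\,dH(O)\le\delta .
\]
Here $H$ is Haar measure, and we use that each integrand is bounded by the supremum defining $\HS{\cdot}{\cdot}$. Taking the supremum over $S$ and then over $v$ gives $\delta'\le\delta$.

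\textbf{Main obstacle.} The only delicate point is measure-theoretic: we need joint measurability of $(O,S)\mapsto \pr{O\rX\in S}$ so that the mixture representation and Fubini apply. This follows from the standard product-measure construction, since $(O,x)\mapsto Ox$ is continuous. With that in place, the argument is routine; everything else is direct.
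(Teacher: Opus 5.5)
Your proposal is correct and follows the paper's proof essentially step for step: orthogonal invariance from left-invariance of Haar measure, norm preservation for the MSE, and the identity $\HS{O\rX}{O\rX+v}=\HS{\rX}{\rX+O^\top v}\le\delta$ combined with convexity of the hockey-stick divergence under the Haar mixture. Your version is slightly more careful in two places: you explicitly derive the $RU$ decomposition of \Cref{def: spherical distribution}, whereas the paper simply equates orthogonal invariance with spherical symmetry, and you prove the mixture-convexity step directly instead of citing it.
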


\Cref{lem:sph-to-g,prop: tangent test,lem:gauss-exists-delta,lem:haar-symmetrize} yields \Cref{thm:additive-no-better-than-gaussian} immediately. We defer a formal wrap-up to \Cref{proof for thm:additive-no-better-than-gaussian}.

\section{Spherical Generalized Gamma Mechanism}
\label{sec: SGG mechanism}

In this section, we study the question of additive-noise design in finite (especially low) dimensions, where the \emph{shape} of the noise can materially affect privacy–utility tradeoff. Concretely, we formalize the Spherical Generalized Gamma (SGG) noise family and its use as an additive DP primitive for $\ell_2$-sensitive vector queries. 

\Cref{def: generalized gamma} defines the three-parameter generalized-gamma family, denoted by $R \sim \generalGamma{\alpha}{\beta}{p}$, which we use to model the \emph{radial} component of our noise. Intuitively, $\alpha$ controls the near-zero behavior (shape), $\beta$ sets the overall scale (noise magnitude), and $p$ tunes tail decay; see~\Cref{fig:spherical_gen_gamma} (right) for a visual. 

\begin{definition}
    \label{def: generalized gamma}
    A three-parameter \textbf{generalized-Gamma} $R \sim \generalGamma{\alpha}{\beta}{p}$ is defined with the density function:
    {\small
        \begin{align*}
            f_{\alpha, \beta, p}(r) = \frac{p\, \beta^{(\alpha+1)/p}}{\Gamma\left(\frac{\alpha+1}{p}\right)} r^{\alpha} \rExp{-\beta r^p},~ \alpha > -1, \, p > 0, \, \beta > 0.
        \end{align*}
    }
\end{definition}

\Cref{def:sgg} defines the \emph{spherical generalized-gamma} (SGG) family, induced by a generalized-gamma \emph{radial} distribution. The SGG family includes several standard spherical noises. For instance:
(i) setting $(p,\alpha)=(2,T-1)$ and $\beta=1/(2\sigma^2)$ yields $X\sim\multiNormal{0}{\sigma^2 I_T}$ (Gaussian noise);
(ii) setting $(p,\alpha)=(1,T-1)$ and $\beta=1/\theta$ yields spherical (i.e., $\ell_2$-) Laplace noise with scale $\theta$.

\begin{definition}[Spherical generalized-gamma (SGG) noise]
    \label{def:sgg}
    Fix a dimension $T\geq 2$ and parameters $\alpha>-1$, $\beta>0$, $p>0$.
    A random vector $\rX \in \Real^T$ follows a \emph{Spherical Generalized-Gamma (SGG)} distribution, denoted by $\rX \sim \sgg{\alpha}{\beta}{p}$, if 
        $\rX = R\,U,$
    where $R \sim \generalGamma{\alpha}{\beta}{p}$ and $U \sim \sphereD{T}$ are independent.
\end{definition}

\subsection{Privacy Analysis}
\label{sec:SGG:subsec:privacy}

\Cref{lem:sgg-delta-1d-integral} shows that the optimal delta at privacy level $\varepsilon$ between SGG noise $\rX$ and its shift $\rX+\mu$ admits an explicit \emph{one-dimensional} integral representation. In particular, the optimal $\delta_\mu^*(\varepsilon)$ can be evaluated by integrating a smooth integrand over $r\in(0,\infty)$, which yields an efficient and numerically stable procedure (in contrast to a direct exponentially expensive computation over $\Real^T$).

\begin{lemma}[Proof is in~\Cref{proof for lem:sgg-delta-1d-integral}]
\label{lem:sgg-delta-1d-integral}
Fix $T\ge 2$ and parameters $\alpha \in (-1, T-1]$, $\beta>0$, and $p>0$. Let $\rX = R U \sim \sgg{\alpha}{\beta}{p}$ and let $\mu\in \Real^T$ be a shift vector with $s = \twoNorm{\mu}$. Let $W = \cos\Theta = \frac{\langle \mu, U\rangle}{\twoNorm{\mu}} \in [-1,1],$
with CDF $F_W$. %
Then the optimal $\delta_\mu^*(\varepsilon)$ for the neighboring pair $(\rX,\rX+\mu)$ satisfies
\begin{align*}
\delta_\mu^*(\varepsilon)
={}& \max\Bigl\{0,  \int_{0}^{\infty} f_R(r)\,\bigl(1-F_W(w^*(r,-\varepsilon))\bigr)\,\mathrm{d}r \\
&{}- e^{\varepsilon}\int_{0}^{\infty} f_R(r)\,F_W(w^*(r,\varepsilon))\,\mathrm{d}r \Bigr\},
\end{align*}
where $f_R$ is the density of $R \sim \generalGamma{\alpha}{\beta}{p}$, and for each $y\in\Real$ and $r>0$,
$w^*(r,y)\in[-1,1]$ denotes the unique solution in $w$ to
{\small
    \begin{align*}
    y &={}\\
    &\frac{\alpha+1-T}{2}\ln\!\Bigl(1+\frac{2s w}{r}+\frac{s^2}{r^2}\Bigr)
    +\beta\Bigl[r^p-(r^2+2swr+s^2)^{p/2}\Bigr].
    \end{align*}
} %
\end{lemma}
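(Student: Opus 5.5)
We would compute the hockey-stick divergence through the privacy-loss random variable. By the standard representation, $\HS{\rX+\mu}{\rX}$ (or the reverse direction, whichever the definition in \Cref{equ:pair wise optimal delta def} fixes; the argument is symmetric) equals
\[
\Pr[\rX+\mu\in\cS^*]-e^{\varepsilon}\Pr[\rX\in\cS^*],
\]
where $\cS^*=\{x: \ln f_{\rX+\mu}(x)-\ln f_\rX(x)>\varepsilon\}$ is the likelihood-ratio set. The $\max\{0,\cdot\}$ comes from the $(\cdot)_+$ in the definition.

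\textbf{Step 1 (log-likelihood ratio in radial coordinates).} By \Cref{thm: radial to spherical density transformation} and \Cref{tab:radial-to-spherical-generators}, $f_\rX(x)\propto \twoNorm{x}^{\alpha+1-T}e^{-\beta\twoNorm{x}^p}$. For a point written as $x=rU+\mu$, we have $\twoNorm{x-\mu}=r$ and $\twoNorm{x}^2=r^2+2srw+s^2$ with $w=\langle\mu,U\rangle/s$. Hence
\[
\ln\frac{f_\rX(x-\mu)}{f_\rX(x)}
=\frac{\alpha+1-T}{2}\ln\Bigl(1+\frac{2sw}{r}+\frac{s^2}{r^2}\Bigr)+\beta\bigl[r^p-(r^2+2swr+s^2)^{p/2}\bigr]
=:h_r(w).
\]
This is exactly the right-hand side in the statement. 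Similarly, for $x=rU$ drawn from $\rX$, the loss is expressed in terms of $\twoNorm{x+\mu}$ or $\twoNorm{x-\mu}$. Because $U\stackrel{d}{=}-U$, the distribution of $W$ is symmetric, so this is again $h_r$ evaluated at $\pm w$. This reflection is what produces the two arguments $\pm\varepsilon$ in $w^*(r,\cdot)$.

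\textbf{Step 2 (monotonicity in $w$; the main obstacle).} We must show that $w\mapsto h_r(w)$ is strictly monotone (decreasing) on $[-1,1]$ for each fixed $r$, so that each event $\{h_r(W)>\pm\varepsilon\}$ is a one-sided interval in $W$. The quantity $q=r^2+2srw+s^2$ is increasing in $w$. The map $q\mapsto \frac{\alpha+1-T}{2}\ln q-\beta q^{p/2}$ is decreasing precisely because $\alpha\le T-1$ and $\beta,p>0$; this is where the hypothesis $\alpha\in(-1,T-1]$ is used. Strict monotonicity together with continuity gives a unique crossing point.

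There is one technicality to handle: the value $y$ may lie outside the range $h_r([-1,1])$, and $q=0$ can occur when $r=s$, $w=-1$. In those cases we define $w^*(r,y)$ by clamping to $\pm1$, so that $F_W(w^*)$ takes the correct value $0$ or $1$. This uses that $F_W$ has a density (\Cref{prop: pdf of costheta}), so boundary and null events contribute nothing. We expect this bookkeeping, together with matching the signs and the direction of the reflection, to be the fiddliest part.

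\textbf{Step 3 (conditioning and integrating).} Conditioning on $R=r$, by independence of $R$ and $U$ we get $\Pr[\text{loss}>\varepsilon\mid R=r]=1-F_W(w^*(r,-\varepsilon))$ for the first distribution, after reflecting $W\to-W$ and using $1-F_W(-t)=F_W(t)$ as needed. For the second distribution the conditional probability is $F_W(w^*(r,\varepsilon))$. Integrating each against $f_R(r)\,\mathrm{d}r$ over $(0,\infty)$ by Fubini yields the two integrals in the statement. Taking the positive part then gives the formula for $\delta_\mu^*(\varepsilon)$.
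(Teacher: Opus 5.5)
Your route is the same as the paper's. You write $\delta^*_\mu(\varepsilon)$ through the privacy-loss variable, reduce the loss to a function of $(R,W)$ by spherical symmetry, use $W\stackrel{d}{=}-W$, get strict monotonicity in $w$ from $\alpha\le T-1$, and then condition on $R$. The two directions of the divergence agree because $\rX\stackrel{d}{=}-\rX$; that deserves one line. Your clamping convention for $w^*(r,y)$ is more careful than the paper's proof, which asserts that a root in $[-1,1]$ always exists. That is false in general: for Gaussian noise $h_r(w)=-\beta(2swr+s^2)$ is bounded for fixed $r$. The paper only adopts a $\sup$-convention later, in the oracle proof.

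However, Step 1 contains a sign error, and Step 3 silently papers over it.

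\textbf{The error.} For $x=rU+\mu$,
$\ln f_\rX(x-\mu)-\ln f_\rX(x)=(\alpha+1-T)\ln r-\beta r^p-\tfrac{\alpha+1-T}{2}\ln q+\beta q^{p/2}=-h_r(w)$, not $h_r(w)$. The correct identity is $h_r(w)=\ln\bigl(f_\rX(rU+\mu)/f_\rX(rU)\bigr)$, which is the paper's $\ell(r,w)=L^{\mathsf{neg}}$. A Gaussian sanity check confirms this: the log-likelihood ratio of a sample from $\rX+\mu$ must have mean $\beta s^2>0$ (the KL divergence), whereas $\mathbb{E}[h_r(W)]=-\beta s^2$.

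\textbf{The fix.} With the sign corrected, the loss $\ln(f_{\rX+\mu}/f_\rX)$ equals $-h_r(W)$ under $\rX+\mu$ and $h_r(-W)\stackrel{d}{=}h_r(W)$ under $\rX$.
\begin{itemize}
\item The first conditional probability is $\Pr[h_r(W)<-\varepsilon\mid R=r]=1-F_W(w^*(r,-\varepsilon))$, with no reflection needed.
\item The reflection is used only in the second term, giving $F_W(w^*(r,\varepsilon))$.
\end{itemize}
So the $\pm\varepsilon$ comes from this sign, not from the reflection as you claim.

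\textbf{Why it matters.} With your sign, the first term would come out as $F_W(w^*(r,\varepsilon))$, identical to the second, and the formula would collapse to $0$. No use of $1-F_W(-t)=F_W(t)$ can repair this, because $h_r$ is not odd in $w$, so $w^*(r,-\varepsilon)\ne -w^*(r,\varepsilon)$.

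After this correction, your argument is complete.
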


Informally, the proof follows by observing that under the radial--directional decomposition $\rX=RU$, the privacy loss can be expressed as a function of two \emph{independent} one-dimensional random variables: the radial component $R$ and the directional statistic $W$ (the cosine of the angle between $U$ and $\mu$). Moreover, monotonicity in $W$ converts the resulting two-dimensional inequality involving $(R,W)$ into a one-dimensional threshold condition on $W$ given a fixed $R$, allowing us to integrate over $R$ and obtain the claimed one-dimensional representation. 

\Cref{prop:sgg radial profile is monotone} shows that for $\alpha\leq T-1$ the SGG density is non-increasing with the radius: points farther from the origin receive no larger probability density.\footnote{We note this concerns $f_{\rX}(x)$ over $\Real^T$, not the radial density $f_R(r)$: due to the surface-area factor $r^{-(T-1)}$ in $f_{\rX}(x)$, $f_R(r)$ need not be decreasing even if $f_{\rX}(x)$ decreases with $r = \twoNorm{x}$.}

\begin{proposition}[Proof is in~\Cref{proof for prop:sgg radial profile is monotone}]
\label{prop:sgg radial profile is monotone}
Fix a dimension $T\ge 2$ and parameters $\alpha\in(-1,T-1]$, $\beta\ge 0$, $p>0$. Let $\rX\sim \sgg{\alpha}{\beta}{p}$, and let $g:[0,\infty)\to[0,\infty)$ denote the density generator such that $f_{\rX}(x)=g(\twoNorm{x}^2)$. Then $y\mapsto g(y^2)$ is non-increasing for $y>0$.
\end{proposition}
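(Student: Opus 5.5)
Plan: write the density generator explicitly with \Cref{thm: radial to spherical density transformation} and then check its sign of derivative. Plugging the generalized-gamma radial density of \Cref{def: generalized gamma} into that theorem gives, for $y=\twoNorm{x}>0$,
\begin{align*}
    g(y^2) = C_{T,\alpha,\beta,p}\, y^{\alpha+1-T}\rExp{-\beta y^p},
\end{align*}
with $C_{T,\alpha,\beta,p}=\frac{\Gamma(T/2)}{2\pi^{T/2}}\cdot\frac{p\beta^{(\alpha+1)/p}}{\Gamma((\alpha+1)/p)}>0$. This matches the first row of \Cref{tab:radial-to-spherical-generators}. Because the constant is positive, it suffices to show that $h(y)\defin y^{\alpha+1-T}\rExp{-\beta y^p}$ is non-increasing on $(0,\infty)$.

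The key step is to write $h$ as a product of two positive factors, each non-increasing on $(0,\infty)$.
\begin{itemize}
    \item The power factor $y^{\alpha+1-T}$ has exponent $\alpha+1-T\le 0$ because $\alpha\le T-1$. It is therefore non-increasing, and it is constant when $\alpha=T-1$.
    \item The exponential factor $\rExp{-\beta y^p}$ is non-increasing because $\beta\ge0$ and $p>0$ make $y\mapsto\beta y^p$ non-decreasing.
\end{itemize}
A product of positive non-increasing functions is non-increasing, which gives the claim. Equivalently, one can check the derivative of $\ln h(y)$:
\begin{align*}
    \frac{\mathrm{d}}{\mathrm{d}y}\ln h(y) = \frac{\alpha+1-T}{y} - \beta p\, y^{p-1} \le 0.
\end{align*}

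The only point needing care is the edge case $\beta=0$, which the proposition allows but \Cref{def: generalized gamma} excludes because the normalization degenerates. In that case I would read the statement as a claim about the unnormalized generator shape $y^{\alpha+1-T}$ (the last column of the table), and the monotonicity argument above still applies unchanged. Apart from this bookkeeping, nothing in the argument should be an obstacle.
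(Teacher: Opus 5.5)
Your proposal is correct and follows essentially the same route as the paper. The paper also uses \Cref{thm: radial to spherical density transformation} to write $g(y^2)=C_{\alpha,\beta,p,T}\,y^{\alpha+1-T}\exp(-\beta y^p)$, then shows $\frac{\mathrm{d}}{\mathrm{d}y}g(y^2)=g(y^2)\bigl(\frac{\alpha+1-T}{y}-\beta p\,y^{p-1}\bigr)\le 0$, which is exactly your log-derivative check; your product-of-monotone-factors argument is an equivalent derivative-free version, and your remark on $\beta=0$ correctly flags a degenerate normalization that the paper passes over without comment.
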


Building on this, \Cref{lem:monotone-shift-sgg} shows that, for the neighboring pair $(\rX,\rX+\mu)$, the optimal delta $\delta^*_\mu(\varepsilon)$ is monotone in the shift magnitude $\twoNorm{\mu}$. In particular, larger shifts are easier to distinguish and hence induce larger divergence, so the worst-case divergence over all shifts is attained at the largest feasible $\twoNorm{\mu}$.

\begin{lemma}[Proof is in~\Cref{proof for lem:monotone-shift-sgg}]
\label{lem:monotone-shift-sgg}
Fix $T\ge 2$ and parameters $\alpha\in(-1,T-1]$, $\beta>0$, and $p>0$, and let $\rX\sim\sgg{\alpha}{\beta}{p}$. For any shift vector $\mu\in\Real^T$, let $\delta^*_\mu(\varepsilon)$ denote the optimal $\delta$ for the neighboring pair $(\rX,\rX+\mu)$ at point $\varepsilon$. Then for any $\mu_1,\mu_2\in\Real^T$ with $\twoNorm{\mu_1}\le \twoNorm{\mu_2}$ and any $\varepsilon>0$, $\delta^*_{\mu_1}(\varepsilon) \leq \delta^*_{\mu_2}(\varepsilon)$.
\end{lemma}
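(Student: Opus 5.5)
We want to show that for a spherically symmetric noise $\rX$ whose density $f_{\rX}(x)=h(\twoNorm{x})$ has $h$ non-increasing (\Cref{prop:sgg radial profile is monotone}), the hockey-stick divergence $\HS{\rX+\mu}{\rX}$ is non-decreasing in $\twoNorm{\mu}$. The plan has three steps.

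\textbf{Step 1 (reduction to collinear shifts).} By spherical symmetry, $\HS{\rX+\mu}{\rX}$ depends only on $\twoNorm{\mu}$. Indeed, for an orthogonal $O$ with $O\mu_1=\mu'$ we have $O\rX\overset{d}{=}\rX$, and the divergence is invariant under the bijection $x\mapsto Ox$. So it suffices to fix a unit vector $e$ and show that $\phi(t)\defin{}\HS{\rX+te}{\rX}$ is non-decreasing for $t\geq 0$.

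\textbf{Step 2 (mixture representation).} The key step is to write the smaller shift as a mixture of shifts of the larger norm. Fix $0\le t_1\le t_2$. Any vector $v$ with $\twoNorm{v}=t_1$ is a convex combination of two vectors of norm $t_2$. Concretely, put
\[
a=\sqrt{t_2^2-t_1^2}, \qquad w_\pm=t_1e\pm a e^\perp,
\]
where $e^\perp$ is a unit vector orthogonal to $e$; this uses $T\geq 2$. Then $t_1e=\tfrac12(w_++w_-)$.

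We could try convexity of $\HS{\cdot}{\cdot}$ in its first argument, but the law of $\rX+t_1e$ is a translate, not a mixture of the laws of $\rX+w_\pm$. So instead we would use the monotone decreasing radial profile directly.

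\textbf{Step 3 (direct route via the optimal test).} The divergence has the closed form
\[
\HS{\rX+\mu}{\rX}=\int\bigl(h(\twoNorm{x-\mu})-e^\varepsilon h(\twoNorm{x})\bigr)_+\,dx .
\]
Equivalently, it is $\sup_S \pr{\rX\in S-\mu}-e^\varepsilon\pr{\rX\in S}$. We would follow the standard approach for symmetric unimodal (Anderson-type) families: the pair $(\rX,\rX+\mu)$ is a \emph{post-processing} of the pair $(\rX,\rX+\mu')$ whenever $\twoNorm{\mu}\le\twoNorm{\mu'}$, which gives the inequality by the data-processing inequality for hockey-stick divergence.

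Reduced to the 1D marginal along $e$ this needs more than unimodality, so we fall back to a direct calculus argument. Differentiate
\[
\phi(t)=\int_{A_t}\bigl(h(\twoNorm{x-te})-e^\varepsilon h(\twoNorm x)\bigr)\,dx ,
\]
where $A_t$ is the positivity region. By an envelope argument (the boundary term vanishes since the integrand is zero on $\partial A_t$),
\[
\phi'(t)=\int_{A_t}\partial_t h(\twoNorm{x-te})\,dx .
\]
Substitute $y=x-te$. Since $h$ is non-increasing, $\partial_t h(\twoNorm{y})$ has the sign of $\innerProd{y}{e}$, i.e. it is nonnegative for $\innerProd{y}{e}\ge0$. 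We then pair each $y$ with its reflection $y^\ast$ across the hyperplane $\{\innerProd{y}{e}=0\}$. The claim needed is that the region $A_t-te$ contains the reflection of each of its points with negative $e$-component. This holds because $\twoNorm{y+te}$ is smaller at the reflected point, and $h$ is decreasing, so the reflected point satisfies the positivity condition as well. The derivative magnitudes at paired points are equal, so the contributions cancel in pairs and what remains is nonnegative, giving $\phi'\ge0$.

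\textbf{Main obstacle.} The expected difficulty is rigor in this last step: $h$ may fail to be differentiable, for instance for $\alpha<T-1$ it has a singularity at $0$. We would handle this by approximating $h$ with smooth non-increasing profiles, or by replacing the derivative with a finite-difference version of the reflection argument that compares $t_1$ and $t_2$ directly.
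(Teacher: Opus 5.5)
Your argument is correct in substance, but it takes a different route from the paper. The paper writes $\delta^*_\mu(\varepsilon)=1-\int\min\{f(x),e^{\varepsilon}f(x-\mu)\}\,dx$ and applies the layer-cake identity $\min\{a,b\}=\int_0^\infty\mathbf{1}\{a\ge t\}\mathbf{1}\{b\ge t\}\,dt$. It then observes that every superlevel set of a radially non-increasing density is a centered ball. It concludes from the quoted (unproved) fact that the volume of $B(0,r_1)\cap B(se_1,r_2)$ is non-increasing in $s$. You instead differentiate in the shift length and pair points by reflection across $\{\innerProd{y}{e}=0\}$, a two-point/polarization argument. The two routes share the same geometric core: your reflection step is the idea that underlies the ball-overlap fact the paper takes for granted, so your proof is more self-contained on the geometry. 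The paper's route, however, needs no differentiability at all and handles singular radial profiles for free, whereas yours needs a regularization step.

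There are two things to fix.

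First, the justification of your reflection claim has the inequality backwards. For $y\in A_t-te$ with $\innerProd{y}{e}<0$ and $t>0$, one has $\twoNorm{y^\ast+te}^2-\twoNorm{y+te}^2=-4t\innerProd{y}{e}>0$. So $\twoNorm{y+te}$ is \emph{larger} at the reflected point, not smaller. That is exactly what gives $e^{\varepsilon}h(\twoNorm{y^\ast+te})\le e^{\varepsilon}h(\twoNorm{y+te})<h(\twoNorm{y})=h(\twoNorm{y^\ast})$, i.e.\ $y^\ast\in A_t-te$.

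Second, the regularization you flag is genuinely necessary, not cosmetic. For $\alpha\in(-1,0]$ (e.g.\ the R1SMG case $\alpha=0$) one has $|h'(r)|r^{T-1}\asymp r^{\alpha-1}$ near $0$. Moreover, $A_t-te$ contains a neighborhood of $y=0$, because $h(\twoNorm{y})\to\infty$ there while $e^{\varepsilon}h(\twoNorm{y+te})\to e^{\varepsilon}h(t)<\infty$. Hence your formula for $\phi'(t)$ is not even absolutely convergent, and it cannot be justified by dominated convergence.

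A clean fix is to truncate:
\begin{enumerate}
\item Set $h_n=\min\{h,n\}$. It is still non-increasing and Lipschitz, and $\int_{\Real^T}|h_n'(\twoNorm{y})|\,dy<\infty$; normalization plays no role.
\item Work in integrated form, $\phi_n(t_2)-\phi_n(t_1)=\int_{t_1}^{t_2}\int\mathbf{1}\{F_s(x)>0\}\,\partial_sF_s(x)\,dx\,ds$. This is valid because $s\mapsto(F_s(x))_+$ is Lipschitz and Fubini applies.
\item Apply your reflection argument to the inner integral to get $\phi_n(t_2)\ge\phi_n(t_1)$.
\item Let $n\to\infty$, using $h_n\to h$ in $L^1(\Real^T)$ and the fact that $(\cdot)_+$ is $1$-Lipschitz.
\end{enumerate}

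Finally, you bound $\HS{\rX+\mu}{\rX}$, while the lemma concerns the pair $(\rX,\rX+\mu)$. These coincide because $\rX\overset{d}{=}-\rX$, but you should say so. Your Step 2 and the post-processing remark play no role in the argument and can be dropped.
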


\Cref{lem:sgg-delta-1d-integral}, together with the monotonicity result in \Cref{lem:monotone-shift-sgg}, indicates that calibrating an additive SGG mechanism for $\ell_2$-sensitive real-vector queries is, a numerical task: given parameters $(\alpha,p)$ and sensitivity $s$, one can evaluate $\delta^*_s(\varepsilon)$ as a function of the noise multiplier $\beta$, and then choose $\beta$ to meet a target $(\varepsilon,\delta)$ guarantee.

However, a naïve implementation of \Cref{lem:sgg-delta-1d-integral} may incur numerical errors (e.g., from truncating an improper integral). Since our goal is to have a \emph{provable} differentially private primitive, we abstract these numerical subtleties through the oracle interface in \Cref{def:sgg delta upper bound oracle}. Concretely, given $(\alpha,p,\beta,\varepsilon)$ and a shift magnitude $s$, the oracle outputs an $\eta$-tight \emph{upper bound} on $\delta^*_\mu(\varepsilon)$, for all shifts $\mu$ with $\twoNorm{\mu} \leq s$. \Cref{lemma: existence of tight upper-bound oracle} says that such an oracle exists. 

\begin{definition}
\label{def:sgg delta upper bound oracle}
Fix a dimension $T\ge 2$ and parameters $\alpha\in(-1,T-1]$, $\beta>0$, $p>0$, and $\varepsilon \geq 0$. Let $\rX\sim\sgg{\alpha}{\beta}{p}$. Fix an $\ell_2$-sensitivity bound $s\ge 0$ and a target slack $\eta\ge 0$. An algorithm
    $\Oracle_{\mathsf{SGG}}:\;(\alpha,p,\beta,\varepsilon, s)\mapsto \widehat{\delta}$
is called an \emph{$\eta$-tight upper-bound oracle} (for $\delta^*_\mu(\varepsilon)$) if, for every $\varepsilon\ge 0$ and every $\mu\in\Real^T$ with $\twoNorm{\mu}\le s$, it outputs $\widehat{\delta}$ satisfying
\begin{align*}
\delta^*_\mu(\varepsilon) \leq \Oracle_{\mathsf{SGG}}(\alpha,p,\beta,\varepsilon,\twoNorm{\mu}) \leq \delta^*_\mu(\varepsilon)+\eta.
\end{align*}
\end{definition}

\begin{lemma}[Proof is in~\Cref{proof for lemma: existence of tight upper-bound oracle}]
    \label{lemma: existence of tight upper-bound oracle}
    There exists a (determinstic) \emph{$\eta$-tight upper-bound oracle}. 
\end{lemma}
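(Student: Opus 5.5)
The plan is to build the oracle directly from the one-dimensional integral representation of \Cref{lem:sgg-delta-1d-integral}, evaluated with rigorous (interval-arithmetic) numerics and then padded by a certified error bound. By \Cref{lem:monotone-shift-sgg}, it suffices to produce an $\eta$-tight upper bound for the single shift magnitude $\twoNorm{\mu}$ that is passed in; spherical symmetry means the answer depends on $\mu$ only through this norm. Write
\[
\delta^*_\mu(\varepsilon)=\max\{0,\,A-e^\varepsilon B\},
\]
where $A=\int_0^\infty f_R(r)\bigl(1-F_W(w^*(r,-\varepsilon))\bigr)\,\mathrm{d}r$ and $B=\int_0^\infty f_R(r)F_W(w^*(r,\varepsilon))\,\mathrm{d}r$. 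Since $x\mapsto\max\{0,x\}$ is $1$-Lipschitz and monotone, the task reduces to finding an upper bound $\widehat A$ on $A$ and a lower bound $\widehat B$ on $B$ with
\[
\widehat A-A\le\eta/2 \quad\text{and}\quad e^\varepsilon(B-\widehat B)\le\eta/2 .
\]
The oracle then outputs $\max\{0,\widehat A-e^\varepsilon\widehat B\}$.

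The steps are as follows.

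\textbf{Truncation.} I would truncate the improper integral to $[a,b]$ with $0<a<b<\infty$. The integrands lie in $[0,f_R(r)]$, so the discarded mass is at most $\Pr[R<a]+\Pr[R>b]$. Both tails are explicit through the regularized incomplete gamma function, since $\beta R^p\sim\mathrm{Gamma}((\alpha+1)/p,1)$. One can therefore pick $a,b$ so that each tail is at most $\eta/(8e^{\varepsilon})$. For $A$ we add this tail mass to the upper bound; for $B$ we simply drop it, which keeps $\widehat B$ a valid lower bound.

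\textbf{Regularity of the integrand.} On $[a,b]$ I need the integrand to be monotone or Lipschitz in $r$. The map $w\mapsto$ (privacy loss) is strictly monotone; this is the monotonicity used in \Cref{lem:sgg-delta-1d-integral}, and it is where $\alpha\le T-1$ and \Cref{prop:sgg radial profile is monotone} enter. Hence $w^*(r,y)$ can be computed to any precision by bisection on $[-1,1]$, with certified brackets. $F_W$ is the explicit regularized incomplete beta CDF of \Cref{prop: pdf of costheta}, which is continuous and monotone. Given an interval bracket $[w_-,w_+]\ni w^*(r,y)$, monotonicity of $F_W$ gives rigorous bounds $F_W(w_-)\le F_W(w^*)\le F_W(w_+)$.

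\textbf{Riemann sums.} I would partition $[a,b]$ into $N$ cells. On each cell I bound $f_R$ above and below; $f_R$ is unimodal with an explicit mode, so this is easy. For $w^*(\cdot,y)$ I would bound the cell image using its continuity. Summing the resulting upper and lower sums gives $\widehat A$ and $\widehat B$. The gap between the upper and lower sums tends to $0$ as $N\to\infty$ and the bisection precision increases, because the integrand is continuous on the compact interval $[a,b]$, hence uniformly continuous. The oracle runs this refinement until the certified gap is at most $\eta/4$, which must terminate. Together with the truncation error, this yields total slack at most $\eta$.

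\textbf{Main obstacle.} The delicate point is justifying that $r\mapsto w^*(r,y)$ is continuous on $[a,b]$, including when the solution is clipped at $\pm1$ (the unique solution in $[-1,1]$ is interpreted with clipping). It must also be controlled near $r\approx s$, where the logarithmic term $\ln(1+2sw/r+s^2/r^2)$ blows up at $w=-1$. I would handle this with the implicit function theorem on the strict-monotonicity region, plus a separate argument that clipping preserves continuity. Alternatively, I would sidestep continuity of $w^*$ altogether: I would bound the integrand on each cell via monotone enclosures of the two-variable loss function, evaluated at the cell endpoints and at $w=\pm1$. This only needs joint continuity of the explicit loss expression away from the single singular point $(r,w)=(s,-1)$. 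That point is excluded by placing a cell boundary at $r=s$ and absorbing a tiny neighborhood into the slack, since the neighborhood has small $f_R$-mass.
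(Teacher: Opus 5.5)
Your proposal is correct and follows essentially the paper's route. Both arguments use the one-dimensional representation $\delta^*_\mu(\varepsilon)=\max\{0,A-e^{\varepsilon}B\}$ and truncate using incomplete-gamma tails of $Z=\beta R^p\sim\mathrm{Gamma}((\alpha+1)/p,1)$. On each bin, they obtain certified brackets on $w^*(r,\pm\varepsilon)$ by bisection in $w$ against interval enclosures of the loss over the whole bin, then push these through the monotone $F_W$; this is your second option and is exactly the paper's \textsc{EncloseG}/\textsc{EnclosePhi}. Refinement is justified by uniform continuity, and the output is $\max\{0,\widehat A-e^{\varepsilon}\widehat B\}$, with the tail mass added to $A$ and dropped from $B$.

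Two small cautions on your sketch. Continuity of $w^*$ by itself does not certify bounds on a bin. Also, because the loss need not be monotone in $r$, the enclosure must be built componentwise rather than from endpoint values, as in \textsc{EnclosePhi}.

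The remaining differences are cosmetic. The paper weights each bin by its exact Gamma probability mass, computed from the regularized incomplete gamma function, so it never has to bound $f_R$. It also truncates only the upper tail and gives bins near $0$ the trivial enclosure $[0,1]$.

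Your explicit refinement of the bisection precision together with the mesh is something the paper's termination argument actually needs but leaves implicit. With a fixed tolerance $\tau_w$, each bin's bracket keeps width about $F_W(w^{\mathsf{lo}}+\tau_w)-F_W(w^{\mathsf{lo}})$, however fine the partition becomes.
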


\Cref{lemma: existence of tight upper-bound oracle} follows by an explicit construction. 
The oracle reduces the one-dimensional integrals defining $\delta^*_\mu(\varepsilon)$ to expectations under a standard $\Gamma(k,1)$ measure via the change of variables $Z=\beta R^p$ (with $k=(\alpha+1)/p$), so that Gamma tail probabilities can be computed in closed form. It then (i) truncates the integral at a data-independent threshold $z_{\max}$ chosen to make the discarded tail contribute at most $\eta_{\mathsf{tail}}$, and (ii) estimates the remaining trucated expectations using a standard binning scheme.

We then in~\Cref{proof for thm:sgg-mech-privacy} provide a parameterized differentially private mechanism $\Mech^{\alpha,p}_{\mathsf{SGG}}$ (cf.~\Cref{alg:param-sgg-mech}) for vector-valued queries. For fixed $(\alpha,p)$, the mechanism uses an intuitive bracketing-and-bisection search via $\Oracle_{\mathsf{SGG}}$ to calibrate the noise: it finds (up to tolerance $\tau$) the largest feasible rate parameter $\beta$ --- equivalently, the smallest noise magnitude under the parameterization --- subject to the target $(\varepsilon,\delta)$ constraint, and then outputs the noisy answer $q(\DB)+\rX$ with $\rX\sim\sgg{\alpha}{\beta}{p}$. %

\subsection{Utility-Optimized SGG Mechanism}
\label{sec:SGG:subsec:full mech}

\Cref{alg:param-sgg-mech} shows that for any fixed $(\alpha,p)$, we can calibrate the rate parameter $\beta$ so that $\Mech^{\alpha,p}_{\mathsf{SGG}}$ satisfies the target $(\varepsilon,\delta)$ guarantee.  However, the same privacy level can typically be achieved by many different pairs $(\alpha,p)$, and these choices can lead to %
different accuracy. %
This motivates selecting $(\alpha,p)$ \emph{to minimize a prescribed error metric} (MSE) while preserving the same privacy constraint. %

\subsubsection{Finding Optimal Parameters}
We next provide Algorithm~\ref{alg:sgg-opt-err} (c.f.~\Cref{proof for thm:sgg-mech-privacy}), that numerically tunes the Spherical Generalized Gamma Mechanism by searching for parameters $(\alpha,\beta,p)$ that minimize the mean-squared error (MSE) $c$ subject to a target $(\varepsilon,\delta)$-privacy constraint.
It proceeds by running a binary search over MSE values $c$. At each binary-search step, it solves the two-dimensional problem of selecting $\alpha^*,p^*$ to minimize the achieved optimal delta $\delta^\star$ under the fixed noise budget $c$. For each candidate $(\alpha,p)$, Algorithm~\ref{alg:param-sgg-mech} determines whether there exists a noise multiplier $\beta^\star$ that satisfies the MSE constraint $c$ and returns the corresponding $\delta^\star$; we implement the $(\alpha,p)$ search using standard Bayesian optimization.
If $\delta^*\leq \delta$, the target parameter, then the noise is decreased in the next iteration; otherwise it is increased.
The algorithm uses in the mechanism output the parameters $\alpha^*,\beta^*,p^*$ for the distribution of radial component $R$ that minimize the MSE $c$ subject to $\delta^*\leq \delta$.

\subsection{Advantage over Gaussian and $\ell_2$ Mechanisms}
Based on Algorithm 5, we use the SGG family as a search space to identify selected low-dimensional regimes where an SGG mechanism outperforms both the Gaussian and $\ell_2$ mechanisms. The purpose of this experiment is to demonstrate that there is a low dimensional regimens that we could have improvement over both gaussian and $\ell_2$; it is not intended to show that SGG uniformly or systematically dominates Gaussian and $\ell_2$ across privacy parameters or dimensions.

We know that the $\ell_2$ mechanism (high-dimensional version of Laplace) is best suited for $\delta_{\ell_2}\to0$, while the Gaussian mechanism is suited for $\delta_G > 0$.
Thus, we hypothesize that the SGG Mechanism will have the best advantage at some $\delta^*$ that is not too small or too big, for fixed privacy parameter $\varepsilon >0$, dimension $T$, and sensitivity $s$.
Based on this intuition, we develop Algorithm~\ref{alg:opt-adv} (in~\Cref{sec:find_adv}) that takes in $\varepsilon, T, s$ and performs a binary search (using Algorithm~\ref{alg:sgg-opt-err}) to find the $\delta^*$ at which the MSE of the SGG mechanism has the biggest advantage compared to the smaller of the MSEs of the Gaussian and $\ell_2$ mechanisms.

Using Algorithm~\ref{alg:opt-adv}, we identify realistic settings in which the optimal SGG distribution is much different from the Gaussian and $\ell_2$ mechanisms ($p$ far from $1$ and $2$), and has MSE up to 15\% lower than them.
In Figure~\ref{fig:sgg-mse-comparison} we demonstrate our findings.
Unsurprisingly given Theorem~\ref{thm:additive-no-better-than-gaussian}, the improvements are larger for small dimension $T$ and get smaller as $T$ increases.

\begin{figure}[t]
    \centering
    \includegraphics[width=\columnwidth]{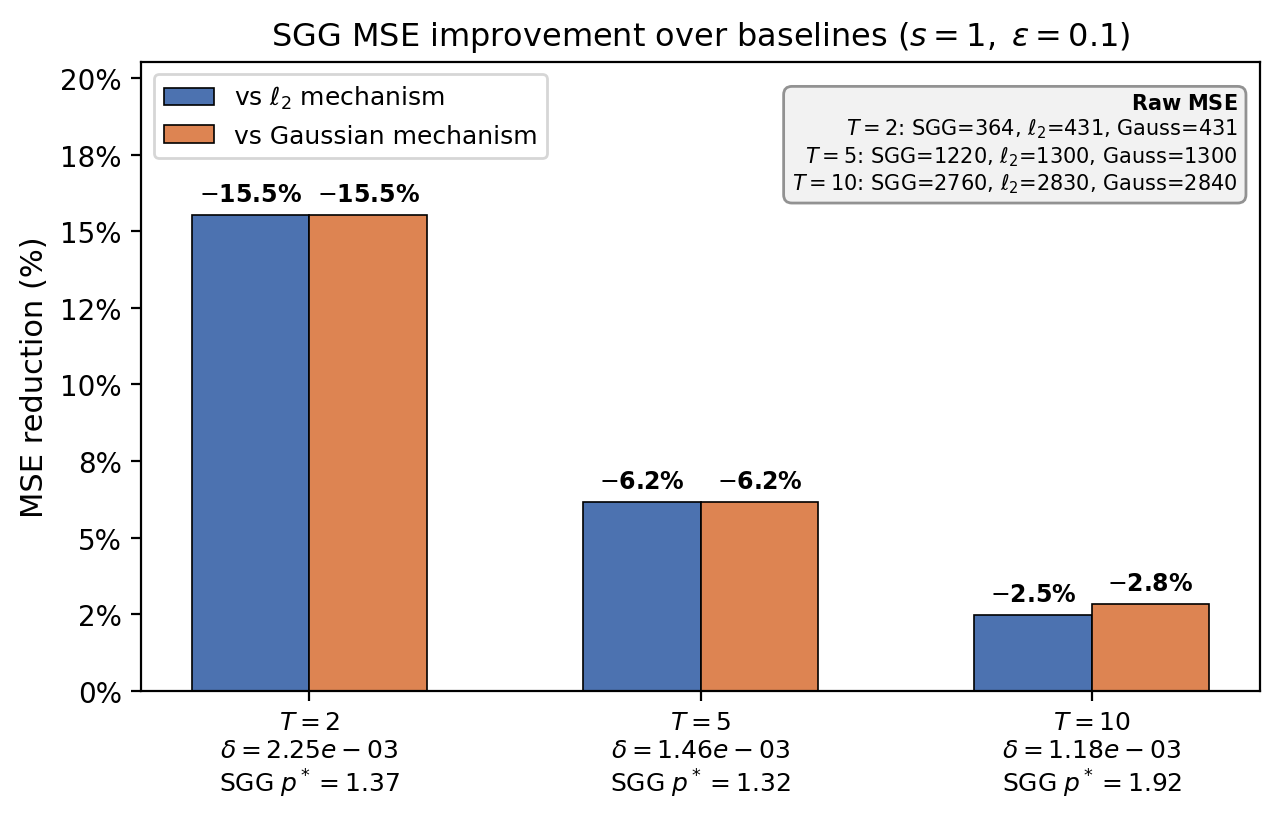}
    \caption{MSE reduction of the SGG mechanism over the $\ell_2$ and Gaussian baselines with $\varepsilon=0.1$ and sensitivity $s=1$. Each group shows the dimension $T$, the $\delta$ at which this advantage occurs, and the optimal SGG shape parameter $p^*$.}
    \label{fig:sgg-mse-comparison}
    \vspace{-.8cm}
\end{figure}

\paragraph{Scope of the low-dimensional improvement.}
The gains in \Cref{fig:sgg-mse-comparison} should be interpreted as a principled existence result rather than as a uniform recommendation to replace the Gaussian mechanism in concrete dimensional mechanism design. Our study does not show strong evidence that non-Gaussian SGG mechanisms consistently yield significant improvements over Gaussian across all dimensions and privacy parameters. In our experiments, the advantage is concentrated in selected low-dimensional regimes and shrinks rapidly as the dimension increases. Thus, SGG is best viewed as a useful design space for finite-dimensional additive-noise mechanisms, while the Gaussian mechanism remains the asymptotically justified default in high dimensions.

This distinction is also important for interpreting the practical implications of the mechanism-level comparison. When the additive vector release itself is the private primitive, the MSE reduction in \Cref{fig:sgg-mse-comparison} directly corresponds to improved accuracy in the stated parameter regimes. Examples include low-dimensional mean estimation and low-dimensional vector releases under composition. By contrast, when the additive noise mechanism is used as part in a larger algorithmic design, the same mechanism-level gain need not automatically imply an end-to-end improvement. For high-dimensional private learning methods such as DP-SGD, additional work is needed to determine whether any low-dimensional SGG gains translate into improved training utility

\subsection{Privacy Accounting of SGG Mechanisms}
\label{sec:sgg-accounting}

Since additive-noise primitives are often invoked repeatedly in DP computations, we study tight composed privacy for multi-call SGG mechanisms. The key observation is that the privacy loss random variables (PRVs) add under independent composition. Thus, once the single-step PRV distribution of an SGG mechanism can be evaluated, the composed privacy profile can be computed by convolution.

\begin{proposition}[Composition accounting for SGG mechanisms]
\label{prop:sgg-composition-accounting}
Consider $k$ independent invocations of SGG additive mechanisms, each calibrated for an $\ell_2$-sensitivity bound. Let $Z_1,\ldots,Z_k$ denote their worst-case one-step PRVs, as characterized in \Cref{lem:sgg-prv-cdf-1d}. Then the composed privacy profile at level $\varepsilon_{\rm tot}$ is
\begin{align*}
    \delta_k(\varepsilon_{\rm tot}) =
    \Ex{\left(
        1-\exp\left(\varepsilon_{\rm tot}-\sum_{i=1}^k Z_i\right)
    \right)_+}.
\end{align*}
Consequently, any upper bound $\widehat{\delta}_k(\varepsilon_{\rm tot})\geq \delta_k(\varepsilon_{\rm tot})$ provides that the $k$-fold composition is $(\varepsilon_{\rm tot},\widehat{\delta}_k(\varepsilon_{\rm tot}))$-DP.
\end{proposition}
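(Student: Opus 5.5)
The proof follows the standard privacy-loss-random-variable (PRV) accounting framework, adapted to the worst-case pair of each SGG step. Fix the $i$-th invocation with noise $\rX^{(i)}\sim\sgg{\alpha_i}{\beta_i}{p_i}$ and sensitivity $s_i$. By \Cref{lem:monotone-shift-sgg} and spherical symmetry, the worst-case neighboring pair is $(\rX^{(i)}+\mu_i,\rX^{(i)})$ with $\twoNorm{\mu_i}=s_i$. The direction of $\mu_i$ is irrelevant, since rotating $\mu_i$ and conjugating by an orthogonal map leaves the law of $\rX^{(i)}$ invariant. Let $P_i$ and $Q_i$ denote the two output laws. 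Define $Z_i=\ln\frac{dP_i}{dQ_i}(O_i)$ with $O_i\sim P_i$. This is the one-step PRV whose CDF \Cref{lem:sgg-prv-cdf-1d} reduces to a one-dimensional integral over $(R,W)$, using the same threshold structure as in \Cref{lem:sgg-delta-1d-integral}.

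The first step is the single-step identity $\HS{P}{Q}=\mathbb{E}_{O\sim P}\bigl[(1-e^{\varepsilon-Z})_+\bigr]$. The supremum in the hockey-stick divergence is attained on the set $\{dP/dQ>e^\varepsilon\}$. This gives
\[
\int (dP-e^\varepsilon dQ)_+=\int\Bigl(1-e^\varepsilon \tfrac{dQ}{dP}\Bigr)_+dP .
\]
Absolute continuity holds because SGG densities are positive on $\Real^T\setminus\{0\}$, so $P$ and $Q$ are mutually absolutely continuous.

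The second step is composition. For $k$ independent invocations (possibly adaptive, with each step calibrated to its own sensitivity bound), the joint output law on neighboring datasets is a product $P^{(k)}$ versus $Q^{(k)}$ along the realized history. The log-likelihood ratio is $\sum_i \ln\frac{dP_i}{dQ_i}(O_i)$, which under $P^{(k)}$ has the law of $\sum_i Z_i$ with independent $Z_i$. Applying the first step to the product pair gives the displayed formula for the worst-case pair.

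The third step, which I expect to be the main obstacle, is justifying that the product of the per-step worst-case pairs is the worst case for the composition. For a generic mechanism this fails, so I will use dominating pairs. Each step's pair $(P_i,Q_i)$ dominates every neighboring pair for that step: by \Cref{lem:monotone-shift-sgg} and rotational invariance, every pair with $\twoNorm{\mu}\le s_i$ has hockey-stick divergence at most that of $(P_i,Q_i)$ at every $\varepsilon$, including after swapping the order (add versus remove). I will cite the known fact that dominating pairs compose under tensorization (Zhu, Dong and Wang, 2022). From this, $\delta_k(\varepsilon_{\rm tot})$ upper-bounds the composed privacy profile. Tightness, meaning equality, follows because the product pair is itself realized by neighboring datasets whose differences have norm $s_i$.

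The "consequently" part is then immediate. DP at $(\varepsilon_{\rm tot},\delta)$ holds whenever $\delta$ is at least the privacy profile, so any $\widehat\delta_k\ge\delta_k$ certifies $(\varepsilon_{\rm tot},\widehat\delta_k)$-DP. By symmetry, the remove/add direction is handled with the reversed pair.
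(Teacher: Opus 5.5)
Your proposal is correct and follows essentially the paper's route. The paper simply invokes PRV additivity under composition from \cite{GopiLW24} together with the single-step reduction of \Cref{lem:sgg-prv-cdf-1d}; your dominating-pair argument (via \Cref{lem:monotone-shift-sgg} and tensorization) spells out the worst-case reduction that the paper leaves implicit. One small caveat: the equality, as opposed to the upper bound that the ``consequently'' part actually needs, requires a single neighboring pair at which all $k$ queries attain their sensitivities $s_i$ simultaneously (e.g.\ repeated invocations of the same query), an assumption the paper's statement also makes implicitly.
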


The proof and the SGG-specific reduction of the PRV CDF to a one-dimensional radial integral are given in \Cref{app:accounting}. Algorithm~\ref{alg:sgg-fft-accountant} implements the resulting accountant by discretizing the single-step PRV distribution and computing the $k$-fold convolution by FFT. Since the $\ell_2$ mechanism is a special case of SGG, the same accounting framework also gives a tight accountant for composed $\ell_2$ mechanisms.

\section{Conclusion}
We show that as $T\to\infty$, the standard Gaussian mechanism is asymptotically best among all additive noise mechanisms.
Also, we provide a new family of SGG mechanisms and specific choices within this family that outperform both the Gaussian and recent $\ell_2$ mechanism in certain low-dimensional settings. These low-dimensional improvements should be viewed as an existence result: they demonstrate nontrivial finite-dimensional design space, but do not show that non-Gaussian SGG consistently provide meaningful improvement over all parameters, dimensions, or downstream applications.
Furthermore, we show tight composition of the SGG mechanisms under the $(\varepsilon, \delta)$-differential privacy, and answering an open question of~\cite{elltwo} regarding the $\ell_2$ mechanism. We view extending the studies beyond additive noise --- for example, to data-dependent noise and to non-additive mechanisms --- as an important direction for future work.

\section*{Acknowledgements}
We thank the reviewers for their careful reading and insightful comments, which helped us clarify the scope and presentation of the paper. We are also grateful to Hubert Chan for several helpful discussions on the direction and formulation of our main asymptotic optimality result in \Cref{thm:additive-no-better-than-gaussian}.

\section*{Disclaimer}
This paper was prepared for informational purposes in part by the Artificial Intelligence Research group of JPMorgan Chase \& Co. and its affiliates ("JP Morgan'') and is not a product of the Research Department of JP Morgan. JP Morgan makes no representation and warranty whatsoever and disclaims all liability, for the completeness, accuracy or reliability of the information contained herein. This document is not intended as investment research or investment advice, or a recommendation, offer or solicitation for the purchase or sale of any security, financial instrument, financial product or service, or to be used in any way for evaluating the merits of participating in any transaction, and shall not constitute a solicitation under any jurisdiction or to any person, if such solicitation under such jurisdiction or to such person would be unlawful.

© 2026 JPMorgan Chase \& Co. All rights reserved

\section*{Impact Statement}
Our work studies theoretical and algorithmic tools for analyzing and calibrating additive-noise mechanisms under $(\varepsilon,\delta)$-differential privacy. The primary intended impact is to enable practitioners to obtain \emph{tighter} privacy accounting and, consequently, to add less noise for a given privacy target or to certify stronger privacy for a fixed utility level. 

We do not anticipate negative societal impacts beyond the general risk that differential privacy may be misapplied in a manner that is inappropriate to the application, threat model, and sensitivity of the data. Additionally, we emphasize that our methods do not prescribe a universal ``safe'' choice of $(\varepsilon,\delta)$; users should select privacy parameters in consultation with domain requirements and established guidance, and should report these choices transparently. 

\bibliography{abbrev3,crypto,additional}
\bibliographystyle{icml2026}

\newpage
\appendix
\onecolumn

\section{Additional Preliminaries}\label{sec:additional-prelims}
\Cref{prop: pdf of costheta} characterizes the distribution of $\cos\Theta$, where $\Theta$ is the angle between the random direction vector $U$ and a fixed reference direction. This random variable appears repeatedly in our analysis. We write $\sphereD{T}$ for the uniform distribution on the $T$-dimensional unit sphere.

\begin{proposition}%
    \label{prop: pdf of costheta}
    Let random variable $\Theta$ be the angle between a fixed $T$-dimensional vector and a $T$-dimensional random vector $U \sim \sphereD{T}$. The density function $f_{\cos{\Theta}}$ for the random variable $\cos{\Theta}$ is with the form
    \begin{align*}
        f_{\cos{\Theta}}(u) = \frac{\Gamma(\frac{T}{2})}{\sqrt{\pi}\Gamma(\frac{T-1}{2})} (1 - u^2)^{\frac{T-3}{2}} \, , \quad u \in [-1, 1],
    \end{align*}
    and its CDF is with the form~\footnote{$I_x(a,b) = \frac{\Gamma(a+b)}{\Gamma(a)\Gamma(b)}\int_0^xt^{a-1}(1-t)^{b-1} \mathrm{d}t$ is the regularised incomplete beta function.}
    \begin{align*}
        \pr{\cos{\Theta} \leq u} = I_{\frac{u+1}{2}}(\frac{T-1}{2}, \frac{T-1}{2}).
    \end{align*}
\end{proposition}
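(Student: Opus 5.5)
The plan is to reduce to the single coordinate of $U$ along the reference direction, using the rotational invariance of the uniform measure on the sphere. Let $e$ be the fixed unit vector, after normalizing the fixed vector. Then $\cos\Theta=\innerProd{U}{e}$. Since $\sphereD{T}$ is invariant under orthogonal maps, I may take $e=e_1$, so that $\cos\Theta=U_1$. I then realize $U$ as $U=G/\twoNorm{G}$ with $G\sim\multiNormal{0}{I_T}$. This representation is itself a consequence of the spherical symmetry of the standard Gaussian and of \Cref{def: spherical distribution}.

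\textbf{Step 1: reduce to a Beta variable.} Because $U_1=G_1/\sqrt{G_1^2+S}$ with $S=\sum_{i\ge 2}G_i^2$, we have $U_1^2=A/(A+S)$. Here $A=G_1^2\sim\chi^2_1$ and $S\sim\chi^2_{T-1}$ are independent. The standard Gamma--Beta relation then gives $U_1^2\sim\mathrm{Beta}(\tfrac12,\tfrac{T-1}{2})$. Moreover, $U_1$ is symmetric about $0$, because $G_1\mapsto -G_1$ preserves the law of $G$.

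\textbf{Step 2: change variables to obtain the density.} For $u\in(-1,1)$, symmetry gives $f_{U_1}(u)=\tfrac12 f_{U_1^2}(u^2)\cdot 2|u|$. Substituting the Beta density yields
\[
f_{U_1}(u)=\frac{\Gamma(\frac T2)}{\Gamma(\frac12)\Gamma(\frac{T-1}{2})}\,(u^2)^{-1/2}(1-u^2)^{\frac{T-3}{2}}\,|u| .
\]
With $\Gamma(\tfrac12)=\sqrt\pi$ this is exactly the stated density. As an alternative check, one can compute the surface area of the slice $\{x_1=u\}$ of $\cS^{T-1}$. That slice is a $(T-2)$-sphere of radius $\sqrt{1-u^2}$, and the arclength factor is $(1-u^2)^{-1/2}$, so the density is proportional to $(1-u^2)^{(T-2)/2-1/2}$. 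The constant then follows from normalization via the Beta integral.

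\textbf{Step 3: obtain the CDF.} Substitute $u=2t-1$, so that $1-u^2=4t(1-t)$ and $\mathrm{d}u=2\,\mathrm{d}t$. This turns $\int_{-1}^{u}f$ into
\[
c\,2^{T-2}\int_0^{(u+1)/2}t^{\frac{T-3}{2}}(1-t)^{\frac{T-3}{2}}\,\mathrm{d}t .
\]
It remains to verify that the constant equals $\Gamma(T-1)/\Gamma(\tfrac{T-1}{2})^2$. This follows from the Legendre duplication formula $\Gamma(\tfrac{T-1}{2})\Gamma(\tfrac T2)=2^{2-T}\sqrt\pi\,\Gamma(T-1)$. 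Equivalently, the $t$-density integrates to one over $[0,1]$, so it must be the $\mathrm{Beta}(\tfrac{T-1}{2},\tfrac{T-1}{2})$ density. Either way, the CDF equals $I_{(u+1)/2}(\tfrac{T-1}{2},\tfrac{T-1}{2})$.

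The only delicate points are the constant bookkeeping in Step 3, which is where the duplication formula enters, and the edge case $T=2$. For $T=2$ the density is $\tfrac{1}{\pi}(1-u^2)^{-1/2}$. This is integrable, and all the formulas above remain valid.
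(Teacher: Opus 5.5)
Your proof is correct, and your Step 3 is exactly the paper's argument. The paper sets $X=(\cos\Theta+1)/2$, pushes the stated density forward, and uses the Legendre duplication formula $\Gamma(\frac{T-1}{2})\Gamma(\frac{T}{2})=2^{2-T}\sqrt{\pi}\,\Gamma(T-1)$ to recognize the $\mathrm{Beta}(\frac{T-1}{2},\frac{T-1}{2})$ density, from which the CDF follows.

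The one difference is that the paper takes the density of $\cos\Theta$ as given: it plugs that formula into the change of variables without deriving it. Your Steps 1--2 therefore justify the first half of the statement, which the paper omits. You do this either through the Gaussian representation $U=G/\|G\|_2$ with the Gamma--Beta relation $U_1^2\sim\mathrm{Beta}(\frac12,\frac{T-1}{2})$, or through the slice-area check.
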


\begin{proof}%
    \label{proof for: prop: pdf of costheta}
    Let random variable $\rX = \frac{\cos\Theta + 1}{2} \in [0, 1]$ so that $\cos\Theta = 2\rX - 1$. Applying the change of variable formula for pdf, we have 
    \begin{align*}
        \pr{\rX = x} ={} 2\pr{\cos\Theta = 2x-1} = \frac{\Gamma(\frac{T}{2})2^{T-2}}{\sqrt{\pi}\Gamma(\frac{T-1}{2})} (x(1-x))^{\frac{T-3}{2}}.
    \end{align*}

    Using Legendre’s duplication formula, we observe that 
    \begin{align*}
        \Gamma(\frac{T-1}{2})\Gamma(\frac{T}{2}) = 2^{2-T}\sqrt{\pi}\Gamma(T-1), \quad \Gamma(\frac{T}{2}) = \frac{2^{2-T}\sqrt{\pi}\Gamma(T-1)}{\Gamma(\frac{T-1}{2})}.
    \end{align*}
    Plugging it in the expression of the PDF of $\rX$, we have 
    \begin{align*}
        \pr{\rX = x} ={}& \frac{\frac{2^{2-T}\sqrt{\pi}\Gamma(T-1)}{\Gamma(\frac{T-1}{2})}2^{T-2}}{\sqrt{\pi}\Gamma(\frac{T-1}{2})} x^{\frac{T-3}{2}}(1-x)^{\frac{T-3}{2}}\\
        ={}& \frac{\Gamma(T-1)}{\Gamma(\frac{T-1}{2})^2} x^{\frac{T-3}{2}}(1-x)^{\frac{T-3}{2}}\\
        ={}& \frac{1}{B(\frac{T-1}{2},\frac{T-1}{2})} x^{\frac{T-1}{2}-1}(1-x)^{\frac{T-1}{2}-1},
    \end{align*}
    which is the density function of beta distribution with parameter $\alpha = \frac{T-1}{2}, \beta = \frac{T-1}{2}.$

    Finally, using the CDF of the beta distribution, we have 
    \begin{align*}
        \pr{\cos\Theta \leq u} = \pr{2\rX - 1 \leq u} = \pr{\rX \leq \frac{u+1}{2}} = I_{\frac{u+1}{2}}(\frac{T-1}{2}, \frac{T-1}{2}).
    \end{align*}
\end{proof}

\section{Proof of~\Cref{thm:additive-no-better-than-gaussian}}
\label{app: proof for thm:additive-no-better-than-gaussian}

\begin{proof}[Proof of~\Cref{lem:sph-to-g}]
\label{proof for lem:sph-to-g}
Recall the definition of optimal $\delta$ (~\Cref{equ:pair wise optimal delta def}) between $\rX, \rY$ at privacy level $\varepsilon$ as 
\begin{align*}
    \delta = \sup_{\cS\subseteq \Real^T} \Bigl(\pr{\rX\in \cS}-e^\varepsilon \pr{\rY\in \cS}\Bigr)_+.
\end{align*}
We will explicitly find a set $\cS_T$ such that $\Bigl(\pr{\rX\in \cS}-e^\varepsilon \pr{\rY\in \cS}\Bigr)_+$ approaches $\Ex{g\left(\frac{R_T^2}{T}\right)}.$

By rotational invariance of $U_T\sim\sphereD{T}$, WLOG, set $\mu_T=s e_1$ (where $e_1=(1,0,\dots,0)$). Then, we set 
\begin{align*}
    \rX_T = R_T U_T,\qquad \rY_T = R_T U_T + s e_1.
\end{align*}
We then define the set $\cS_T$ as the following
\begin{align*}
    \cS_T =\Bigl\{x\in\Real^T: \innerProd{e_1}{x} \leq \frac{s}{2}-\frac{\varepsilon \twoNorm{x}^2}{sT}\Bigr\}.
\end{align*}
By definition of $\delta$, we know that $\delta \geq \pr{\rX\in \cS_T}-e^\varepsilon \pr{\rY\in \cS_T}.$

\medskip
In the rest of the proof, we will study the asymptotics of the expression $\pr{\rX\in \cS_T}-e^\varepsilon \pr{\rY\in \cS_T}$. First, look at the asymptotics of the first coordinate of a random direction $\innerProd{e_1}{U_T}$. Define 
\begin{align*}
    W_T = \sqrt{T} \innerProd{e_1}{U_T}.
\end{align*}
Using the Gaussian representation $U_T\stackrel{d}{=}G/\twoNorm{G}$ with $G \sim \multiNormal{0}{I_T}$, we have that $W_T = \frac{G_1}{\twoNorm{G}/\sqrt{T}}$ converges in distribution to a standard normal distribution $\multiNormal{0}{1}$.  By P\'olya's theorem, we have that 
\begin{align}
    \label{eq:uniform-cdf}
    \lim\limits_{T \to \infty} \sup_{t\in\Real}\abs{\pr{W_T \leq t}-\Phi(t)} = 0,
\end{align}
where $\Phi$ is the standard normal cdf.

Then, we study the asymptotics of the term $\pr{\rX_T\in \cS_T \mid R_T =r}$.  Condition on $R_T=r$, we observe that $X_T=rU_T$ and $\twoNorm{\rX}^2 = r^2$ is deterministic given $r$. Therefore, by the definition of $\cS_T$,
\begin{align*}
    \rX_T\in \cS_T \Longleftrightarrow r\innerProd{e_1}{U}\leq \frac{s}{2}-\frac{\varepsilon}{s}\cdot\frac{r^2}{T} \Longleftrightarrow W_T \leq a\left(\frac{r^2}{T}\right),
\end{align*}
where we denote function $a(u)$ for $u>0$:
\begin{align*}
    a(u) =-\frac{\varepsilon\sqrt{u}}{s}+\frac{s}{2\sqrt{u}}.
\end{align*}

Using~\Cref{eq:uniform-cdf}, we have that
\begin{align*}
    \pr{\rX_T\in \cS_T \mid R_T =r} = \Phi\left(a\left(\frac{r^2}{T}\right)\right)+o(1),
\end{align*}
where the $o(1)$ is uniform in $r$.

Similarly, look at the asymptotics of the term $\pr{\rY_T\in \cS_T \mid R_T =r}$. Recall $Y_T=rU_T+se_1$ and $\innerProd{e_1}{Y_T} = rU_{T,1}+s$. Then,
\begin{align*}
    \frac{\twoNorm{\rY}^2}{T} ={}& \frac{r^2+s^2+2rs\innerProd{e_1}{U_T} }{T} \\
    ={}& \frac{r^2}{T}+\frac{s^2}{T}+\frac{2rs}{T}\cdot\frac{W_T}{\sqrt{T}} \\
    ={}& \frac{r^2}{T}+O_p\left(\frac{1}{T}\right) \\
    ={}& \frac{r^2}{T}+o_p\left(1\right), \\
\end{align*}
where $O_p, o_p$ is the standard notation for asymptotic stochastic boundedness. Therefore, by the definition of $\cS_T$,
\begin{align*}
    \rY_T\in \cS_T \Longleftrightarrow{}& r\innerProd{e_1}{U}\leq -\frac{s}{2}-\frac{\varepsilon}{s}\cdot\frac{r^2}{T} + O_p\left(\frac{1}{T}\right) \\
    \Longleftrightarrow{}& W_T \leq b\left(\frac{r^2}{T}\right) + O_p\left(\frac{1}{\sqrt{T}}\right)\\
    \Longleftrightarrow{}& W_T \leq b\left(\frac{r^2}{T}\right) + o_p(1),
\end{align*}
where we denote function $b(u)$ for $u>0$:
\begin{align*}
    b(u)=-\frac{\varepsilon\sqrt{u}}{s}-\frac{s}{2\sqrt{u}}.
\end{align*}

Using~\Cref{eq:uniform-cdf}, we have that
\begin{align*}
    \pr{\rY_T\in \cS_T \mid R_T =r} = \Phi\left(b\left(\frac{r^2}{T}\right)\right)+o(1),
\end{align*}
where the $o(1)$ is uniform in $r$.

Taking expectation of $\pr{\rY_T\in \cS_T \mid R_T =r}$ and $\pr{\rX_T\in \cS_T \mid R_T =r}$ over $R_T$ and plugging into $\pr{\rX\in \cS_T}-e^\varepsilon \pr{\rY\in \cS_T}$ gives 
\begin{align*}
    &\pr{\rX\in \cS_T}-e^\varepsilon \pr{\rY\in \cS_T} \\
    ={}& \Ex{\Phi\left(a\left(\frac{R_T^2}{T}\right)\right) -e^\varepsilon\Phi\left(b\left(\frac{R_T^2}{T}\right)\right)} +o(1),
\end{align*}
where the expression $\Phi\left(a\left(\frac{R_T^2}{T}\right)\right) -e^\varepsilon\Phi\left(b\left(\frac{R_T^2}{T}\right)\right)$ is exactly the optimal delta of the Gaussian mechanism $\Mech_{\mathsf{G},u}$ at privacy level $\varepsilon$ (cf.~\Cref{eq:gaussian privacy}).

Finally, recall that $\delta \geq \Bigl(\pr{\rX\in \cS_T}-e^\varepsilon \pr{\rY\in \cS_T}\Bigr)_+,$ we finish the proof. 
\end{proof}

\begin{proof}[Proof of~\Cref{prop: tangent test}]
\label{proof for prop: tangent test}
By convexity of $g$ on $[u_0,\infty)$, the tangent line at $u_0$ is a global under-estimator on the right, namely, for all $u\geq u_0$,
\begin{align*}
    g(u)\geq g(u_0)+g'(u_0)(u-u_0)=\ell_\delta(u),
\end{align*}
Since we assume that $g(u) \geq \ell_\delta(u)$ for all $u\in[0,u_0]$ in the statement, we have $g(u)\geq \ell_\delta(u)$ for all $u\geq 0$.

Taking expectation with respect to any $U\geq 0$ yields
\begin{align*}
    \Ex{g(U)} \geq{}& \Ex{\ell_\delta(U)} \\
    ={}& \Ex{g(u_0)+g'(u_0)(U-u_0)}\\
    ={}& g(u_0) + g'(u_0)(\Ex{U}-u_0) \\
    ={}& g(u_0).
\end{align*}
\end{proof}

\begin{proposition}[Downward closure of the tangent-support conditions]
\label{prop:downward-closure}
Fix $\varepsilon\geq 0$, $\delta\in(0,1)$, $s>0$, and let $g:(0,\infty)\to\Real$ be differentiable and strictly decreasing. For any $\delta\in(0,1)$, Let $u(\delta)>0$ be the unique point such that $g(u(\delta))=\delta$ and define the tangent line function at $t > 0$ by
\begin{align*}
    \ell_t(u)=g(t)+g'(t)(u-t). 
\end{align*}
If there exists a $\delta_0\in(0,1)$ and $u_0=u(\delta_0)$ such that: $g$ is convex on $[u_0,\infty)$; and $g(u)\geq \ell_{u_0}(u)$ for all $u\in[0,u_0]$.
Then the two conditions hold is downward closed for all $\delta\in(0,\delta_0)$.
\end{proposition}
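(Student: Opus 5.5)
Fix $\delta\in(0,\delta_0)$ and set $u_1=u(\delta)$. Since $g$ is strictly decreasing and $\delta<\delta_0$, we get $u_1>u_0$. The plan is to verify the two tangent-support conditions at $u_1$, deriving both from the corresponding conditions at $u_0$.

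\emph{Convexity.} This is immediate. Because $[u_1,\infty)\subseteq[u_0,\infty)$ and $g$ is convex on the larger interval, it is convex on $[u_1,\infty)$.

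\emph{Tangent bound on $[0,u_1]$.} We must show $g(u)\geq\ell_{u_1}(u)$ for all $u\in[0,u_1]$, and we split at $u_0$.

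On $[u_0,u_1]$, $g$ is convex and differentiable. A convex differentiable function lies above each of its tangent lines at interior points of its convexity interval, so $g(u)\geq \ell_{u_1}(u)$ there. Convexity on $[u_0,\infty)$ also gives $g'$ non-decreasing on $[u_0,\infty)$, so $g'(u_0)\leq g'(u_1)$.

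On $[0,u_0]$ we chain the two tangent lines. The difference $\ell_{u_0}-\ell_{u_1}$ is affine with slope $g'(u_0)-g'(u_1)\leq 0$, hence non-increasing in $u$. At $u=u_0$ we have $\ell_{u_0}(u_0)=g(u_0)\geq \ell_{u_1}(u_0)$ by the previous paragraph, so the difference is nonnegative at $u_0$. A non-increasing function that is nonnegative at $u_0$ is nonnegative for all $u\leq u_0$, so $\ell_{u_0}(u)\geq\ell_{u_1}(u)$ on $[0,u_0]$. Combining this with the hypothesis $g\geq \ell_{u_0}$ on $[0,u_0]$ gives $g\geq\ell_{u_1}$ on $[0,u_0]$.

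Together the two cases establish both conditions at $u(\delta)$ for every $\delta<\delta_0$.

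The main subtlety is the endpoint $u_0$. If convexity is only asserted on a closed interval, one should note that a differentiable convex function satisfies the tangent inequality at every point of its interval, including the endpoint $u_0$ where the one-sided derivative equals $g'(u_0)$. With that observation, the monotonicity of $g'$ and the comparison at $u_0$ both hold, and the rest is elementary.
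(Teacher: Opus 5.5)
Your proof is correct and uses the same skeleton as the paper: the split of $[0,u_1]$ at $u_0$, convexity inherited from $[u_0,\infty)$, and on $[u_0,u_1]$ the supporting-line property of a convex function. The paper phrases that last step as $h(u)=g(u)-\ell_{u_1}(u)$ having $h'(u)=g'(u)-g'(u_1)\le 0$ and $h(u_1)=0$, which is the same fact.

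The genuine difference is on $[0,u_0]$. The paper fixes $u$ and differentiates the tangent value in the tangency point, $F(t)=\ell_t(u)$, to get $F'(t)=g''(t)(u-t)\le 0$. This tacitly needs $g$ to be twice differentiable on $[u_0,\infty)$, which the proposition does not assume. It is harmless for the smooth Gaussian privacy function where the result is applied, but it is outside the stated hypotheses.

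You instead compare the two affine functions $\ell_{u_0}$ and $\ell_{u_1}$ directly, using only $g'(u_0)\le g'(u_1)$ and $g(u_0)\ge \ell_{u_1}(u_0)$. So your argument works under exactly the first-order hypothesis that is stated. Your remark about the endpoint closes the only possible loophole: convexity plus differentiability at $u_0$ gives $g'(u_0)\le \frac{g(u_1)-g(u_0)}{u_1-u_0}\le g'(u_1)$.

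The paper's calculus route shows that $t\mapsto \ell_t(u)$ is monotone along the whole convex region. Your pairwise comparison gives the same monotonicity for any $u_0\le t_1<t_2$, so nothing is lost and you avoid the unstated second-derivative assumption.
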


\begin{proof}
Fix any $\delta\in(0,\delta_0)$ and let $u_1=u(\delta)$. Since $g$ is strictly decreasing and $g(u_0)=\delta_0>\delta=g(u_1)$, it follows that $u_1>u_0$.

It is easy to see that the first condition holds on $u_1$. Because $[u_1,\infty)\subseteq [u_0,\infty)$ and $g$ is convex on $[u_0,\infty)$, it is convex on the subset $[u_1,\infty)$ as well.

To show the second condition holds, we split the interval $[0,u_1]$ into two parts and consider each case separately. For any fixed $u\in[0,u_0]$, define function $F(t) = l_t(u)$ for $t \geq u$, whose output is the tangent line function at $t$ evaluating at a fixed point $u$. Differentiating $F(t)$ with respect to $t$, we have 
\begin{align*}
    F'(t) = g'(t)+g''(t)(u-t)-g'(t) = g''(t)(u-t).
\end{align*}
On $[u_0,\infty)$ we have $g''(t)\geq 0$ by convexity, and since $t\geq u$ we have $u-t\leq 0$. Hence $F'(t)\leq 0$ for all $t\geq u_0$, so $F$ is nonincreasing on $[u_0,\infty)$. Because $u_1\geq u_0$, $\ell_{u_1}(u)=F(u_1)\leq F(u_0)=\ell_{u_0}(u).$
By assumption, $g(u)\geq \ell_{u_0}(u)$ for all $u\in[0,u_0]$, thus
\begin{align*}
    g(u) \geq \ell_{u_0}(u) \geq \ell_{u_1}(u),\qquad \forall u\in[0,u_0].
\end{align*}

For the second case, where $u\in[u_0, u_1]$, $h(u)=g(u)-\ell_{u_1}(u)$. We know that $h'(u) = g'(u)-g'(u_1).$ Since $g$ is decreasing and convex on $[u_0,\infty)$, its derivative $g'$ is nondecreasing there. For any $u_0 \leq u \leq u_1$, this implies $g'(u)\leq g'(u_1)$, hence $h'(u)\leq 0$ on $[u_0,u_1]$. Therefore $h$ is nonincreasing on $[u_0,u_1]$, and for all $u\in[u_0,u_1]$ we have
\begin{align*}
    h(u) \geq  h(u_1)=0,
\end{align*}
i.e., $g(u)\geq \ell_{u_1}(u)$.

Combining both cases yields $g(u)\geq \ell_{u_1}(u)$ for all $u\in[0,u_1]$, completing the proof for the second condition.
\end{proof}

\begin{proof}[Proof is in~\Cref{lem:gauss-exists-delta}]
\label{proof for lem:gauss-exists-delta}
Denote function $\phi(x) = \frac{1}{\sqrt{2\pi}}e^{-x^2/2}$ as the standard normal density. We start from the case where $\varepsilon>0$. Our first step is to give an asymptotic form for function $g(u)$, in which the asymptotics is in terms of $u$. Concretely,
{
\begin{align}
\label{eq:g-asymp-u}
    g(u) = C_{\varepsilon,s} u^{-3/2}\rExp{-\frac{\varepsilon^2}{2s^2}u}\left(1+O\left(\frac{1}{u}\right)\right),
\end{align}
}
with $C_{\varepsilon,s}=\frac{e^{\varepsilon/2}s^3}{\sqrt{2\pi}\,\varepsilon^2}.$

Let $x=x(u)=\frac{\varepsilon\sqrt{u}}{s}$, and $d=d(u)=\frac{s}{2\sqrt{u}}=\frac{\varepsilon}{2x}$. We rewrite $g(u)$ as $g(u)=\Phi\bigl(-(x-d)\bigr)-e^{\varepsilon}\Phi\bigl(-(x+d)\bigr).$ The standard Mills bounds says that for all $z>0$,
\begin{align*}
    \frac{1}{z+z^{-1}}\leq \frac{\Phi(-z)}{\phi(z)} \leq \frac{1}{z}.
\end{align*}
Then we have 
\begin{align*}
    0 \leq \frac{1}{z} - \frac{\Phi(-z)}{\phi(z)} \leq \frac{1}{z}-\frac{1}{z+z^{-1}} =\frac{1}{z^3+z}\leq \frac{1}{z^3},
\end{align*}
which gives us $\frac{\Phi(-z)}{\phi(z)} =\frac{1}{z}+O(z^{-3})$ as $z\to\infty$. 

Define $F(z) = \frac{\Phi(-z)}{\phi(z)}.$ Using $\Phi(-z)=\phi(z)F(z)$ and the identity $\phi(x-d)=e^{\varepsilon}\phi(x+d)$, we express $g(u)$ as 
\begin{align*}
    g(u)=\phi(x-d)\bigl(F(x-d)-F(x+d)\bigr).
\end{align*}

Applying the asymptotic form of $F(z)$ at $z=x\pm d$, and use that $d=\varepsilon/(2x)=O(1/x)$ as $x\to\infty$:
\begin{align*}
    F(x-d)-F(x+d) ={}& \Bigl(\frac{1}{x-d}-\frac{1}{x+d}\Bigr) + O\Bigl(\frac{1}{x^3}\cdot\frac{d}{x}\Bigr) \\
    ={}& \frac{2d}{x^2-d^2}+O\!\Bigl(\frac{1}{x^5}\Bigr) \\
    ={}& \frac{\varepsilon}{x^3}+O\!\Bigl(\frac{1}{x^5}\Bigr).
\end{align*}
Moreover,
\begin{align*}
    \phi(x-d) ={}& \frac{1}{\sqrt{2\pi}}\exp\!\left(-\frac{(x-d)^2}{2}\right)\\
    ={}&\frac{1}{\sqrt{2\pi}}\exp\!\left(-\frac{x^2}{2}+\frac{\varepsilon}{2}-\frac{d^2}{2}\right)\\
    ={}&\frac{e^{\varepsilon/2}}{\sqrt{2\pi}}e^{-x^2/2}\Bigl(1+O\!\Bigl(\frac{1}{x^2}\Bigr)\Bigr).
\end{align*}

Plugging these into $g(u)=\phi(x-d)\bigl(F(x-d)-F(x+d)\bigr)$ yields 
\begin{align*}
    g(u)=\frac{e^{\varepsilon/2}}{\sqrt{2\pi}}e^{-x^2/2}\left(\frac{\varepsilon}{x^3}+O\!\left(\frac{1}{x^5}\right)\right).
\end{align*}
Finally, substitute $x$ and $d$ in terms of $\varepsilon, u, s$ back, we obtain the asymptotic form of $g(u)$ as expressed in~\Cref{eq:g-asymp-u}.

We then show that for the case $\varepsilon > 0$, there exists $u_{\mathrm{right}}>0$ such that $g$ is convex on $[u_{\mathrm{right}},\infty)$. Let $k=\varepsilon^2/(2s^2)>0$, we first rewrite $g(u)=f(u)(1+\eta(u))$, where $f(u)=C_{\varepsilon,s}u^{-3/2}e^{-ku}$ and $\eta(u)=O(1/u)$ by~\Cref{eq:g-asymp-u}.

Differentiating $(\log g)(u) = (\log f)(u) + \log(1+\eta(u))$ with respect to $u$, we have 
\begin{align}
    \label{eq:diff g-asymp-u}
    (\log g)'(u) ={}& -k-\frac{3}{2u}+O\left(\frac{1}{u^2}\right) \\
    (\log g)''(u) ={}& \frac{3}{2u^2}+O\!\left(\frac{1}{u^3}\right) \nonumber.
\end{align}

Using the identity $g(u) = \rExp{\log g(u)}$, and differentiate both side with respect to $u$ twice, we have 
\begin{align*}
    g''(u)=g(u)\Bigl((\log g)'(u)^2+(\log g)''(u)\Bigr).
\end{align*}

Plugging~\Cref{eq:diff g-asymp-u} into the expression of $g''(u)$ above, we have 
\begin{align*}
    g''(u) = g(u)\Bigl(k^2+O\left(\frac{1}{u}\right)\Bigr).
\end{align*}
Since $g(u)>0$ for all $u>0$, it is clear to see that there exists $u_{\mathrm{right}}>0$ such that $g''(u)>0$ for all $u\ge u_{\mathrm{right}}$, i.e.\ $g$ is convex on $[u_{\mathrm{right}},\infty)$.

Next, we define the function $L(u)$ as the y-intercept of the tangent line at $u$ (recall the tangent line is defined in the statement $\ell_\delta(u)$, and $u$ is the $u_0(\delta)$ as defined). Concretely, 
\begin{align*}
    L(u)=g(u)-u g'(u)=g(u)\bigl(1-u(\log g)'(u)\bigr).
\end{align*}
Plugging~\Cref{eq:diff g-asymp-u} into the above, we have $L(u) = g(u)(1 + ku + O(1))$. Since $g(u)$ decays like $u^{-3/2}e^{-ku}$, we have 
\begin{align*}
    \lim_{u\to\infty} L(u)=0.
\end{align*}

We now are ready to show that for the case $\varepsilon > 0$, there exists a $u_\star$ that satisfies both conditions. Let $m = g(u_{\mathrm{right}})$, and since $g$ is a decreasing function, we know that $m \leq g(u)$ for all $u \in [0, u_{\mathrm{right}}]$. Choose $u_\star\geq u_{\mathrm{right}}$ large enough that 
\begin{align*}
    L(u_\star) \leq m.
\end{align*}
Let $\delta_\star=g(u_\star)\in(0,1)$ and define the intercept function at $u_\star$ as $\ell_\star(u)=g(u_\star)+g'(u_\star)(u-u_\star)$.

We claim that $g(u)\geq \ell_\star(u)$ for all $u\in[0,u_\star]$. First, for $u\in[0,u_{\mathrm{right}}]$, since function $\ell_\star(u)$ is decreasing in $u$, we have that 
\begin{align*}
    \ell_\star(u)\leq \ell_\star(0)=g(u_\star)-u_\star g'(u_\star)\leq m\le g(u).
\end{align*}
Second, for $u\in[u_{\mathrm{right}},u_\star]$, convexity of $g$ on $[u_{\mathrm{right}},\infty)$ implies that $g(u) \geq \ell_\star(u).$ Therefore we prove that for the case $\varepsilon > 0$, there exists a point $u_\star$ such that $g(u)\geq \ell_\star(u)$ on $[0,u_\star]$, and $g$ is convex on $[u_\star,\infty)$.

We notice that the existence also holds for the case $\varepsilon = 0$. Using Taylor expansion of $\Phi$ at $0$, we can derive an asymptotic form for function $g(u)$ when $\varepsilon=0$:
{\small
    \begin{align*}
        g(u)=2\left(\frac{1}{2}+\phi(0)t+O(t^3)\right)-1 =\frac{s}{\sqrt{2\pi}}u^{-1/2}+O(u^{-3/2}).
    \end{align*}
}
Also, we have $L(u)=g(u)-u g'(u)=\frac{3s}{2\sqrt{2\pi}}u^{-1/2}+o(u^{-1/2})\to 0$ as $u \to \infty.$ The rest will follow directly from the proof for the case $\varepsilon > 0$.

To finish the proof, we then apply the downward-closure argument (~\Cref{prop:downward-closure}), and observe that the two conditions that hold at $u_\star$ also hold at $u_0 \geq u_\star$: $g$ is convex on $[u_0,\infty)$ and $g(u)\ge \ell_\delta(u)$ for all $u\in[0,u_0]$.

This proves the lemma.
\end{proof}

Combining \Cref{lem:sph-to-g,prop: tangent test,lem:gauss-exists-delta} yields the conclusion that, in the high-privacy regime, the Gaussian mechanism is asymptotically optimal among all \emph{spherical} additive-noise mechanisms with the same MSE. \Cref{thm:spherical-no-better-than-gaussian} formlizes this. 

\begin{theorem}[Asymptotic Gaussian optimality among spherical noises]
\label{thm:spherical-no-better-than-gaussian}
Fix $\varepsilon\geq 0$ and $s>0$. There exists $\delta_\star\in(0,1)$ such that for every $\delta\in(0,\delta_\star]$ the following holds. Let $u_0=u_0(\delta)$ be the minimal Gaussian variance for which the Gaussian mechanism $\Mech_{\mathsf G,u_0}$ is $(\varepsilon,\delta)$-DP. For each $T\geq 2$, let $\Mech_{R_T,T,u_0}$ be any spherically symmetric additive-noise mechanism with
\[
    \Ex{R_T^2}=Tu_0.
\]
Then
\[
    \liminf_{T\to\infty}
    \delta_{\Mech_{R_T,T,u_0}}(\varepsilon)
    \geq
    \delta_{\Mech_{\mathsf G,u_0}}(\varepsilon)
    = \delta.
\]
Equivalently, for every $\eta>0$, for all sufficiently large $T$,
\[
    \delta_{\Mech_{R_T,T,u_0}}(\varepsilon)
    \geq
    \delta-\eta.
\]
\end{theorem}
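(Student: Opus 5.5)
The plan is to chain the three ingredients already established: \Cref{lem:gauss-exists-delta} supplies the threshold, \Cref{lem:sph-to-g} reduces the problem to one dimension, and \Cref{prop: tangent test} closes it. Take $\delta_\star$ to be exactly the constant from \Cref{lem:gauss-exists-delta}. Then for every $\delta\in(0,\delta_\star]$, the point $u_0=u_0(\delta)$ has two properties: $g$ is convex on $[u_0,\infty)$, and $g$ lies above its tangent line $\ell_\delta$ on $[0,u_0]$.

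Fix such a $\delta$ and a sequence of spherical mechanisms with $\Ex{R_T^2}=Tu_0$. Pick any $\mu_T$ with $\twoNorm{\mu_T}=s$. Since $\delta_{\Mech_{R_T,T,u_0}}(\varepsilon)$ is a supremum over neighboring pairs, and a shift of norm $s$ is a feasible pair, we get
\[
\delta_{\Mech_{R_T,T,u_0}}(\varepsilon)\ \geq\ \HS{R_TU_T}{R_TU_T+\mu_T}.
\]
The two orderings of the pair are equivalent by the symmetry $U\mapsto -U$ and translation. \Cref{lem:sph-to-g} then bounds the right-hand side below by $\Ex{g(R_T^2/T)}+o(1)$.

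Next set $V_T=R_T^2/T\geq 0$, so that $\Ex{V_T}=u_0$. \Cref{prop: tangent test} gives $\Ex{g(V_T)}\geq g(u_0)=\delta$ for every $T$. Combining the two bounds,
\[
\delta_{\Mech_{R_T,T,u_0}}(\varepsilon)\geq \delta+o(1).
\]
Taking $\liminf_{T\to\infty}$ gives the first claim, and unpacking the $o(1)$ gives the $\eta$-version.

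\textbf{Main obstacle.} The delicate step is making sure the $o(1)$ in \Cref{lem:sph-to-g} does not depend on the law of $R_T$, which changes with $T$ arbitrarily. I would check this by going back into that proof.

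\begin{itemize}
\item The error for the $\rX_T$ term comes from P\'olya's uniform CDF convergence, so it is uniform in $r$ and survives averaging over $R_T$.
\item The $\rY_T$ term has a perturbation $O_p(1/\sqrt{T})$ with coefficient depending on $r$. For this I would rewrite the event $\{\rY_T\in\cS_T\}$ exactly as a quadratic inequality in $W_T$ given $r$, rather than appeal to $o_p$ heuristics.
\end{itemize}

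If uniformity fails for very large or very small $r$, I would argue as follows. By Markov's inequality, $\Pr[V_T>K]\leq u_0/K$. I would show the error is uniformly small on $\{V_T\in[\kappa,K]\}$. The region $V_T<\kappa$ is harmless for a lower bound: there one can use the trivial bound $\Phi(\cdot)-e^\varepsilon\Phi(\cdot)\ge$ its $o(1)$ approximation, since $b\to-\infty$ and $a\to+\infty$ as $u\to 0$ make both probabilities stable. Finally I would let $\kappa\to0$ and $K\to\infty$ after $T\to\infty$.
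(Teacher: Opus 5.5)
Your proposal is correct and follows the paper's proof step for step: take $\delta_\star$ from \Cref{lem:gauss-exists-delta}, lower-bound the worst-case delta by the divergence at a single shift of norm $s$, apply \Cref{lem:sph-to-g}, and finish with \Cref{prop: tangent test} applied to $R_T^2/T$. Your worry about uniformity of the $o(1)$ is fair, since the paper only asserts it, and your exact-rewrite fix works more simply than you expect: $\twoNorm{\rY_T}^2=r^2+2rs\innerProd{e_1}{U_T}+s^2$ is affine (not quadratic) in $W_T$, so the $\rY_T$-event is exactly $W_T\le b(u)+\frac{2\varepsilon^2\sqrt{u}}{s(T+2\varepsilon)}$ with $u=r^2/T$; this moves $\Phi$ by $O(1/T)$ uniformly in $u$, so the error is independent of the law of $R_T$ and your Markov-truncation fallback is unnecessary.
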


\begin{proof}
\label{proof for thm:spherical-no-better-than-gaussian}
Fix $\varepsilon\geq 0$, $s>0$, and let $\delta_\star$ be as in Lemma~\ref{lem:gauss-exists-delta}. Fix any $\delta\in(0,\delta_\star]$, and let $u_0=u_0(\delta)>0$ be the (unique) solution to $g(u_0)=\delta$. Recall that, we define Gaussian privacy function $g$ for $u > 0$ (cf.~\Cref{eq:gaussian privacy}),
\begin{align*}
    g(u)=\Phi\left(-\frac{\varepsilon\sqrt{u}}{s}+\frac{s}{2\sqrt{u}}\right) -e^{\varepsilon}\Phi\left(-\frac{\varepsilon\sqrt{u}}{s}-\frac{s}{2\sqrt{u}}\right).
\end{align*}

Now consider the spherical mechanism $\Mech_{R_T,T,u_0}(\mu_T)=\mu_T+R_TU_T$ with $\Ex{R_T^2}=Tu_0$. Let $\mu_T\in\Real^T$ be any shift with $\twoNorm{\mu_T}=s$, and define neighboring output distribution 
\begin{align*}
    X_T=R_TU_T,\qquad Y_T=X_T+\mu_T.
\end{align*}

Let $\delta_T^{\mathrm{hid}}(\varepsilon)$ denote the optimal delta at level $\varepsilon$ for the pair $(X_T,Y_T)$. By definition of $\delta_\varepsilon(\Mech_{R_T,T,u_0})$ as the worst-case delta over neighboring databases, we know that for every choice of $\mu_T$ with $\twoNorm{\mu_T}=s$, we have that 
\begin{align*}
    \delta_\varepsilon(\Mech_{R_T,T,u_0}) \geq \delta_T^{\mathrm{hid}}(\varepsilon).
\end{align*}

We compare the privacy parameter $\delta_T^{\mathrm{hid}}(\varepsilon)$ to that of the Gaussian benchmark $\Mech_{\mathsf G}$. 

Let $U=R_T^2/T$. By~\Cref{lem:sph-to-g}, we have
\begin{align*}
    \delta_T^{\mathrm{hid}}(\varepsilon) \geq{}& \Ex{g(U)}+o(1) \\
    ={}& \Ex{g\left(\frac{R_T^2}{T}\right)}+o(1),\qquad T\to\infty,
\end{align*}
where $o(1)\to 0$ as $T\to\infty$ for fixed $(\varepsilon,s)$.

It remains to lower bound $\Ex{g\left(\frac{R_T^2}{T}\right)}$ under the second-moment constraint $\Ex{\frac{R_T^2}{T}}=u_0$. Apply~\Cref{prop: tangent test} with the random variable $U.$ By~\Cref{lem:gauss-exists-delta} (with $\delta\leq\delta_\star$), the function $g$ satisfies the convexity and tangent-line conditions at $u_0$, hence~\Cref{prop: tangent test} yields
\begin{align}
    \label{eq:jensen-like}
    \Ex{g(U)} \geq g(u_0)=\delta.
\end{align}

Combining the above two inequalities gives
\begin{align*}
    \delta_T^{\mathrm{hid}}(\varepsilon) \geq \Ex{g(U)}+o(1) \geq g(u_0)+o(1) = \delta+o(1).
\end{align*}
Therefore,
\begin{align*}
    \liminf_{T\to\infty}\delta_\varepsilon(\Mech_{R_T,T,u_0}) \geq \liminf_{T\to\infty}\delta_T^{\mathrm{hid}}(\varepsilon) \geq \delta,
\end{align*}
which is exactly the claimed statement. 
\end{proof}

\begin{proof}[Proof of~\Cref{lem:haar-symmetrize}]
\label{proof for lem:haar-symmetrize}

Recall that, by~\Cref{lem:haar-symmetrize}'s definition, $M$ is a Haar-uniform random matrix over the group of all $T \times T$ orthogonal matrices and independent of a random vector $\rX\in\Real^T$. Let $\rX' = M\rX.$

We first show that $\rX'$ is spherically symmetric. Fix any deterministic $T \times T$ orthogonal matrix $Q$. We know that $QM\stackrel{d}{=}M$ by Haar-uniform random matrix's property. Therefore, we have 
\begin{align*}
    Q\rX' = QM\rX \overset{d}= M\rX = \rX',
\end{align*}
which, by definition, says that $\rX'$ is spherically symmetric.

Since orthogonal transformations preserve the Euclidean norm, so $\twoNorm{\rX'} = \twoNorm{M\rX} = \twoNorm{\rX}$, and therefore 
\begin{align*}
    \Ex{\twoNorm{\rX'}^2}=\Ex{\twoNorm{\rX}^2}.
\end{align*}

It remains to prove $\delta'\leq \delta$. Fix any $\mu\in\Real^T$ with $\twoNorm{\mu}\leq s$. Let $\nu$ denote Haar measure over the orthogonal group. For each orthogonal matrix $O$, define $P_O, Q_O$ as the distribution of $O\rX$ and $O\rX+\mu$, respectively. By convexity of hockey-stick divergence under mixtures,
\begin{align*}
    \HS{\rX'}{\rX'+\mu}
    &\leq
    \int \HS{P_O}{Q_O}\,d\nu(O).
\end{align*}
For every fixed orthogonal matrix $O$, applying the bijection $x\mapsto O^\intercal x$ to both distributions gives
\begin{align*}
    \HS{O\rX}{O\rX+\mu}
    =
    \HS{\rX}{\rX+O^\intercal\mu}.
\end{align*}
Since $\twoNorm{O^\intercal\mu}=\twoNorm{\mu}\leq s$, the definition of $\delta$ gives $\HS{\rX}{\rX+O^\intercal\mu}\leq \delta$. 
Therefore,
\begin{align*}
    \HS{\rX'}{\rX'+\mu}\leq \delta.
\end{align*}
Taking the supremum over all $\mu$ with $\twoNorm{\mu}\leq s$ gives $\delta'\leq \delta$.
\end{proof}

To finish the proof of~\Cref{thm:additive-no-better-than-gaussian}, we further use~\Cref{lem:haar-symmetrize}, which formalize that, under a fixed MSE budget, the optimal privacy achievable by additive-noise mechanisms is no better than that achievable by spherical additive-noise mechanisms.

\begin{proof}[Proof of~\Cref{thm:additive-no-better-than-gaussian}]
\label{proof for thm:additive-no-better-than-gaussian}
Fix $\varepsilon\geq 0$ and $s>0$, and let $\delta_\star$ be as in \Cref{lem:gauss-exists-delta}. Fix any $\delta\in(0,\delta_\star]$, and let $u_0=u_0(\delta)>0$ be the unique solution to $g(u_0)=\delta$, where (cf.~\Cref{eq:gaussian privacy})
\begin{align*}
    g(u)=\Phi\left(-\frac{\varepsilon\sqrt{u}}{s}+\frac{s}{2\sqrt{u}}\right)
    -e^{\varepsilon}\Phi\left(-\frac{\varepsilon\sqrt{u}}{s}-\frac{s}{2\sqrt{u}}\right).
\end{align*}

Now fix any $T\geq 2$ and consider an arbitrary additive-noise mechanism
\begin{align*}
    \Mech_{T,u_0}(\mu_T)=\mu_T+\rX_T,
\end{align*}
where $\rX_T\in\Real^T$ is a random vector satisfying the MSE constraint $\Ex{\twoNorm{\rX_T}^2}=Tu_0$.  Let
\begin{align*}
    \delta_T(\varepsilon) \defin{} \sup_{\twoNorm{v}\leq s}\HS{\rX_T}{\rX_T+v}.
\end{align*}
By definition of the worst-case optimal delta for the additive-noise mechanism, $\delta_{\Mech_{T,u_0}}(\varepsilon)=\delta_T(\varepsilon).$ Let $\rX_T' = M_T \rX_T$ be the symmetrized noise of $\rX_T$, where $M_T$ is the Haar-uniform random matrix over the group of all $T \times T$ orthogonal matrices. Define
\begin{align*}
    \delta_T'(\varepsilon) \defin{} \sup_{\twoNorm{v}\leq s}\HS{\rX_T'}{\rX_T'+v}.
\end{align*}
By~\Cref{lem:haar-symmetrize}, we know $\rX_T'$ is spherically symmetric, has the same MSE as $\rX_T$, and moreover
\begin{align}
\label{eq:sym-delta}
    \delta_T'(\varepsilon) \leq \delta_T(\varepsilon).
\end{align}

Since $\rX_T'$ is spherically symmetric, there exists a nonnegative random variable $R_T > 0$ and an independent $U_T\sim\sphereD{T}$ such that $\rX_T' \stackrel{d}{=} R_TU_T$, and $\Ex{R_T^2}=Tu_0$. Therefore, the spherical optimality result \Cref{thm:spherical-no-better-than-gaussian} applies to the spherical mechanism $\mu_T+\rX_T'$ and yields
\begin{align}
\label{eq:spherical-lb}
    \liminf_{T\to\infty} \delta_T'(\varepsilon)
    \geq
    \delta_{\Mech_{\mathsf G,u_0}}(\varepsilon)
    = \delta.
\end{align}

Combining~\Cref{eq:sym-delta} and~\Cref{eq:spherical-lb} gives
\begin{align*}
    \liminf_{T\to\infty}
    \delta_{\Mech_{T,u_0}}(\varepsilon)
    =
    \liminf_{T\to\infty}
    \delta_T(\varepsilon)
    \geq
    \liminf_{T\to\infty}
    \delta_T'(\varepsilon)
    \geq
    \delta.
\end{align*}
This is exactly the claimed statement.
\end{proof}

\section{Supplementary Material for Spherical Generalized Gamma Mechanism}
\subsection{Numerical Bound}
\begin{proof}[Proof of \Cref{lem:sgg-delta-1d-integral}]
\label{proof for lem:sgg-delta-1d-integral}
Let $\rY_0,\rY_1,\rX$ be i.i.d.\ with $\rX = RU \sim \sgg{\alpha}{\beta}{p}$, where $R\sim\generalGamma{\alpha}{\beta}{p}$ and $U\sim\sphereD{T}$ are independent. Define $\Theta$ the angle between $\mu$ and $U$, and recall $W = \cos\Theta \in[-1,1],$ whose CDF $F_W$ given in \Cref{prop: pdf of costheta}.

Let $f_{\rX}$ denote the density of $\rX$. Since $\rY_0\stackrel{d}{=}\rX$ and $\rY_1+\mu\stackrel{d}{=}\rX+\mu$, we have $f_{\rY_0}(x)=f_{\rX}(x)$ and $f_{\rY_1+\mu}(x)=f_{\rX}(x-\mu)$. By the privacy-loss random variable characterization of the optimal $\delta(\varepsilon)$ by a fixed $\varepsilon$ (~\citep{BalleW18}),
\begin{align*}
\delta^{*}(\varepsilon) = \Bigl[\Pr\!\bigl(L^{\mathsf{neg}}(\rX)\le -\varepsilon\bigr) - e^{\varepsilon}\Pr\!\bigl(L^{\mathsf{pos}}(\rX)\ge \varepsilon\bigr)\Bigr]_{+},
\end{align*}
where
\begin{align*}
L^{\mathsf{pos}}(\rX) ={}& \ln\frac{f_{\rY_1+\mu}(\rX)}{f_{\rY_0}(\rX)} = \ln\frac{f_{\rX}(\rX-\mu)}{f_{\rX}(\rX)},\\
L^{\mathsf{neg}}(\rX) ={}& \ln\frac{f_{\rY_0}(\rX+\mu)}{f_{\rY_1+\mu}(\rX+\mu)} = \ln\frac{f_{\rX}(\rX+\mu)}{f_{\rX}(\rX)}.
\end{align*}

\paragraph{Spherical symmetry reduces the PLRV to a function of $(R,W)$.} By \Cref{thm: radial to spherical density transformation}, the SGG density is spherically symmetric, i.e., $f_{\rX}(x)=g(\twoNorm{x}^2)$ for a suitable generator $g$. Therefore,
\begin{align*}
L^{\mathsf{pos}}(\rX) &= \ln\frac{g(\twoNorm{\rX-\mu}^2)}{g(\twoNorm{\rX}^2)} = \ln\frac{g(R^2-2R s W+s^2)}{g(R^2)},\\
L^{\mathsf{neg}}(\rX) &= \ln\frac{g(\twoNorm{\rX+\mu}^2)}{g(\twoNorm{\rX}^2)} = \ln\frac{g(R^2+2R s W+s^2)}{g(R^2)}.
\end{align*}
Moreover, since $U\sim\sphereD{T}$ implies $W\stackrel{d}{=}-W$, we have $L^{\mathsf{pos}}(R,W)=L^{\mathsf{neg}}(R,-W)$ and hence $L^{\mathsf{pos}}(\rX)\stackrel{d}{=}L^{\mathsf{neg}}(\rX)$. Consequently, we may work with the single privacy loss
\begin{align*}
    L(\rX) = L^{\mathsf{neg}}(\rX),
\end{align*}
and write
\begin{align*}
    \delta^{*}(\varepsilon) = \Bigl[\pr{L(\rX)\le -\varepsilon} - e^{\varepsilon}\pr{L(\rX)\ge \varepsilon}\Bigr]_{+}.
\end{align*}

For $\rX\sim\sgg{\alpha}{\beta}{p}$, \Cref{tab:radial-to-spherical-generators} yields
\begin{align*}
    f_{\rX}(x)\propto \twoNorm{x}^{\alpha+1-T}e^{-\beta \twoNorm{x}^{p}}, \qquad\text{so}\qquad g(t)\propto t^{\frac{\alpha+1-T}{2}}e^{-\beta t^{p/2}}.
\end{align*}

Substituting this $g$ into $L(\rX)=\ln\bigl(g(\twoNorm{\rX+\mu}^2)/g(\twoNorm{\rX}^2)\bigr)$ gives
\begin{align*}
    L(\rX)=\ell(R,W),
\end{align*}
where for $r>0$ and $w\in[-1,1]$,
\begin{align*}
\ell(r,w) = \frac{\alpha+1-T}{2}\ln\!\Bigl(1+\frac{2s w}{r}+\frac{s^2}{r^2}\Bigr) +\beta\Bigl[r^p-(r^2+2swr+s^2)^{p/2}\Bigr].
\end{align*}

\paragraph{Monotonicity in $w$ (for $\alpha\le T-1$).}
Since the lemma assumes $\alpha\in(-1,T-1]$, we have $\alpha+1-T\le 0$. Let $q(r,w) = r^2+2swr+s^2>0$. A direct calculation yields
\begin{align*}
\frac{\partial \ell}{\partial w}(r,w) = sr\left[\frac{\alpha+1-T}{q(r,w)} - \beta p\, q(r,w)^{\frac{p}{2}-1}\right] = \frac{sr}{q(r,w)}\left[(\alpha+1-T)-\beta p\, q(r,w)^{p/2}\right].
\end{align*}
Because $sr/q(r,w)>0$, $\beta p\,q(r,w)^{p/2}>0$, and $(\alpha+1-T)\le 0$, the bracketed term is strictly negative for all $r>0$ and $w\in[-1,1]$. Hence $\partial \ell/\partial w(r,w)<0$ on $(0,\infty)\times[-1,1]$, so for each fixed $r>0$ the map $w\mapsto \ell(r,w)$ is strictly decreasing on $[-1,1]$. Therefore, for any $y\in\Real$ and $r>0$ there is a unique threshold $w^*(r,y)\in[-1,1]$ satisfying $\ell(r,w^*(r,y))=y$.

\paragraph{One-dimensional integral expressions.}
Fix $y\in\Real$. By strict monotonicity, conditional on $R=r$,
\begin{align*}
    \{\ell(r,W)\le y\}\iff \{W\ge w^*(r,y)\}, \qquad \{\ell(r,W)\ge y\}\iff \{W\le w^*(r,y)\}.
\end{align*}
Using independence of $R$ and $W$,
\begin{align*}
\Pr(L(\rX)\le -\varepsilon) &=\int_{0}^{\infty} f_R(r)\,\bigl(1-F_W(w^*(r,-\varepsilon))\bigr)\,\mathrm{d}r,\\
\Pr(L(\rX)\ge \varepsilon) &=\int_{0}^{\infty} f_R(r)\,F_W(w^*(r,\varepsilon))\,\mathrm{d}r.
\end{align*}
Substituting into $\delta^{*}(\varepsilon)=\bigl[\Pr(L\le-\varepsilon)-e^{\varepsilon}\Pr(L\ge\varepsilon)\bigr]_{+}$ yields the desired one-dimensional representation stated in \Cref{lem:sgg-delta-1d-integral}.
\end{proof}

\begin{proof}[Proof of~\Cref{prop:sgg radial profile is monotone}]
\label{proof for prop:sgg radial profile is monotone}

By \Cref{thm: radial to spherical density transformation} and \Cref{def: generalized gamma}, for $y=\twoNorm{x}>0$ we have
\begin{align*}
    f_{\rX}(x) = g(y^2)= C_{\alpha,\beta,p,T}\, y^{\alpha+1-T}\exp\!\bigl(-\beta y^p\bigr),
\end{align*}
for a constant $C_{\alpha,\beta,p,T}>0$ that does not depend on $y$. 
Differentiate with respect to $y$:
\begin{align*}
\frac{\mathrm{d}}{\mathrm{d}y} g(y^2) = g(y^2)\left(\frac{\alpha+1-T}{y}-\beta p\, y^{p-1}\right).
\end{align*}
Since $\alpha\le T-1$ implies $\alpha+1-T\le 0$, and since $\beta p\,y^{p-1}\ge 0$ for $y>0$, the bracketed term is non-positive for all $y>0$. Therefore $\frac{\mathrm{d}}{\mathrm{d}y} g(y^2)\le 0$ for all $y>0$, i.e., $y\mapsto g(y^2)$ is non-increasing. This is equivalent to $t\mapsto g(t)$ being non-increasing on $(0,\infty)$ under the change of variables $t=y^2$.
\end{proof}

\begin{proof}[Proof of~\Cref{lem:monotone-shift-sgg}]
\label{proof for lem:monotone-shift-sgg}
Write $P_\mu$ for the distribution of $\rX+\mu$ and $P_0$ for the distribution of $\rX$. Since $\rX$ is spherically symmetric, the quantity $\delta^*_\mu(\varepsilon)$ depends on $\mu$ only through $s=\twoNorm{\mu}$; hence it suffices to prove monotonicity in $s$. Let $f$ denote the density of $\rX$ and note that $P_\mu$ has density $f_\mu(x)=f(x-\mu)$. Recall that the optimal $\delta$ at level $\varepsilon$ can be expressed as 
\begin{align*}
    \delta^*_\mu(\varepsilon) =\int_{\Real^T}\bigl(f(x)-e^\varepsilon f(x-\mu)\bigr)_{+}\,\mathrm{d}x. 
\end{align*}
Equivalently, let $c = e^\varepsilon$ and using $(a-b)_{+}=a-\min\{a,b\}$ for $a,b\ge 0$, we may rewrite
\begin{align*}
    \delta^*_\mu(\varepsilon)=1-\int_{\Real^T}\min\{f(x),\,c f(x-\mu)\}\,\mathrm{d}x.
\end{align*}

Using the identity $\min\{a,b\}=\int_{0}^{\infty}\mathbf{1}\{a\geq t\}\mathbf{1}\{b\geq t\}\,\mathrm{d}t$ for $a,b\geq 0$, we have 
\begin{align*}
\int_{\Real^T}\min\{f(x),\,c f(x-\mu)\}\,\mathrm{d}x &=\int_{0}^{\infty}\lambda\Bigl(\{x:f(x)\geq t\}\cap\{x:c f(x-\mu)\geq t\}\Bigr) \mathrm{d}t\\
&=\int_{0}^{\infty}\lambda\Bigl(A_t\cap(\mu+A_{t/c})\Bigr) \mathrm{d}t,
\end{align*}
where $\lambda(\cdot)$ denotes Lebesgue measure and $A_t = \{x:f(x)\geq t\}$.

By \Cref{prop:sgg radial profile is monotone}, the radial profile $y\mapsto g(y^2)$ is non-increasing for $\alpha\in(-1,T-1]$. Since $f(x)=g(\twoNorm{x}^2)$ depends only on $\twoNorm{x}$ and is non-increasing in the radius, each superlevel set $A_t$ is a (possibly empty) Euclidean ball centered at the origin: there exists a radius $\rho(t)\in[0,\infty]$ such that $A_t = B(0,\rho(t))$. Likewise, $A_{t/c}=B(0,\rho(t/c))$. Therefore, for each $t>0$,
\begin{align*}
    \lambda \left(A_t\cap(\mu+A_{t/c})\right) = \lambda\left(B(0,\rho(t))\cap B(\mu,\rho(t/c))\right).
\end{align*}

A standard geometric fact is that, overlap of two balls decreases as their center separate. Formally, for fixed radii $r_1,r_2\geq 0$, the function
\begin{align*}
    \lambda\left(B(0,r_1)\cap B(se_1,r_2)\right)
\end{align*}
is non-increasing in $s\geq 0$, where $e_1 = \frac{\mu}{\twoNorm{\mu}}$ is a fixed reference direction.  
Since $\lambda(B(0,r_1)\cap B(\mu,r_2))$ depends on $\mu$ only through $s=\twoNorm{\mu}$, it follows that $\lambda\!\left(A_t\cap(\mu+A_{t/c})\right)$ is non-increasing in $s=\twoNorm{\mu}$ for every $t$. Plugging this shows that the overlap integral $\int \min\{f(x),c f(x-\mu)\}\,\mathrm{d}x$ is non-increasing in $s$.

Finally, $\delta^*_\mu(\varepsilon)$ is therefore non-decreasing in $s=\twoNorm{\mu}$. Hence, if $\twoNorm{\mu_1}\le \twoNorm{\mu_2}$, then $\delta^*_{\mu_1}(\varepsilon)\le \delta^*_{\mu_2}(\varepsilon)$, as claimed.
\end{proof}

\subsection{Provable DP Guarantee and Mechanism}

\label{sec:SGG:oracle-construction}

\begin{algorithm}[htb]
\caption{$\eta$-tight upper-bound oracle $\Oracle_{\mathsf{SGG}}$}
\label{alg:sgg-delta-oracle}
\begin{algorithmic}[1]
\STATE \textbf{Input:} Parameters $(T,\alpha,p,\beta,\varepsilon)$ with $\alpha\in(-1,T-1]$, $p>0$, $\beta>0$, sensitivity bound $s\ge 0$, target slack $\eta\ge 0$.
\STATE \textbf{Output:} $\widehat{\delta}=\Oracle_{\mathsf{SGG}}(\alpha,p,\beta,\varepsilon,s)$.

\STATE $k \leftarrow (\alpha+1)/p, \quad \eta_{\mathsf{tail}} \leftarrow \eta/3, \quad \eta_{\mathsf{int}} \leftarrow \eta-\eta_{\mathsf{tail}}$

\STATE Compute (by bisection) any $z_{\max}>0$ such that $(1+e^{\varepsilon})\,Q(k,z_{\max}) \le \eta_{\mathsf{tail}}.$

\medskip
\item[] \emph{Adaptive probability binning on $[0,z_{\max}]$ to bracket the truncated integrals.}
\algassign{\mathcal{P}}{\{[0,z_{\max}]\}} \hfill \emph{// current partition into bins}

\WHILE{\textbf{true}}
    \STATE \emph{// For each bin $I_i=[z_i,z_{i+1}]$, compute its Gamma mass and a binwise bracket on $\widetilde g_{\pm}$.}
    \STATE Initialize accumulators: $\underline{A}\leftarrow 0$, $\overline{A}\leftarrow 0$, $\underline{B}\leftarrow 0$, $\overline{B}\leftarrow 0$.
    \STATE Initialize bin-score list (for refinement): $\mathrm{score}(I)\leftarrow 0$ for all $I\in\mathcal{P}$.
    \FOR{\textbf{each} $I_i=[z_i,z_{i+1}]\in\mathcal{P}$}
        \STATE $\pi_i \leftarrow P(k,z_{i+1})-P(k,z_i)$ \hfill \emph{//$P$ denotes regularized lower incomplete gamma function}
        \STATE $[\underline g_{-,i},\overline g_{-,i}] \leftarrow \mathsf{EncloseG}\bigl(I_i,\,-\varepsilon;\,T,\alpha,p,\beta,\varepsilon,s\bigr)$
        \STATE $[\underline g_{+,i},\overline g_{+,i}] \leftarrow \mathsf{EncloseG}\bigl(I_i,\,\varepsilon;\,T,\alpha,p,\beta,\varepsilon,s\bigr)$
        \STATE $\underline{A}\leftarrow \underline{A}+\pi_i\,\underline g_{-,i}$;\;\; $\overline{A}\leftarrow \overline{A}+\pi_i\,\overline g_{-,i}$
        \STATE $\underline{B}\leftarrow \underline{B}+\pi_i\,\underline g_{+,i}$;\;\; $\overline{B}\leftarrow \overline{B}+\pi_i\,\overline g_{+,i}$
        \STATE $\mathrm{score}(I_i)\leftarrow \pi_i(\overline g_{-,i}-\underline g_{-,i})+e^{\varepsilon}\pi_i(\overline g_{+,i}-\underline g_{+,i})$
    \ENDFOR

    \algif{$(\overline{A}-\underline{A})+e^{\varepsilon}(\overline{B}-\underline{B}) \leq \eta_{\mathsf{int}}$}
          {\textbf{break}}

    \STATE \emph{// Refine the partition by splitting the bin with largest contribution to the bracket width.}
    \STATE Let $I^\star=[z_L,z_U]\in\mathcal{P}$ be a bin maximizing $\mathrm{score}(I)$.
    \STATE $z_M \leftarrow (z_L+z_U)/2$.
    \STATE $\mathcal{P}\leftarrow (\mathcal{P}\setminus\{I^\star\})\cup\{[z_L,z_M],[z_M,z_U]\}$.
\ENDWHILE

\medskip
\item[] \emph{// Add the tail bound and output an upper bound on $\delta^*_\mu(\varepsilon)$.}
\algassign{\tau}{Q(k,z_{\max})} \hfill \emph{// $Q$ denotes regularized upper incomplete gamma function}
\algassign{\widehat{\delta}}{\max\{0,\;(\overline{A}+\tau)-e^{\varepsilon}\underline{B}\}}
\STATE \textbf{Output:} $\widehat{\delta}$.
\end{algorithmic}
\end{algorithm}

\begin{algorithm}[htb]
\caption{\textsc{EncloseG}: Binwise bounds on $\widetilde g_{\pm}(z)$ over $I=[z_L,z_U]$}
\label{alg:enclose-g}
\begin{algorithmic}[1]
\STATE \textbf{Input:} Bin $I=[z_L,z_U]$, target $y\in\{\varepsilon,-\varepsilon\}$, parameters $(T,\alpha,p,\beta,\varepsilon)$, sensitivity bound $s$
\STATE \textbf{Parameter:} bisection tolerance $\tau_w>0$.
\STATE \textbf{Output:} Bounds $[\underline g,\overline g]$ such that $\widetilde g(z)\in[\underline g,\overline g]$ for all $z\in I$.

\STATE $r_L \leftarrow (z_L/\beta)^{1/p},$ \quad $r_U \leftarrow (z_U/\beta)^{1/p},$ \quad $w^{\mathsf{lo}} \leftarrow -1,$ \quad $w^{\mathsf{hi}} \leftarrow 1$, \quad $\mathsf{ok} \leftarrow \mathsf{true}$

\item[] \emph{// Interval-bisection in $w$ for $\Phi(r,w) = \frac{\alpha+1-T}{2}\ln\!\Bigl(1+\frac{2s w}{r}+\frac{s^2}{r^2}\Bigr)
    +\beta\Bigl[r^p-(r^2+2swr+s^2)^{p/2}\Bigr]$ on $r\in[r_L,r_U]$.}
\WHILE{$w^{\mathsf{hi}}-w^{\mathsf{lo}} > \tau_w$}
    \STATE $w^{\mathsf{mid}}\leftarrow (w^{\mathsf{lo}}+w^{\mathsf{hi}})/2$.
    \STATE $[\underline\Phi,\overline\Phi]\leftarrow \mathsf{EnclosePhi}\left([r_L,r_U],w^{\mathsf{mid}};\,T,\alpha,p,\beta,s\right)$.
    \algif{$\underline\Phi\leq y \leq \overline\Phi$}{$\mathsf{ok}\leftarrow \mathsf{false};\ \textbf{break}$} \hfill \emph{// Cannot decide the sign of $\Phi(r,w^{\mathsf{mid}})-y$ uniformly on this bin within $\tau_w$.}
    \algifelse{$\underline\Phi > y$}{$w^{\mathsf{lo}}\leftarrow w^{\mathsf{mid}}$}{$w^{\mathsf{hi}}\leftarrow w^{\mathsf{mid}}$}
\ENDWHILE

\algif{$\mathsf{ok}=\mathsf{false}$}{\textbf{Return:} $[0, 1]$}

\algifelse{$y=\varepsilon$}
      {$\underline g \leftarrow F_W(w^{\mathsf{lo}}),$ \quad $\overline g \leftarrow F_W(w^{\mathsf{hi}}) \quad $}
      {$\underline g \leftarrow 1 - F_W(w^{\mathsf{hi}}),$ \quad $\overline g \leftarrow 1 - F_W(w^{\mathsf{lo}})$}

\STATE \textbf{Output:} $[\underline g,\overline g]$.
\end{algorithmic}
\end{algorithm}

\begin{algorithm}[thb]
\caption{\textsc{EnclosePhi}: Interval bound for $\Phi(r,w)$ over $r\in[r_L,r_U]$}
\label{alg:enclose-phi}
\begin{algorithmic}[1]
\STATE \textbf{Input:} interval $[r_L,r_U]$ with $0<r_L\le r_U$, scalar $w\in[-1,1]$, parameters $(T,\alpha,p,\beta,s)$
\STATE \textbf{Output:} $[\underline\Phi,\overline\Phi]$ s.t.\ $\Phi(r,w)\in[\underline\Phi,\overline\Phi]$ for all $r\in[r_L,r_U]$.

\STATE $\displaystyle c_1 \leftarrow (\alpha+1-T)/2$
\item[] \emph{// Bound $q(r)=r^2+2swr+s^2$ on $[r_L,r_U]$ (convex quadratic).}
\STATE $r^\star \leftarrow \min\{r_U,\max\{r_L,-sw\}\}$ \COMMENT{projection of the vertex $-sw$}
\STATE $q_L \leftarrow r_L^2 + 2swr_L + s^2,$ \quad $q_U \leftarrow r_U^2 + 2swr_U + s^2,$ \quad $q_\star \leftarrow (r^\star)^2 + 2swr^\star + s^2$
\STATE $\underline q \leftarrow \min\{q_L,q_U,q_\star\}$,\quad $\overline q \leftarrow \max\{q_L,q_U\}$

\item[] \emph{// Bound $c_1 \ln(q(r)/r^2)$ using $\underline q,\overline q$ and $r^2\in[r_L^2,r_U^2]$.}
\STATE $\underline u \leftarrow \underline q / r_U^2$,\quad $\overline u \leftarrow \overline q / r_L^2$
\algifelse{$\underline u \le 0$}{$\underline \ell \leftarrow -\infty$}{$\underline \ell \leftarrow \ln(\underline u)$}
\STATE $\overline \ell \leftarrow \ln(\overline u)$
\algifelse{$c_1\ge 0$}{$\underline L \leftarrow c_1\underline \ell, \quad \overline L \leftarrow c_1\overline \ell$}
      {$\underline L \leftarrow c_1\overline \ell, \quad \overline L \leftarrow c_1\underline \ell$}

\item[] \emph{// Bound $r^p - q(r)^{p/2}$ (monotone for $r\ge 0$, $q\ge 0$).}
\STATE $\underline a \leftarrow r_L^p$,\quad $\overline a \leftarrow r_U^p$
\STATE $\underline b \leftarrow (\underline q)^{p/2}$,\quad $\overline b \leftarrow (\overline q)^{p/2}$
\STATE $\underline D \leftarrow \underline a - \overline b$,\quad $\overline D \leftarrow \overline a - \underline b$

\item[] \emph{// Combine: $\Phi = L + \beta\cdot D$ (with $\beta>0$).}
\STATE $\underline\Phi \leftarrow \underline L + \beta\,\underline D$
\STATE $\overline\Phi \leftarrow \overline L + \beta\,\overline D$

\STATE \textbf{Output:} $[\underline\Phi,\overline\Phi]$.
\end{algorithmic}
\end{algorithm}

We first provide intuition for the proof of~\Cref{lemma: existence of tight upper-bound oracle} (existence of an $\eta$-tight upper-bound oracle $\Oracle_{\mathsf{SGG}}$).
It follows by an explicit construction. 
At a high level, the oracle reduces the one-dimensional integrals defining $\delta^*_\mu(\varepsilon)$ to expectations under a standard $\Gamma(k,1)$ measure via the change of variables $Z=\beta R^p$ (with $k=(\alpha+1)/p$), so that Gamma tail probabilities can be computed in closed form. It then (i) truncates the integral at a data-independent threshold $z_{\max}$ chosen to make the discarded tail contribute at most $\eta_{\mathsf{tail}}$, and (ii) brackets the remaining truncated expectations by an adaptive probability-binning scheme: on each bin, the integrand is provably upper and lower bounded by an an interval $[\underline g,\overline g]$, and these bounds are aggregated with the bin masses $\pi_i$ to obtain global lower/upper bounds whose total width is provably controlled. We now give the formal proof. %

\begin{proof}[Proof of~\Cref{lemma: existence of tight upper-bound oracle}]
\label{proof for lemma: existence of tight upper-bound oracle}
We prove the lemma constructively by showing that \Cref{alg:sgg-delta-oracle} implements an $\eta$-tight upper-bound oracle defined in~\Cref{def:sgg delta upper bound oracle}.

Fix any dimension $T\geq 2$, parameters $\alpha\in(-1,T-1]$, $p>0$, $\beta>0$, privacy level $\varepsilon\geq 0$, and any shift vector $\mu\in\Real^T$ with $\twoNorm{\mu}=s$. Let $\rX\sim \sgg{\alpha}{\beta}{p}$. By \Cref{lem:sgg-delta-1d-integral}, the optimal $\delta$ for the neighboring pair $(\rX,\rX+\mu)$ can be written as
\begin{align}
    \label{eq:delta-star-AB}
    \delta^*_\mu(\varepsilon) = \max\{0, A - e^\varepsilon B\},
\end{align}
where $A=\Ex{g_-(R)}$ and $B=\Ex{g_+(R)}$ are expectations under $R\sim \generalGamma{\alpha}{\beta}{p}$, and $g_\pm(\cdot)\in[0,1]$ are the bounded angle-probability functions defined in \Cref{lem:sgg-delta-1d-integral}, i.e.
\begin{align*}
    g_-(R) = 1-F_W(w^*(R,-\varepsilon)), \quad g_+(R) = F_W(w^*(R,\varepsilon)). 
\end{align*}
We now show that \Cref{alg:sgg-delta-oracle} outputs a value $\widehat\delta$ satisfying
\begin{align}
    \label{eq:goal}
    \delta^*_\mu(\varepsilon)\leq \widehat\delta \leq \delta^*_\mu(\varepsilon)+\eta.  
\end{align}

\paragraph{Rewriting Under a Standard Gamma Measure.} Let $k=  (\alpha+1)/p$ and define the change of variables $Z = \beta R^p$. Then $Z\sim \mathrm{Gamma}(k,1)$ (shape $k$, unit scale), with density
\begin{align*}
    f_Z(z)=\frac{1}{\Gamma(k)}z^{k-1}e^{-z},\qquad z>0.
\end{align*}
Define $\widetilde g_\pm(z)\in[0,1]$ by $\widetilde g_\pm(z) =  g_\pm\!\left(\left(\frac{z}{\beta}\right)^{1/p}\right).$ A direct substitution yields
\begin{align*}
    A=\Ex{\widetilde g_-(Z)},\quad B=\Ex{\widetilde g_+(Z)}.
\end{align*}

\paragraph{Truncation and a Closed-form Tail Bound.}
Let $z_{\max}>0$ and set $\tau = \Pr[Z>z_{\max}]=Q(k,z_{\max})$, where $Q(\cdot,\cdot)$ is the regularized upper incomplete gamma function. Define the truncated integrals
\begin{align*}
    A_{\leq z_{\max}} = \int_{0}^{z_{\max}} f_Z(z)\widetilde g_-(z)\,\mathrm{d}z, \quad
    B_{\leq z_{\max}} = \int_{0}^{z_{\max}} f_Z(z)\widetilde g_+(z)\,\mathrm{d}z.
\end{align*}
Since $0\leq \widetilde g_\pm\leq 1$, the discarded tails satisfy $0\leq A-A_{\le z_{\max}} \leq \tau,\quad 0\leq B-B_{\le z_{\max}} \leq \tau.$ Consequently,
\begin{align*}
    \abs{(A-e^\varepsilon B) - (A_{\le z_{\max}} - e^\varepsilon B_{\le z_{\max}})} \leq (1+e^\varepsilon)\tau.
\end{align*}
In \Cref{alg:sgg-delta-oracle}, we choose $z_{\max}$ (by bisection on $Q(k,\cdot)$) so that $(1+e^\varepsilon)\tau \leq \eta_{\mathrm{tail}}$. This guarantees that truncation contributes at most $\eta_{\mathrm{tail}}$ additive error to the linear form $A-e^\varepsilon B$.

\paragraph{Probability binning and Brackets on the Truncated Integrals.}
Fix any partition $0=z_0<z_1<\cdots<z_m=z_{\max}, \quad I_i=[z_i,z_{i+1}].$ Let $\pi_i=\Pr[Z\in I_i]=P(k,z_{i+1})-P(k,z_i)$, where $P(\cdot,\cdot)$ is the regularized lower incomplete gamma function. Suppose that for each bin $I_i$ we have a provable lower and upper bound 
\begin{align*}
    \widetilde g_\pm(z)\in[\underline g_{\pm,i},\overline g_{\pm,i}], \quad \forall z\in I_i.
\end{align*}
Multiplying by the nonnegative density and integrating over $I_i$ yields
\begin{align*}
    \underline A =\sum_{i=0}^{m-1}\pi_i\,\underline g_{-,i} \leq A_{\leq z_{\max}} \leq \sum_{i=0}^{m-1}\pi_i\,\overline g_{-,i} = \overline A,\\
    \underline B =\sum_{i=0}^{m-1}\pi_i\,\underline g_{+,i} \leq B_{\le z_{\max}} \leq \sum_{i=0}^{m-1}\pi_i\,\overline g_{+,i} = \overline B.
\end{align*}
Therefore the truncated linear form is bracketed as
\begin{align*}
    \underline A - e^\varepsilon \overline B \leq  A_{\leq z_{\max}} - e^\varepsilon B_{\leq z_{\max}} \leq \overline A - e^\varepsilon \underline B,
\end{align*}
and the bracket width satisfies
\begin{align*}
    \Bigl(\overline A-\underline A\Bigr) + e^\varepsilon\Bigl(\overline B-\underline B\Bigr) = \sum_{i=0}^{m-1}\pi_i\Bigl[(\overline g_{-,i}-\underline g_{-,i}) +e^\varepsilon(\overline g_{+,i}-\underline g_{+,i})\Bigr],
\end{align*}
which is exactly the quantity tracked by line 18 in \Cref{alg:sgg-delta-oracle}.

It remains to justify that $\widetilde g_\pm(z)\in[\underline g_{\pm,i},\overline g_{\pm,i}]$ is true for all $i \in [m], z\in I_i$, and that the adaptive refinement loop terminates once $\Bigl(\overline A-\underline A\Bigr) + e^\varepsilon\Bigl(\overline B-\underline B\Bigr)$ falls below $\eta_{\mathrm{int}}$.

\paragraph{Correctness of \textsc{EnclosePhi} and \textsc{EncloseG}.} We first recall the (binwise) privacy-loss threshold function used in
\Cref{lem:sgg-delta-1d-integral}. For $r>0$ and $w\in[-1,1]$, define
\begin{align*}
    \Phi(r,w) = \frac{\alpha+1-T}{2}\ln\!\Bigl(1+\frac{2s w}{r}+\frac{s^2}{r^2}\Bigr) +\beta\Bigl[r^p-(r^2+2swr+s^2)^{p/2}\Bigr].
\end{align*}
Because $\alpha\leq T-1$, the coefficient $(\alpha+1-T)/2\le 0$, and since $q(r,w)=r^2+2swr+s^2$ increases with $w$, both terms in $\Phi(r,w)$ are non-increasing in $w$. Hence, for each fixed $r>0$, the map $w \mapsto \Phi(r,w)$ is non-increasing on $[-1,1]$.

For a target level $y\in\{\varepsilon,-\varepsilon\}$, define the threshold
\begin{align*}
    w^*(r,y) = \sup\{w\in[-1,1]: \Phi(r,w)\geq y\},
\end{align*}
with the convention that $\sup\varnothing=-1$. By monotonicity in $w$, we have that for every $r>0$,
\begin{align*}
    \Phi(r,w)\geq y \iff w\leq w^*(r,y).
\end{align*}
Recall that $g_-(r) = 1-F_W(w^*(r,-\varepsilon)), \quad g_+(R) = F_W(w^*(r,\varepsilon)),$ and $\widetilde g_\pm(z) =  g_\pm\!\left(\left(\frac{z}{\beta}\right)^{1/p}\right).$

\smallskip
\emph{Correctness of \textsc{EnclosePhi}.}
Fix $w\in[-1,1]$ and an interval $r\in[r_L,r_U]$ with $0<r_L\le r_U$. \Cref{alg:enclose-phi} computes bounds for $\Phi(r,w)$ by interval arithmetic: (1) it bounds the quadratic $q(r,w)=r^2+2swr+s^2$ on $[r_L,r_U]$ using convexity, (2) it bounds $\ln(q(r,w)/r^2)$ using monotonicity of $\ln$ together with $q(r,w)\in[\underline q,\overline q]$ and $r^2\in[r_L^2,r_U^2]$, and (3) it bounds $r^p-q(r,w)^{p/2}$ using monotonicity of $x\mapsto x^{p/2}$ on $\Real_{\ge 0}$ and $r^p\in[r_L^p,r_U^p]$. Combining these bounds linearly yields an interval
$[\underline\Phi,\overline\Phi]$ satisfying
\begin{align*}
    \Phi(r,w)\in[\underline\Phi,\overline\Phi]\qquad \forall r\in[r_L,r_U].
\end{align*}

\smallskip
\emph{Correctness of \textsc{EncloseG}.}
Fix a $z$-bin $I=[z_L,z_U]$ and let $r\in[r_L,r_U]$ be its corresponding $r$-range with $r=(z/\beta)^{1/p}$. WLOG, assume now $0<r_L\leq r_U$. \Cref{alg:enclose-g} performs bisection on $w\in[-1,1]$ to enclose $w^*(r,y)$ for all $r\in[r_L,r_U]$. At each bisection step with midpoint $w^{\mathrm{mid}}$, it calls \textsc{EnclosePhi} to obtain an interval $[\underline\Phi,\overline\Phi]$ satisfying $\Phi(r,w^{\mathrm{mid}})\in[\underline\Phi,\overline\Phi]$. If $\underline\Phi>y$, then $\Phi(r,w^{\mathrm{mid}})>y$ for all $r$ in the bin; since $\Phi(r,\cdot)$ is non-increasing, this implies $w^*(r,y)\ge w^{\mathrm{mid}}$ for all $r$, so updating $w^{\mathrm{lo}}\leftarrow w^{\mathrm{mid}}$ is sound. Similarly, if $\overline\Phi<y$.  If neither condition holds (i.e., the interval straddles $y$), the routine returns $[0,1]$, which again is sound.

When the routine terminates without the fallback, it has produced $w^{\mathrm{lo}}\le w^{\mathrm{hi}}$ such that
\begin{align*}
    w^*(r,y)\in[w^{\mathrm{lo}},w^{\mathrm{hi}}],\quad \forall r\in[r_L,r_U].
\end{align*}
Applying the monotonicity of the CDF $F_W$ yields
\begin{align*}
    F_W(w^*(r,\varepsilon))\in\bigl[F_W(w^{\mathrm{lo}}),F_W(w^{\mathrm{hi}})\bigr], \quad, 1-F_W(w^*(r,-\varepsilon))\in \bigl[1-F_W(w^{\mathrm{hi}}),\,1-F_W(w^{\mathrm{lo}})\bigr],
\end{align*}
which is exactly what \Cref{alg:enclose-g} outputs. Composing with $r=(z/\beta)^{1/p}$, we justify that $\widetilde g_\pm(z)\in[\underline g_{\pm,i},\overline g_{\pm,i}]$ is true for all $i \in [m], z\in I_i$. 

\paragraph{Termination of the Adaptive Refinement Loop.}
We show that the refinement loop in \Cref{alg:sgg-delta-oracle} terminates, i.e., it eventually constructs a partition such that
\begin{align*}
    (\overline{A}-\underline{A})+e^{\varepsilon}(\overline{B}-\underline{B}) \leq \eta_{\mathrm{int}}.
\end{align*}
Over the compact interval $[0,z_{\max}]$, the functions $\widetilde g_\pm(z)$ take values in $[0,1]$ and are Borel-measurable (indeed, they are continuous except possibly at $z=0$, and bounded everywhere). Hence, for every error bound $\eta>0$ there exists a partition of $[0,z_{\max}]$ such that the oscillation of $\widetilde g_\pm$ on each bin is at most $\eta$, except possibly on bins that touch $0$. For bins touching $0$, even the trivial enclosure $[0,1]$ contributes at most their Gamma mass to the error accumulator; since $k>0$ the Gamma measure of $[0,\zeta]$ tends to $0$ as $\zeta\downarrow 0$, so by refining near $0$ we can make the total contribution of such bins arbitrarily small. We also note that the accumulated bracket width is a \emph{probability-weighted} sum of binwise oscillations, and hence does \emph{not} scale with the number of bins.

Formally, pick $\eta>0$ such that $(1+e^\varepsilon)\eta\leq \eta_{\mathrm{int}}/2$. Choose $\zeta\in(0,z_{\max})$ so that $(1+e^\varepsilon)\Pr[Z\in[0,\zeta]]\le \eta_{\mathrm{int}}/2$. On $[\zeta,z_{\max}]$, bounded (and continuous) functions are uniformly continuous, so there exists a finite partition of $[\zeta,z_{\max}]$ in which each bin has oscillation at most $\eta$ for both $\widetilde g_-$ and $\widetilde g_+$. Using the correctness property of \textsc{EncloseG} on those bins then yields
\begin{align*}
    \sum_{I\subseteq[\zeta,z_{\max}]}\pi_I\bigl[(\overline g_{-,I}-\underline g_{-,I}) +e^\varepsilon(\overline g_{+,I}-\underline g_{+,I})\bigr] \leq (1+e^\varepsilon)\delta \le \eta_{\mathrm{int}}/2.
\end{align*}
For bins contained in $[0,\zeta]$, even the trivial enclosure $[0,1]$ gives
\begin{align*}
    \sum_{I\subseteq[0,\zeta]}\pi_I\bigl[(\overline g_{-,I}-\underline g_{-,I}) +e^\varepsilon(\overline g_{+,I}-\underline g_{+,I})\bigr] \leq (1+e^\varepsilon)\Pr[Z\in[0,\zeta]] \leq \eta_{\mathrm{int}}/2.
\end{align*}

Adding the two contributions gives $(\overline{A}-\underline{A})+e^{\varepsilon}(\overline{B}-\underline{B}) \leq \eta_{\mathrm{int}}.$ Since \Cref{alg:sgg-delta-oracle} refines by repeatedly splitting the bin with largest contribution to $(\overline{A}-\underline{A})+e^{\varepsilon}(\overline{B}-\underline{B})$, its mesh size tends to $0$ and it must eventually reach a refinement at least as fine as the existence argument above; therefore it terminates.

\paragraph{Conclusion.}
Let $D =A-e^\varepsilon B$ and let
\begin{align*}
    \widehat D = (\overline A+\tau)-e^\varepsilon \underline B, \quad \widehat\delta = \max\{0,\widehat D\}, \quad \delta^*_\mu(\varepsilon)=\max\{0,D\}.
\end{align*}
We first show $\widehat D\ge D$ (hence $\widehat\delta\ge \delta^*_\mu(\varepsilon)$). This is because
\begin{align*}
    A = A_{\le z_{\max}} + (A-A_{\leq z_{\max}}) \leq \overline A + \tau, \quad B \geq B_{\leq z_{\max}} \geq \underline B,
\end{align*}
and therefore $D=A-e^\varepsilon B\leq (\overline A+\tau)-e^\varepsilon\underline B=\widehat D$.

Next, we upper bound $\widehat D-D$. Using the decompositions $A=A_{\leq z_{\max}}+A_{>z_{\max}}$ and $B=B_{\leq z_{\max}}+B_{>z_{\max}}$ with $0\leq A_{>z_{\max}},B_{>z_{\max}}\leq\tau$, we have 
\begin{align*}
    \widehat D - D ={}& (\overline A-A_{\le z_{\max}}) + (\tau-A_{>z_{\max}}) + e^\varepsilon(B_{\le z_{\max}}-\underline B) + e^\varepsilon B_{>z_{\max}}\\
    \leq{}& (\overline A-\underline A) + e^\varepsilon(\overline B-\underline B) + (1+e^\varepsilon)\tau\\
    \leq{}& \eta_{\mathrm{int}}+\eta_{\mathrm{tail}}=\eta.
\end{align*}
Finally, since $x\mapsto \max\{0,x\}$ is $1$-Lipschitz, $0\leq \widehat\delta-\delta^*_\mu(\varepsilon)\leq \widehat D-D\leq \eta,$ which is exactly \Cref{eq:goal}. So we conclude that  \Cref{alg:sgg-delta-oracle} is an $\eta$-tight upper-bound oracle, and such an oracle exists.
\end{proof}

\begin{algorithm}[t]
\caption{Parameterized SGG Mechanism $\Mech^{\alpha,p}_{\mathsf{SGG}}$}
\label{alg:param-sgg-mech}
\begin{algorithmic}[1]
\STATE \textbf{Input:} A vector-valued query $q(\DB)\in\Real^T$ with $\ell_2$ sensitivity $s>0$, privacy parameters $(\varepsilon,\delta)$, fixed $(\alpha,p)$ with $\alpha\in(-1,T-1]$, $p>0$, $\eta$-tight upper-bound oracle $\Oracle_{\mathsf{SGG}}$ (\Cref{def:sgg delta upper bound oracle}), tolerance $\tau>0$.

\algassign{\beta^{\mathsf{lo}}}{1}
\algwhile{$\Oracle_{\mathsf{SGG}}(\alpha,p,\beta^{\mathsf{lo}},\varepsilon,s)>\delta$}
         {$\beta^{\mathsf{lo}}\leftarrow \beta^{\mathsf{lo}}/2$}
\algassign{\beta^{\mathsf{hi}}}{\beta^{\mathsf{lo}}}
\algwhile{$\Oracle_{\mathsf{SGG}}(\alpha,p,\beta^{\mathsf{hi}},\varepsilon,s)\le \delta$}
         {$\beta^{\mathsf{hi}}\leftarrow 2\beta^{\mathsf{hi}}$}

\WHILE{$\beta^{\mathsf{hi}}/\beta^{\mathsf{lo}} > 1+\tau$}
    \STATE $\beta^{\mathsf{mid}}\leftarrow (\beta^{\mathsf{lo}}+\beta^{\mathsf{hi}})/2$.
    \algifelse{$\Oracle_{\mathsf{SGG}}(\alpha,p,\beta^{\mathsf{mid}},\varepsilon,s)\le \delta$}
          {$\beta^{\mathsf{lo}}\leftarrow \beta^{\mathsf{mid}}$}
          {$\beta^{\mathsf{hi}}\leftarrow \beta^{\mathsf{mid}}$}
\ENDWHILE
\STATE $\beta \leftarrow \beta^{\mathsf{lo}}$.

\STATE \textbf{Output:} $q(\DB)+\rX$, where $\rX \sim \sgg{\alpha}{\beta}{p}$.
\end{algorithmic}
\end{algorithm}

With \Cref{lemma: existence of tight upper-bound oracle} in hand, we show some supplementary statements needed for our mechhanism in~\Cref{alg:param-sgg-mech}.

\Cref{lem:sgg-scaling} states that, for fixed $(\alpha,p)$, changing the rate parameter $\beta$ simply rescales the SGG noise by a factor $\lambda=\beta^{-1/p}$. Equivalently, $\sgg{\alpha}{\beta}{p}$ is the distribution of $\lambda\,\rX_1$ for $\rX_1\sim\sgg{\alpha}{1}{p}$. 

\begin{lemma}
    \label{lem:sgg-scaling}
    Fix $T\ge 2$, $\alpha>-1$, $p>0$, and $\beta>0$, and define $\lambda = \beta^{-1/p}$. Let $R_1\sim \generalGamma{\alpha}{1}{p}$ and $U\sim\sphereD{T}$ be independent, and let $\rX_1 \triangleq R_1 U \sim \sgg{\alpha}{1}{p}$. Then
    \begin{align*}
    R \sim \generalGamma{\alpha}{\beta}{p} \quad\Longleftrightarrow\quad R \overset{d}{=} \lambda R_1,
    \end{align*}
    and consequently,
    \begin{align*}
    \rX \sim \sgg{\alpha}{\beta}{p} \quad\Longleftrightarrow\quad \rX \overset{d}{=} \lambda \rX_1.
    \end{align*}
\end{lemma}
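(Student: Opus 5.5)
The plan is a direct change-of-variables computation on the radial density, followed by lifting the radial statement to the full vector via the product structure $\rX=RU$.

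For the radial claim, take $R_1\sim\generalGamma{\alpha}{1}{p}$ with density $f_{\alpha,1,p}(r)=\frac{p}{\Gamma(\frac{\alpha+1}{p})}r^\alpha e^{-r^p}$. With $\lambda=\beta^{-1/p}>0$, the map $r\mapsto\lambda r$ is a smooth bijection of $(0,\infty)$. The density of $\lambda R_1$ is therefore
\begin{align*}
f_{\lambda R_1}(r)=\frac{1}{\lambda}f_{\alpha,1,p}\!\left(\frac{r}{\lambda}\right)=\frac{p}{\Gamma(\frac{\alpha+1}{p})}\lambda^{-(\alpha+1)}r^\alpha e^{-(r/\lambda)^p}.
\end{align*}
Substituting $\lambda^{-p}=\beta$ and $\lambda^{-(\alpha+1)}=\beta^{(\alpha+1)/p}$ turns this into exactly $f_{\alpha,\beta,p}(r)$ from \Cref{def: generalized gamma}. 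Hence $\lambda R_1\sim\generalGamma{\alpha}{\beta}{p}$. Since a distribution on $(0,\infty)$ is determined by its density, $R\sim\generalGamma{\alpha}{\beta}{p}$ holds iff $R\overset{d}{=}\lambda R_1$, which gives both directions.

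For the vector claim, \Cref{def:sgg} says that $\rX\sim\sgg{\alpha}{\beta}{p}$ means $\rX\overset{d}{=}RU$ with $R\sim\generalGamma{\alpha}{\beta}{p}$ independent of $U\sim\sphereD{T}$. By the first part, the joint law of $(R,U)$ equals the joint law of $(\lambda R_1,U)$, because both are product measures with identical marginals. Applying the measurable map $(r,u)\mapsto ru$ then gives $\rX\overset{d}{=}\lambda R_1U=\lambda\rX_1$.

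For the converse, suppose $\rX\overset{d}{=}\lambda\rX_1=(\lambda R_1)U$. Since $\lambda R_1$ is independent of $U$ and has law $\generalGamma{\alpha}{\beta}{p}$ by the first part, $\rX$ is SGG with parameters $(\alpha,\beta,p)$ by definition.

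There is no real obstacle here. The only points needing care are the exponent bookkeeping $\lambda^{-(\alpha+1)}=\beta^{(\alpha+1)/p}$ and noting that equality in law of independent pairs follows from equality of marginals.
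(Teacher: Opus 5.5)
Your proof is correct. The paper states \Cref{lem:sgg-scaling} without proof and uses it directly in the proof of \Cref{prop:sgg-delta-monotone-in-beta}, so there is no competing argument to compare against.

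Your change of variables $f_{\lambda R_1}(r)=\lambda^{-1}f_{\alpha,1,p}(r/\lambda)$, together with $\lambda^{-p}=\beta$ and $\lambda^{-(\alpha+1)}=\beta^{(\alpha+1)/p}$, gives the radial claim. Pushing the product law of $(\lambda R_1,U)$ through $(r,u)\mapsto ru$ then gives the vector claim. This is exactly the routine verification the authors leave implicit.

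The converse direction reads \Cref{def:sgg} as a statement about the law of $\rX$, i.e.\ $\rX\overset{d}{=}RU$. That is clearly the intended meaning, and under it your argument is complete.
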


\begin{proposition}[Monotonicity in the rate parameter $\beta$]
    \label{prop:sgg-delta-monotone-in-beta}
    Fix $T\ge 2$, $\alpha\in(-1,T-1]$, $p>0$, and $\varepsilon\ge 0$. For $\beta>0$, let $\rX_\beta\sim\sgg{\alpha}{\beta}{p}$ and let $\delta^*_\mu(\varepsilon;\beta)$ denote the optimal delta for the neighboring pair $(\rX_\beta,\rX_\beta+\mu)$ at level $\varepsilon$. Then, for any fixed shift $\mu\in\Real^T$, the mapping $\beta\mapsto \delta^*_\mu(\varepsilon;\beta)$ is non-decreasing. Equivalently, for $0<\beta_1\le \beta_2$,
    \begin{align*}
    \delta^*_\mu(\varepsilon;\beta_1)\le \delta^*_\mu(\varepsilon;\beta_2).
    \end{align*}
\end{proposition}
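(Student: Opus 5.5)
By \Cref{lem:sgg-scaling}, $\rX_\beta\overset{d}{=}\lambda\rX_1$ with $\lambda=\beta^{-1/p}$ and $\rX_1\sim\sgg{\alpha}{1}{p}$. The plan is to push the scaling off the noise and onto the shift, so that monotonicity in $\beta$ becomes monotonicity in the shift magnitude, which \Cref{lem:monotone-shift-sgg} already gives.

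The first step is to use invariance of the hockey-stick divergence under a common bijection. The hockey-stick divergence is defined as a supremum over measurable sets. Applying the invertible linear map $x\mapsto x/\lambda$ to both distributions maps measurable sets bijectively to measurable sets, so
\begin{align*}
\delta^*_\mu(\varepsilon;\beta)=\HS{\lambda\rX_1}{\lambda\rX_1+\mu}=\HS{\rX_1}{\rX_1+\mu/\lambda}=\HS{\rX_1}{\rX_1+\beta^{1/p}\mu}.
\end{align*}
In words, the optimal delta at rate $\beta$ and shift $\mu$ equals the optimal delta of the fixed noise $\rX_1$ at the rescaled shift $\beta^{1/p}\mu$.

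The second step handles the ordering. For $0<\beta_1\le\beta_2$ we have $\beta_1^{1/p}\le\beta_2^{1/p}$, because $p>0$. Hence
\begin{align*}
\twoNorm{\beta_1^{1/p}\mu}\le\twoNorm{\beta_2^{1/p}\mu}.
\end{align*}
\Cref{lem:monotone-shift-sgg} is applied to $\rX_1\sim\sgg{\alpha}{1}{p}$. Its hypotheses hold: $\alpha\in(-1,T-1]$ and the rate equals $1>0$. It gives
\begin{align*}
\delta^*_{\beta_1^{1/p}\mu}(\varepsilon)\le\delta^*_{\beta_2^{1/p}\mu}(\varepsilon),
\end{align*}
which is exactly the claim once translated back through the first step.

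The main subtlety is that \Cref{lem:monotone-shift-sgg} is stated for $\varepsilon>0$, while the proposition allows $\varepsilon\ge0$. Its proof, however, never uses $\varepsilon>0$: it only rewrites $\delta^*=1-\int\min\{f,cf(\cdot-\mu)\}$ with $c=e^\varepsilon\ge1$, and then uses the layer-cake decomposition and the fact that the overlap of two balls shrinks as their centers move apart. All of this holds verbatim for $c=1$. I would therefore note that the argument extends to $\varepsilon=0$, or rerun it directly at $c=1$. The case $\mu=0$ is trivial, since both sides vanish. No other obstacles are expected.
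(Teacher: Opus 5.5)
Your proposal is correct and follows the paper's proof essentially verbatim. It rescales via \Cref{lem:sgg-scaling} and the bijection $x\mapsto x/\lambda$ to reduce to the fixed noise $\rX_1$ at shift $\beta^{1/p}\mu$, then invokes \Cref{lem:monotone-shift-sgg}. Your remark that \Cref{lem:monotone-shift-sgg} is stated only for $\varepsilon>0$, while its proof goes through unchanged with $c=e^{0}=1$, is a valid point that the paper's proof passes over silently.
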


\begin{proof}
    Let $\lambda(\beta) = \beta^{-1/p}$. By \Cref{lem:sgg-scaling}, we may write $\rX_\beta \overset{d}{=} \lambda(\beta)\,\rX_1$ where $\rX_1\sim\sgg{\alpha}{1}{p}$. Consider the invertible map $\varphi_\lambda(x) =  x/\lambda$. Since hockey-stick divergence (and hence $\delta^*_\mu(\varepsilon)$) is invariant under bijective transformations, we have
    \begin{align*}
    \delta^*_\mu(\varepsilon;\beta)
    &= \delta^*\bigl(\varepsilon;\ \rX_\beta,\ \rX_\beta+\mu\bigr) \\
    &= \delta^*\bigl(\varepsilon;\ \varphi_{\lambda(\beta)}(\rX_\beta),\ \varphi_{\lambda(\beta)}(\rX_\beta+\mu)\bigr) \\
    &= \delta^*\bigl(\varepsilon;\ \rX_1,\ \rX_1+\mu/\lambda(\beta)\bigr)
    = \delta^*_{\mu/\lambda(\beta)}(\varepsilon;1).
    \end{align*}
    
    Now fix $0<\beta_1\le \beta_2$. Then $\lambda(\beta_1)\ge \lambda(\beta_2)$, hence
    \begin{align*}
        \twoNorm{\frac{\mu}{\lambda(\beta_1)}} \leq \twoNorm{\frac{\mu}{\lambda(\beta_2)}}.
    \end{align*}
    Applying the shift monotonicity result \Cref{lem:monotone-shift-sgg} to the base distribution
    $\rX_1\sim\sgg{\alpha}{1}{p}$ yields
    \begin{align*}
        \delta^*_{\mu/\lambda(\beta_1)}(\varepsilon;1) \leq \delta^*_{\mu/\lambda(\beta_2)}(\varepsilon;1).
    \end{align*}
    Substituting back proves $\delta^*_\mu(\varepsilon;\beta_1)\le \delta^*_\mu(\varepsilon;\beta_2)$.
\end{proof}

\renewcommand{\algorithmiccomment}[1]{// \textit{#1}} %
\begin{algorithm*}
\caption{Spherical Generalized Gamma Mechanism}
\label{alg:sgg-opt-err}
\begin{algorithmic}[1]
    \STATE \textbf{Input:} 
    \begin{itemize}[nosep]
        \item $T \in \positiveInteger$, the query dimension, $s > 0$, the query's $\ell_2$ sensitivity
        \item $\varepsilon > 0, \delta \in [0, 1]$, the desired privacy parameters
        \item $q(\DB) \in \Real^T$, query result 
        \item $\Oracle_{\mathsf{SGG}}$ which on input $T,\alpha,p,\beta,\varepsilon$ outputs an upper bound for $\delta$.
        \item $\errf: [0, T-1] \times \posReal \times \posReal \mapsto \posReal$, the error function for generalized gamma distribution
    \end{itemize}
    
    \smallskip
    \STATE \textbf{Parameters:} 
    \begin{itemize}[nosep]
        \item $\alpha \in [0, T-1], \beta > 0, p> 0$ the parameters of the generalized gamma distribution
    \end{itemize}
    
    \smallskip
    \STATE Set the noise budget upper bound $c^{\mathsf{upp}}$ such that the standard Gaussian mechanism that achieves $(\varepsilon, \delta)$-DP introduces error $c^{\mathsf{upp}}$. Also set $c^{\mathsf{mid}}\gets 1$.

    \algassign{\delta^*}{\infty}
    \STATE  \COMMENT{Perform a binary search over $c \in [1, c^{\mathsf{upp}}]$ to find the smallest feasible noise level}
    \WHILE{$\delta^*>\delta \vee (\abs{\delta^* - \delta} > \mathrm{atol} \wedge \abs{c^{\mathsf{low}} - c^{\mathsf{upp}}} > \mathrm{atol})$}
        \algifelse{$\delta^* \leq \delta$}{set $c^{\mathsf{upp}} \leftarrow c^{\mathsf{mid}}$}{set $c^{\mathsf{low}} \leftarrow c^{\mathsf{mid}}$}        
        \algassign{c^{\mathsf{mid}}}{(c^{\mathsf{low}}+c^{\mathsf{upp}})/2}
        \STATE Solve the 2-dimensional optimization problem 
        \vspace*{-2mm}
        \begin{align*}
            \alpha^*, p^* \leftarrow \argmin\limits_{\alpha, p} \Oracle_{\mathsf{SGG}}(T, \alpha, \beta, p, \varepsilon)
        \end{align*}
        \COMMENT{$\beta$ and $\beta^*$ below computed via~\Cref{alg:param-sgg-mech} on input $s, \varepsilon,\alpha,p$ and target $\hat{\delta}$, constrained to noise budget $c^{\mathsf{mid}}$}
        \vspace*{1mm}
        \algassign{\delta^*}{\Oracle_{\mathsf{SGG}}(T, \alpha^*, \beta^*, p^*, \varepsilon)}%

    \ENDWHILE

    \STATE \textbf{Output}: $q(\DB) + RU,$ where $R \sim \generalGamma{\alpha^*}{h(\alpha^*, p^*, c^{\mathsf{mid}})}{p^*}$ and $U \sim \sphereD{T}$.
\end{algorithmic}
\end{algorithm*}

\begin{theorem}
\label{thm:sgg-mech-privacy}
The mechanism $\Mech^{\alpha,p}_{\mathsf{SGG}}$ (\Cref{alg:param-sgg-mech}) is $(\varepsilon,\delta)$-differentially private.
\end{theorem}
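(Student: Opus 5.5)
The argument reduces to showing that the $\beta$ finally selected by \Cref{alg:param-sgg-mech} satisfies $\Oracle_{\mathsf{SGG}}(\alpha,p,\beta,\varepsilon,s)\le\delta$. The oracle property of \Cref{def:sgg delta upper bound oracle} then transfers this certificate to the true worst-case optimal delta.

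First I will track an invariant through the calibration loop: at every point after the first while-loop terminates, $\Oracle_{\mathsf{SGG}}(\alpha,p,\beta^{\mathsf{lo}},\varepsilon,s)\le\delta$.

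\begin{itemize}
\item The first loop exits only when this inequality holds for the current $\beta^{\mathsf{lo}}$.
\item The second loop modifies only $\beta^{\mathsf{hi}}$.
\item In the bisection loop, $\beta^{\mathsf{lo}}$ is updated to $\beta^{\mathsf{mid}}$ only when $\Oracle_{\mathsf{SGG}}(\alpha,p,\beta^{\mathsf{mid}},\varepsilon,s)\le\delta$.
\end{itemize}

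Hence the output $\beta=\beta^{\mathsf{lo}}$ carries the certificate $\widehat\delta\le\delta$. Since the oracle is deterministic and depends only on public parameters $(\alpha,p,\varepsilon,s,\delta,\tau)$ and never on $\DB$, the choice of $\beta$ is data-independent, so it is legitimate to treat $\beta$ as fixed.

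Termination must also be checked so that the mechanism is well defined.

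\begin{itemize}
\item \emph{First loop.} By \Cref{prop:sgg-delta-monotone-in-beta} together with the scaling in \Cref{lem:sgg-scaling}, $\delta^*_\mu(\varepsilon;\beta)=\delta^*_{\mu\beta^{1/p}}(\varepsilon;1)$. This tends to $0$ as $\beta\to0$, because the shift $\mu\beta^{1/p}$ vanishes and the hockey-stick divergence of a small shift of an absolutely continuous law goes to $0$. Provided the oracle slack satisfies $\eta<\delta$, the oracle value eventually drops below $\delta$ and the loop halts.
\item \emph{Second loop.} As $\beta\to\infty$ the shift becomes huge and $\delta^*$ tends to $1$, so the oracle value exceeds $\delta<1$ and the loop halts.
\item \emph{Bisection loop.} The ratio $\beta^{\mathsf{hi}}/\beta^{\mathsf{lo}}$ shrinks geometrically, so the loop terminates.
\end{itemize}

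Next comes privacy for each neighboring pair. Fix $\DB\simeq\DB'$ and let $\mu=q(\DB')-q(\DB)$, so $\twoNorm{\mu}\le s$. The two output laws are $q(\DB)+\rX$ and $q(\DB)+\rX+\mu$. Translating by $-q(\DB)$ is a bijection, so the hockey-stick divergence equals $\delta^*_\mu(\varepsilon)$ for the pair $(\rX,\rX+\mu)$, in either order. The reverse order is handled by spherical symmetry: $\rX\stackrel{d}{=}-\rX$, so the pair becomes $(\rX,\rX-\mu)$ after translation, and $\twoNorm{-\mu}\le s$. The oracle definition, which applies to every $\mu$ with $\twoNorm{\mu}\le s$, gives $\delta^*_\mu(\varepsilon)\le\Oracle_{\mathsf{SGG}}(\alpha,p,\beta,\varepsilon,s)\le\delta$.

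There is a subtlety here, since the oracle is called with argument $s$ rather than $\twoNorm{\mu}$. I would resolve it by invoking \Cref{lem:monotone-shift-sgg}: $\delta^*_\mu(\varepsilon)\le\delta^*_{\mu'}(\varepsilon)$ for any $\mu'$ with $\twoNorm{\mu'}=s$, and the oracle upper-bounds the latter. Taking the supremum over neighboring pairs gives $\delta_{\Mech}(\varepsilon)\le\delta$, which by \Cref{equ:optimal delta def} is $(\varepsilon,\delta)$-DP.

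The main obstacle is termination of the first bracketing loop. It needs the limit $\delta^*\to0$ as $\beta\to0$ combined with the slack condition $\eta<\delta$, and the theorem statement leaves that condition implicit. I would handle it by stating it as an assumption on the oracle, or by remarking that even without it the algorithm never outputs an uncertified $\beta$: it either halts with a certified $\beta$ or does not halt, so it never releases a non-private output.
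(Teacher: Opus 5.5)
Your proposal is correct and follows essentially the same route as the paper's proof: translation invariance reduces each neighboring pair to $\delta^*_\mu(\varepsilon;\beta)$, \Cref{lem:monotone-shift-sgg} passes from $\twoNorm{\mu}\le s$ to a shift of norm exactly $s$, the oracle upper-bounds that quantity, and the bracket $\beta^{\mathsf{lo}}$ is kept feasible, $\Oracle_{\mathsf{SGG}}(\alpha,p,\beta^{\mathsf{lo}},\varepsilon,s)\le\delta$, throughout the search. Your extra points are sound refinements rather than a different argument: the data-independence of $\beta$, termination under $\eta<\delta$ and $\delta<1$, and your observation that this feasibility invariant holds by construction without the monotonicity in $\beta$ that the paper invokes.
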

\begin{proof}
\label{proof for thm:sgg-mech-privacy}
Fix neighboring databases $\DB\sim \DB'$. Let $\mu = q(\DB)-q(\DB')\in\Real^T,$ and we know $\twoNorm{\mu}\le s$ by the assumption on $q(\cdot)$.
The mechanism in \Cref{alg:param-sgg-mech} outputs
\begin{align*}
    \Mech^{\alpha,p}_{\mathsf{SGG}}(\DB)=q(\DB)+\rX_\beta, \qquad \Mech^{\alpha,p}_{\mathsf{SGG}}(\DB')=q(\DB')+\rX_\beta,
\end{align*}
where $\rX_\beta\sim \sgg{\alpha}{\beta}{p}$ and $\beta$ is the value returned by the bracketing-and-bisection procedure.

By translation invariance, distinguishing $\Mech^{\alpha,p}_{\mathsf{SGG}}(\DB)$ from $\Mech^{\alpha,p}_{\mathsf{SGG}}(\DB')$ is equivalent to distinguishing $\rX_\beta$ from $\rX_\beta+\mu$. Thus, the optimal $\delta$ at level $\varepsilon$ for the pair of the mechanism outputs equals $\delta^*_\mu(\varepsilon;\beta)$, the optimal $\delta(\varepsilon)$ between $\rX_\beta$ and $\rX_\beta+\mu$. 

By the shift monotonicity result \Cref{lem:monotone-shift-sgg}, the map
$\mu\mapsto \delta^*_\mu(\varepsilon;\beta)$ is non-decreasing in $\twoNorm{\mu}$, hence for any $\mu$ with $\twoNorm{\mu}\le s$ and any fixed $\beta$,
\begin{align*}
    \delta^*_\mu(\varepsilon;\beta)\le \delta^*_{\mu_s}(\varepsilon;\beta) \quad\text{for any }\mu_s\in\Real^T\text{ with }\twoNorm{\mu_s}=s.
\end{align*}

Applying the oracle guarantee in \Cref{def:sgg delta upper bound oracle} to $\mu_s$ gives
\begin{align*}
    \delta^*_{\mu_s}(\varepsilon;\beta) \leq \Oracle_{\mathsf{SGG}}(\alpha,p,\beta,\varepsilon,\twoNorm{\mu_s}) = \Oracle_{\mathsf{SGG}}(\alpha,p,\beta,\varepsilon,s).
\end{align*}

\smallskip
\noindent\textbf{The chosen $\beta$ is feasible for the target $(\varepsilon,\delta)$.} By construction, the algorithm maintains a feasible lower bracket $\beta^{\mathsf{lo}}$ with $\Oracle_{\mathsf{SGG}}(\alpha,p,\beta^{\mathsf{lo}},\varepsilon,s)\le\delta$ and an infeasible upper bracket $\beta^{\mathsf{hi}}$ with $\Oracle_{\mathsf{SGG}}(\alpha,p,\beta^{\mathsf{hi}},\varepsilon,s)>\delta$. The update rule in the bisection step relies on the monotonicity in $\beta$ (\Cref{prop:sgg-delta-monotone-in-beta}) to preserve feasibility/infeasibility of the brackets. At termination the algorithm sets $\beta\leftarrow \beta^{\mathsf{lo}}$, and therefore
\begin{align*}
\Oracle_{\mathsf{SGG}}(\alpha,p,\beta,\varepsilon,s)\leq \delta.
\end{align*}

Combining the last two inequalities, we conclude that for every neighboring $\DB\sim\DB'$, the output distributions satisfy the
$(\varepsilon,\delta)$-DP inequality. Formally, 
\begin{align*}
    \delta^*_\mu(\varepsilon;\beta)\leq \Oracle_{\mathsf{SGG}}(\alpha,p,\beta,\varepsilon,s)\leq \delta,
\end{align*}
i.e., $\Mech^{\alpha,p}_{\mathsf{SGG}}$ is $(\varepsilon,\delta)$-differentially private.
\end{proof}

\subsection{Finding Optimal Parameters of the SGG Mechanism}

Finally, based on~\Cref{alg:param-sgg-mech} that finds best $\beta$ for fixed $(\varepsilon,\delta,\alpha,p)$, we provide~\Cref{alg:sgg-opt-err}, which finds the parameters $(\alpha^*,\beta^*,p^*)$ with minimal MSE given target privacy $(\varepsilon,\delta)$.
\subsection{Finding Settings where SGG Dominates}\label{sec:find_adv}

Here we present Algorithm~\ref{alg:opt-adv}, which for a given $T,s,\varepsilon$, finds the $\delta$ in which the SGG has the biggest advantage over both the Gaussian and $\ell_2$ mechanisms.

\begin{algorithm}
\caption{Finding Optimal Advantage}
\label{alg:opt-adv}
\begin{algorithmic}[1]
    \STATE \textbf{Input:} $T \in \positiveInteger$, the query dimension, $s > 0$, the query's $\ell_2$ sensitivity, and $\varepsilon > 0$, the desired privacy parameter.
    \STATE Set $\delta^{\mathsf{low}}\gets 0, \delta^{\mathsf{upp}}\gets 0.1$.
    \STATE  \COMMENT{Perform a binary search over $\delta \in [\delta^{\mathsf{low}}, \delta^{\mathsf{upp}}]$ to find the largest advantage}
    \WHILE{$\delta^{\mathsf{upp}} - \delta^{\mathsf{low}} > \mathrm{atol}$}
        \algassign{\delta^\mathsf{mid}}{(\delta^{\mathsf{low}}+\delta^{\mathsf{upp}})/2}
        \STATE Use Algorithm~\ref{alg:sgg-opt-err} to find $(\alpha^*,\beta^*,p^*)$ achieving minimal MSE $c^*$ of the SGG for $T$-dimensional query with sensitivity $s$ and privacy parameters $(\varepsilon,\delta)$.
        \STATE Compute MSE $c_{\ell_2}$ and $c_G$ of $\ell_2$ and Gaussian mechanisms, respectively for the above setting.
        \algifelse{$(c_G-c^*)/c_G < (c_{\ell_2}-c^*)/c_{\ell_2}$}{set $\delta^{\mathsf{upp}}\gets \delta^{\mathsf{mid}}$}{set $\delta^{\mathsf{low}}\gets \delta^{\mathsf{mid}}$}

    \ENDWHILE

    \STATE \textbf{Output}: $(\alpha^*,\beta^*,p^*), c^*$.
\end{algorithmic}
\end{algorithm}
\section{Privacy Accounting of SGG Mechanisms}
\label{app:accounting}

In this section, we study tight composed privacy for SGG mechanisms under multiple invocations.

Our approach follows the FFT-based accountant developed by~\cite{GopiLW24}. Their work casts privacy accounting in terms of the \emph{privacy-loss random variable} (PRV). The central benefit is that privacy losses add: the PRV of a $k$-fold composition is the sum of $k$ single-step PRV. This reduces tight $(\varepsilon,\delta)$ accounting to (i) characterizing the distribution of the single-step PRV and (ii) computing the distribution of its $k$-fold sum via repeated convolution. These two observations allows computing tight composed privacy efficiently by using FFTs after discretizations of single-step PRV. We follow~\cite{GopiLW24} along its theoretical analysis, and tailor the missing SGG-specific component: an efficient discretization of the single-step PRV using the one-dimensional integral reduction from~\Cref{sec:SGG:subsec:privacy}.

To discretize the single-step PRV $\rZ$ for an SGG mechanism, we use the standard approach of placing a grid on the privacy-loss axis and computing bin masses from CDF differences at grid edges. Thus, efficient discretization amounts to evaluating the PRV CDF $\pr{\rZ \leq z}$ at many grid points. \Cref{lem:sgg-prv-cdf-1d} makes this feasible: it expresses $\pr{\rZ \leq z}$ as a one-dimensional raidal integral via the same technique used in~\Cref{sec:SGG:subsec:privacy}. 

\begin{lemma} (Proof is below)
\label{lem:sgg-prv-cdf-1d}
    Adopt the setup and notation of~\Cref{lem:sgg-delta-1d-integral}. Let $\rX\sim\sgg{\alpha}{\beta}{p}$ and define the PRV for the pair $(\rX,\rX+\mu)$ by
    \begin{align*}
        \rZ \defin \log\frac{f_X(\rX)}{f_X(\rX+\mu)} .
    \end{align*}
    Then, for every $z\in\Real$,
    \begin{align*}
        \pr{\rZ \leq z} = \int_{0}^{\infty} f_R(r)F_W\bigl(w^\star(r,-z)\bigr)\mathrm{d}r,
    \end{align*}
    where $f_R$ and $F_W$ are as in~\Cref{lem:sgg-delta-1d-integral}, and $w^\star(r,\cdot)$ denotes the corresponding threshold map.
\end{lemma}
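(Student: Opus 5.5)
The plan is to reuse the radial--directional reduction from the proof of \Cref{lem:sgg-delta-1d-integral}: write $\rZ$ as a deterministic function of the independent pair $(R,W)$, use monotonicity in $W$ to turn $\{\rZ\le z\}$ into a one-sided threshold event on $W$ for each fixed $R=r$, and then integrate over $R$.

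\textbf{Step 1: express $\rZ$ through $(R,W)$.} Write $f_X(x)=g(\twoNorm{x}^2)$ with $g(t)\propto t^{\frac{\alpha+1-T}{2}}e^{-\beta t^{p/2}}$, as in \Cref{tab:radial-to-spherical-generators}. With $\rX=RU$ and $W=\langle\mu,U\rangle/\twoNorm{\mu}$, we have $\twoNorm{\rX+\mu}^2=R^2+2sRW+s^2$. Hence
\begin{align*}
\rZ=\log\frac{g(R^2)}{g(R^2+2sRW+s^2)}=-\ell(R,W),
\end{align*}
where $\ell$ is exactly the function from the proof of \Cref{lem:sgg-delta-1d-integral}, i.e.\ $\rZ=-L^{\mathsf{neg}}(\rX)$. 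The normalizing constants cancel in the ratio, so no care is needed with them.

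\textbf{Step 2: convert the event into a threshold on $W$.} The equivalence is $\{\rZ\le z\}=\{\ell(R,W)\ge -z\}$. For $\alpha\le T-1$, the proof of \Cref{lem:sgg-delta-1d-integral} already shows $\partial\ell/\partial w<0$, so $w\mapsto\ell(r,w)$ is strictly decreasing on $[-1,1]$ for each $r>0$. It follows that, conditional on $R=r$, $\{\ell(r,W)\ge -z\}=\{W\le w^\star(r,-z)\}$.

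This step is the main (mild) obstacle: $w^\star(r,y)$ need not solve $\ell(r,w)=y$ inside $[-1,1]$ when $y$ lies outside the range $[\ell(r,1),\ell(r,-1)]$. I would adopt the convention already used in the oracle proof, namely $w^\star(r,y)=\sup\{w\in[-1,1]:\ell(r,w)\ge y\}$ with $\sup\varnothing=-1$. This clamps the threshold to $\pm1$. Then:
\begin{itemize}
\item if $y>\ell(r,-1)$, the event is empty and $F_W(-1)=0$;
\item if $y\le \ell(r,1)$, the event is everything and $F_W(1)=1$.
\end{itemize}
Since $W$ has a density (\Cref{prop: pdf of costheta}), boundary ties have probability zero. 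So $\Pr[\ell(r,W)\ge y]=F_W(w^\star(r,y))$ holds in every case.

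\textbf{Step 3: integrate over $R$.} By independence of $R$ and $U$ (hence of $R$ and $W$), condition on $R$ and apply Fubini/tower property:
\begin{align*}
\pr{\rZ\le z}=\int_0^\infty f_R(r)\,\pr{W\le w^\star(r,-z)}\,\mathrm{d}r=\int_0^\infty f_R(r)\,F_W\bigl(w^\star(r,-z)\bigr)\,\mathrm{d}r.
\end{align*}
Measurability of $r\mapsto w^\star(r,-z)$ follows from continuity of $\ell$ together with the sup definition. The integrand lies in $[0,1]$, so the integral is well defined. This gives the claimed formula for every $z\in\Real$.
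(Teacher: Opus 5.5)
Your proposal is correct and follows essentially the same route as the paper's proof: write $Z=-\ell(R,W)$, use strict monotonicity of $w\mapsto\ell(r,w)$ to turn $\{Z\le z\}$ into $\{W\le w^\star(r,-z)\}$ given $R=r$, and integrate over $R$ using independence. Your clamped convention $w^\star(r,y)=\sup\{w\in[-1,1]:\ell(r,w)\ge y\}$ is a small improvement. The paper simply asserts that a solution of $\ell(r,w)=y$ always exists in $[-1,1]$, which fails when $y$ lies outside $[\ell(r,1),\ell(r,-1)]$.
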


The next step is to increase the efficiency of evaluations of $\pr{\rZ \leq z}$ over the discretization grid. The efficiency tricks exploit the independence between random variable $R$ and the monotonicity of $w\mapsto \ell(r,w)$. First, we approximate the one-dimensional radial ($r$) integral by a $K$-node Gauss--Laguerre quadrature after the change of variables $t=\beta r^p$, leading fixed radii $\{r_k\}_{k=1}^K$ and weights $\{\omega_k\}_{k=1}^K$ such that 
\begin{align}\label{equ:Gauss-Laguerre}
    \pr{\rZ \leq z} \approx \sum_{k=1}^K \omega_k\,F_W\!\bigl(w^*(r_k,-z)\bigr).
\end{align}
Second, for each quadrature node $r_k$ we precompute a dense lookup table of the strictly decreasing map $w\mapsto \ell(r_k,w)$ on a grid over $w\in[-1,1]$; thereafter, each threshold $w^*(r_k,-z)$ is computed as a table lookup followed by linear interpolation. Third, we evaluate $F_W(\cdot)$ in vectorized form and complete each CDF query with a single dot product against the cached weights. With these caches, computing $\pr{\rZ \leq z}$ costs essentially $O(K)$ floating-point operations per $z$. We also (re)use parallel workers to amortize overhead in practice. 

\begin{algorithm}[t]
\caption{Privacy Accounting for $\Mech^{\alpha,p}_{\mathsf{SGG}}$ under $k$ invocations}
\label{alg:sgg-fft-accountant}
\begin{algorithmic}[1]
\STATE \textbf{Input:} Dimension $T$, sensitivity $s$, SGG parameters $(\alpha,\beta,p)$, composition count $k$, target $\varepsilon$
\STATE \textbf{Discretization parameters:} truncation $L>0$, step size $h>0$, odd grid size $M=1+\lceil 2L/h\rceil$
\STATE \textbf{Numerical parameters:} radial quadrature size $K$, angular grid size $N_w$, arithmetic precision $\mathrm{prec}$

\item[] // Discretize the single-step PRV
\STATE Define grid centers $z_i\leftarrow -L + i h$ for $i=0,1,\dots,M-1$ and bin edges $b_i\gets z_i-\tfrac{h}{2}$.
\STATE For each edge $b_i$, approximate $\pr{\rZ \leq b_i}$ using a $K$-node Gauss--Laguerre quadrature (cf.~\Cref{equ:Gauss-Laguerre}).
\STATE Set the pmf $\pi_i \leftarrow \pr{\rZ \leq b_{i+1}} - \pr{\rZ \leq b_i}$

\item[] // Compose by FFT convolution
\STATE Compute the $k$-fold convolution $\pi^{(*k)}$ using FFTs, cropping back to the support $[-L,L]$ after each multiplication (per~\cite{GopiLW24})

\item[] // Output the composed privacy guarantee
\STATE Compute $\delta_k(\varepsilon) \leftarrow \sum_{i=0}^{M-1} \pi^{(*k)}_i \cdot \bigl(1-e^{\varepsilon-z_i}\bigr)_+$
\STATE Return $(\varepsilon, \delta_k(\varepsilon))$
\end{algorithmic}
\end{algorithm}

\Cref{alg:sgg-fft-accountant} describes the full SGG-tailored accounting algorithm. We also note that as the $\ell_2$ mechanism is a special case of SGG (e.g., $(\alpha,p)=(T-1,1)$ with an appropriate scale), this yields a practical tight accountant for its composed approximate-DP guarantees as well; more broadly, the same pipeline applies to any $(\alpha,\beta,p)$ and to mixed compositions by convolving the corresponding discretized PRVs.

\paragraph{Experimental Evaluation.}
\begin{figure}[t]
    \centering
    \includegraphics[width=0.6\textwidth]{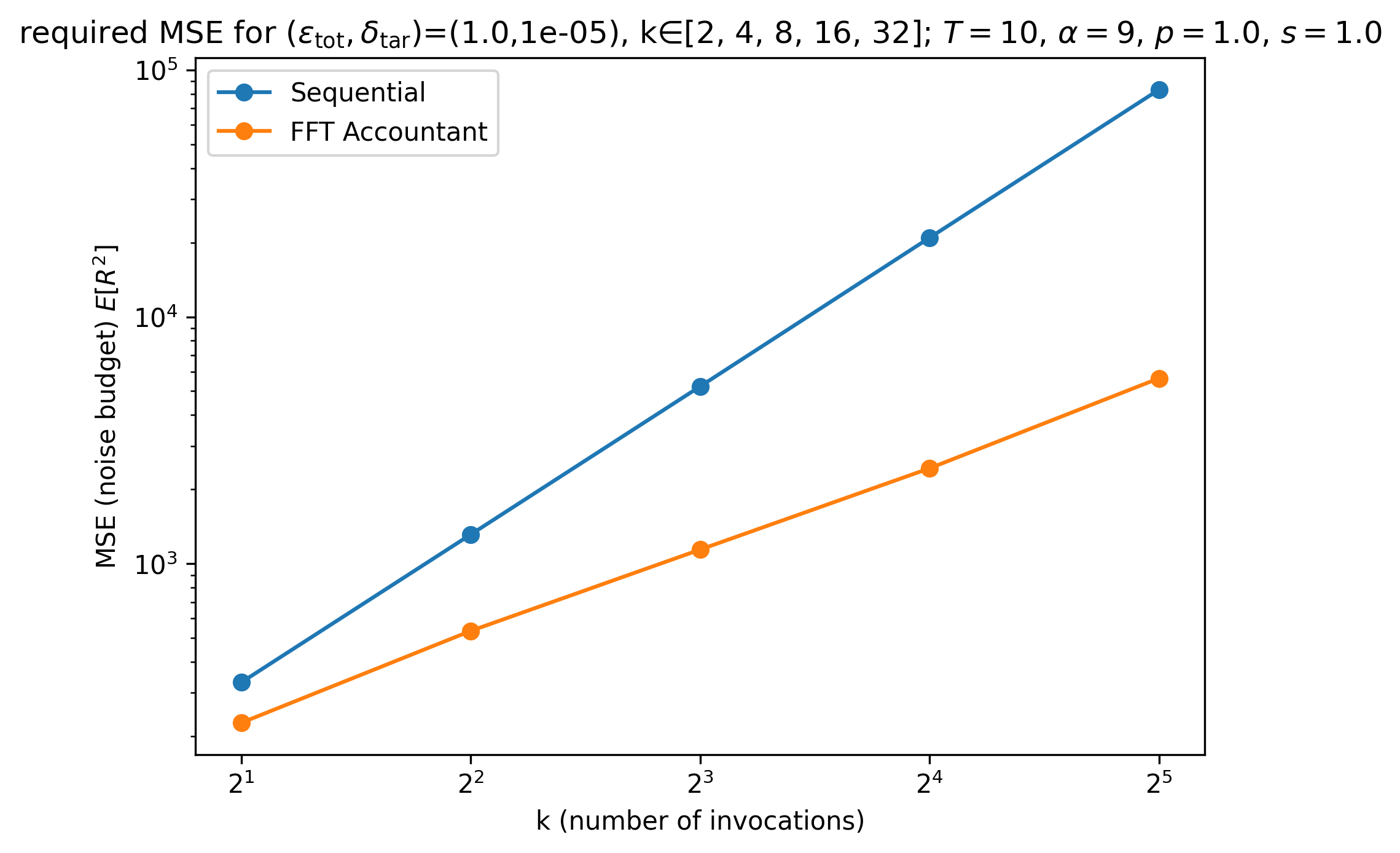}
    \caption{MSE Required for privacy target $(\varepsilon_{\mathrm{tot}},\delta_{\mathrm{tar}})=(1,10^{-5})$ under $k$ invocations: sequential composition versus~\Cref{alg:sgg-fft-accountant}.}
    \label{fig:mse-accounting-comparison}
\end{figure}

\Cref{fig:mse-accounting-comparison} compares the noise level (measured by the mean-squared error (MSE) of the SGG mechanisms) required to meet a fixed composed privacy target under two accounting methods: (i) the baseline \emph{sequential composition} bound and (ii) our SGG-tailored \emph{FFT accountant} in~\Cref{alg:sgg-fft-accountant}. Concretely, we fix a total privacy target $(\varepsilon_{\mathrm{tot}},\delta_{\mathrm{tar}})=(1,10^{-5})$ and consider $k\in\{2,4,8,16,32\}$ invocations of the same SGG mechanism $\Mech^{\alpha,p}_{\mathsf{SGG}}$ in dimension $T=10$ (with fixed shape parameter $\alpha=9$ and $p=1$). For each $k$ and for each accountant, we compute the \emph{minimal} per-invocation MSE such that the $k$-fold composition satisfies
\begin{align*}
    \delta_k(\varepsilon_{\mathrm{tot}})\le \delta_{\mathrm{tar}}.
\end{align*}

For the sequential baseline, we allocate a per-step budget $(\varepsilon_{\mathrm{tot}}/k,\delta_{\mathrm{tar}}/k)$ and solve for the smallest noise meeting this single-step guarantee (cf.~\Cref{alg:param-sgg-mech}); for the FFT accountant, we instead solve for the smallest noise such that the computed composed optimal delta from~\Cref{alg:sgg-fft-accountant} meeting $\delta_{\mathrm{tar}}$ at the privacy level $\varepsilon_{\mathrm{tot}}$. 

\Cref{fig:mse-accounting-comparison} shows a clear and growing gap as $k$ increases between the the baseline accountant and our SGG-tailored FFT accountant. It illustrates the practical impact of tight accounting

\begin{proof}[Proof of~\Cref{lem:sgg-prv-cdf-1d}]
\label{proof for lem:sgg-prv-cdf-1d}
Write $\rX = RU$ with $R\sim \generalGamma{\alpha}{\beta}{p}$ independent of $U\sim \sphereD{T}$, and let
\begin{align*}
    W = \frac{\innerProd{\mu}{U}}{\twoNorm{\mu}}\in[-1,1].
\end{align*}
By the spherical form of the SGG density (see the derivation in the proof of~\Cref{lem:sgg-delta-1d-integral}), the log-density ratio ($-\rZ$)
\begin{align*}
    L^{\mathsf{neg}}(\rX) = \log\frac{f_X(\rX+\mu)}{f_X(\rX)}
\end{align*}
can be written as a deterministic function of $(R,W)$:
\begin{align*}
    L^{\mathsf{neg}}(\rX)=\ell(R,W),
\end{align*}
where for each $r>0$ the map $w\mapsto \ell(r,w)$ is strictly decreasing on $[-1,1]$ (again, proved in~\Cref{proof for lem:sgg-delta-1d-integral}). Hence, for every $y\in\Real$ and $r>0$, there is a unique threshold $w^*(r,y)\in[-1,1]$ such that $\ell(r,w^*(r,y))=y$, and
\begin{align*}
    \{\ell(r,W)\geq y\} \Longleftrightarrow \{W\leq w^*(r,y)\}.
\end{align*}

Now note that the PRV in this lemma is the negative of $L^{\mathsf{neg}}$ ($\rZ = -L^{\mathsf{neg}}(\rX)= -\ell(R,W)$). Therefore, conditioning on $R=r$ and using the monotonicity property,
\begin{align*}
    \pr{\rZ \leq z \mid R=r} = \pr{\ell(r,W)\geq -z} = \pr{W\leq w^*(r,-z)} = F_W\bigl(w^*(r,-z)\bigr).
\end{align*}

Finally, since $R$ is independent of $W$, integrating over $R$ yields
\begin{align*}
    \pr{\rZ \leq z} = \int_0^\infty f_R(r) \pr{\rZ \leq z \mid R=r} \mathrm{d}r = \int_{0}^{\infty} f_R(r)F_W\bigl(w^*(r,-z)\bigr)\mathrm{d}r,
\end{align*}
which is exactly the claim. 
\end{proof}

\section{Error in Proof of~\citep[Theorem 5]{USENIX:JiLi24}}\label{sec:jilibug}
Here we identify an error in the proof of Theorem 5 in~\cite{USENIX:JiLi24}.
The proof attemtps to analyze the Privacy Loss Random Variable (PRV) of their (spherical) R1SMG mechanism.
One important fact about the PRV is that it is defined over the randomness of \emph{only} the random variable $\Mech(q(G))$, and not the randomness of the random variable $\Mech(q(G'))$ (where $G'$ is an allowed neighbor of $G$).
Indeed, consider random variable $Y\defin{} \Mech(q(G))$; the PRV is the function
$$L(Y)\defin{} \ln\left(\frac{f_{\Mech(q(G))}(Y)}{f_{\Mech(q(G'))}(Y)}\right),$$
where $f_{\Mech(q(G))}$ is the density of $\Mech(q(G))$ and likewise for $f_{\Mech(q(G'))}$.
However, the proof of~\cite{USENIX:JiLi24} defines the PRV $L$ as a function of some unspecified vector $s$ and analyzes the PRV over the randomness of both $\Mech(q(G))$ \emph{and} $\Mech(q(G'))$.
Indeed, they bound the distribution of the angle between the uniformly random directions of the noises used in $\Mech(q(G))$ and $\Mech(q(G'))$, and conclude that this angle must be close to $\pi/2$.
However, the more appropriate distribution to analyze would be the angle between the noise used in $\Mech(q(G))$ and the difference vector $\mu\defin{} q(G)-q(G')$.
By similar analysis, it is actually \emph{this} angle that is close to $\pi/2$.
This is much different than the angle between the directions in which the noises in $\Mech(q(G))$ and $\Mech(q(G'))$ must point to hit $Y$, which the authors seem to attempt to study when considering the ratios of the densities $f_{\Mech(q(G))}$ and $f_{\Mech(q(G'))}$ on $Y$;
this angle actually can be far from $\pi/2$ since the direction of the noise from $q(G')$ to hit $Y$ will be quite different than the direction of the difference vector $\mu$.

\subsection{Numerical contradiction of the claimed R1SMG calibration}

According to~\citep[Theorem~5]{USENIX:JiLi24}, the R1SMG mechanism is $(\varepsilon,\delta)$-DP if its corresponding rank-$1$ noise parameter $\sigma_\star$ satisfies
\begin{align*}
\psi(T,\delta) ={}& \left( \frac{\delta_{\mathrm{claimed}}\,\Gamma\!\left(\frac{T-1}{2}\right)} {\sqrt{\pi}\,\Gamma\!\left(\frac{T}{2}\right)} \right)^{\frac{2}{T-2}},\\
\sigma_\star(\varepsilon) \geq{}& \frac{2s^2}{\varepsilon\,\psi(T,\delta_{\mathrm{claimed}})}.
\end{align*}

\begin{table}[t]
\centering
\begin{small}
\begin{tabular}{r r r r r}
\toprule
$\varepsilon$ & $\delta_{\mathrm{claimed}}$ & $\psi$ & $\sigma_\star$ & $\delta_{\mathrm{true}}(\varepsilon)$ \\
\midrule
0.1 & $10^{-5}$ & 0.798721 & 25.040031 & 0.813284 \\
1.0 & $10^{-5}$ & 0.798721 &  2.504003 & 0.983594 \\
2.0 & $10^{-5}$ & 0.798721 &  1.252002 & 0.995020 \\
4.0 & $10^{-5}$ & 0.798721 &  0.626001 & 0.998804 \\
8.0 & $10^{-5}$ & 0.798721 &  0.313000 & 0.999755 \\
\bottomrule
\end{tabular}
\end{small}
\caption{Testing~\citep[Theorem~5]{USENIX:JiLi24} calibration for R1SMG at $T=128$ and $s=1$ with $\delta_{\mathrm{claimed}}=10^{-5}$. We compute $\psi$ and the prescribed $\sigma_\star$ from the theorem, then evaluate the resulting privacy $\delta_{\mathrm{true}}(\varepsilon)$ via~\Cref{lem:sgg-delta-1d-integral}. The computed $\delta_{\mathrm{true}}(\varepsilon)$ is vastly larger than $10^{-5}$, contradicting the claimed guarantee.}
\label{tab:r1smg-true-delta}
\end{table}

We test this claim in dimension $T=128$ with sensitivity $s=1$ at target $\delta=10^{-5}$, for $\varepsilon\in\{0.1,1,2,4,8\}$. For each $\varepsilon$, we instantiate the R1SMG mechanism using the \emph{minimal} value $\sigma_\star(\varepsilon)=\frac{2s^2}{\varepsilon\,\psi(T,\delta)}$ defined above, and then compute the instantiated mechanism's \emph{actual} optimal delta $\delta_{\mathrm{true}}(\varepsilon)$ via~\Cref{lem:sgg-delta-1d-integral}. This is because the R1SMG noise is a member of the SGG noise family, so we can use \Cref{lem:sgg-delta-1d-integral} to analyze its privacy. Concretely, the R1SMG noise is rank-$1$ Gaussian along a random direction; equivalently, it is spherical with half-normal radius $R=\sqrt{\sigma_\star}|Z|$ for $Z\sim \multiNormal{0}{1}$. In SGG's parameterization, this corresponds to an SGG mechanism with parameters
\begin{align*}
\alpha=0,\qquad p=2,\qquad \beta=\frac{1}{2\sigma_\star}.
\end{align*}

Table~\ref{tab:r1smg-true-delta} show a dramatic violation: across $\varepsilon\in\{0.1,1,2,4,8\}$, the computed $\delta_{\mathrm{true}}(\varepsilon)$ ranges from $0.813$ to $0.9998$, vastly larger than the claimed  $\delta=10^{-5}$.

\end{document}